\documentclass[prx,floatfix,superscriptaddress,nofootinbib,notitlepage]{revtex4-2}
\pdfoutput=1
\usepackage{graphicx,xcolor}
\usepackage{amsthm,amsmath,bm,dsfont,amssymb,mathtools}
\usepackage{physics}
\usepackage[normalem]{ulem}
\usepackage[colorlinks]{hyperref}
\hypersetup{
	pdfstartview={FitH},
	pdfnewwindow=true,
	colorlinks=true,
	linkcolor=blue,
	citecolor=blue,
	filecolor=blue,
	urlcolor=blue}
\usepackage[capitalise]{cleveref} 
\usepackage{booktabs,longtable,ltablex} 
\keepXColumns
\usepackage{qcircuit}

\numberwithin{equation}{section}

\numberwithin{equation}{section}
\theoremstyle{plain}
\newtheorem{theorem}{Theorem}[section]
\newtheorem{lemma}[theorem]{Lemma}
\newtheorem{proposition}[theorem]{Proposition}

\newtheorem{prob}[theorem]{Problem}
\newtheorem{exmp}[theorem]{Example}
\theoremstyle{definition}
\newtheorem{remark}[theorem]{Remark}
\newtheorem{definition}[theorem]{Definition}

\newcommand{\MQ}{\affiliation{%
School of Mathematical and Physical Sciences, Macquarie University, NSW  2109, Australia} }

\newcommand{\USYDPHYS}{\affiliation{% 
School of Physics,
The University of Sydney, NSW 2006, AU}}

\newcommand{\USYDMATH}{\affiliation{% 
School of Mathematics and Statistics,
The University of Sydney, NSW 2006, AU}}

\newcommand{\concordia}{\affiliation{%
Department of Physics, Concordia University, Montreal, QC H4B 1R6, Canada
}}

\newcommand{\oxford}{\affiliation{%
Department of Physics, The University of Oxford, OX2 6GG, United Kingdom
}}

\begin{abstract}
Quantum algorithms offer an exponential advantage with respect to the number of dependent variables
for solving certain nonlinear ordinary differential equations (ODEs). 
These algorithms typically begin by transforming 
the original nonlinear ODE into a higher-dimensional linear ODE using a linearization technique, 
most commonly Carleman linearization. 
Existing works restrict their analysis to ODEs where the nonlinearities 
are polynomial functions of the dependent variables, 
significantly limiting their applicability.
In this work we construct an efficient quantum algorithm for solving
ODEs with `Fourier' nonlinear terms 
expressible as $d{\bf u}/dt = G_0 + G_1 e^{i{\bf u}}$,
where ${\bf u}$ denotes a vector of $n$ complex variables evolving with $t$,
$G_0$ is an $n$-dimensional complex vector, $G_1$ is an $n \times n$ 
complex matrix and $e^{i{\bf u}}$ denotes the vector with entries $\{e^{iu_j}\}$.
To tackle the Fourier nonlinear term, which is not expressible as a finite sum of polynomials
of ${\bf u}$, our algorithm employs a generalization of the 
Carleman linearization technique known as Koopman linearization.
We also make other methodological advances towards relaxing 
the stringent dissipativity condition required for efficient solution extraction
and towards integrated readout of classical quantities from the solution state.
Our results open avenues to the development of efficient
quantum algorithms for a significantly wider class of 
high-dimensional nonlinear ODEs, thereby broadening the scope of
their applications.

\end{abstract}

\begin{document}
\title{Efficient quantum algorithm for solving differential equations\\ 
with Fourier nonlinearity via Koopman linearization}
% The Quantum Koopman Algorithim for Nonlinear ODEs

\author{Judd Katz}\email{judd.katz@physics.ox.ac.uk}\USYDPHYS\USYDMATH\oxford
\author{Gopikrishnan Muraleedharan}\MQ

\author{Abhijeet Alase}\USYDPHYS\concordia

\maketitle

\tableofcontents

\section{Introduction}
\subsection{Motivation and context}
Differential equations are fundamental to modeling dynamical systems across science,
technology and engineering. Quantum algorithms offer the promise of exponential speedups in terms of the number of dependent variables over classical methods for solving such equations. Building on the Harrow-Hassidim-Lloyd (HHL) algorithm for solving linear systems~\cite{harrow2009quantum} and its adaptation to ordinary differential equations (ODEs)~\cite{berry2014high}, the past decade has seen rapid progress in quantum algorithms for a broad class of differential equations, including linear and nonlinear, 
ordinary and partial differential equations~\cite{berry2014high, costa2025further, liu2021efficient, krovi2023improved, lloyd2020quantum, novikau2025quantum, berry2024quantum, engel2021linear}. This growing body of work aims to extend the range of differential equations that are efficiently solvable on quantum computers \cite{costa2025further, liu2021efficient, jennings2025quantum}, 
optimize algorithmic performance and resource requirements \cite{krovi2023improved, costa2025further}, 
explore the complexity-theoretic boundaries of quantum computation in this domain \cite{lewis2024limitations},
and assess the feasibility of practical applications \cite{tennie2025quantum}.\\

Many of the applications of differential equations rely on solving
an initial value problem for a high-dimensional 
system of nonlinear  
ODEs~\cite{gaitan2020finding, sanavio2024three,vaszary2024solving, dodin2021applications,jin2024quantum, jin2022quantum}.
Quantum computers are naturally better equipped to solve linear ODEs
owing to the linearity of Schrodinger's equation. Consequently,
to solve systems of nonlinear ODEs on a quantum computer, the standard
approaches take advantage of `linearization' techniques, which
map the given system of nonlinear ODEs to a larger
system of linear ODEs.
In principle, several techniques could be used to linearize a system of nonlinear ODEs, including 
Carleman linearization~\cite{brustle2024quantum, wu2024quantum, krovi2023improved, costa2025further, liu2021efficient, surana2024efficient, tanaka2024polynomialtimequantumalgorithm},
Koopman-von Neumann linearization~\cite{jin2023time, joseph2020koopman, barthe2023continuous, novikau2025quantum}, 
pivot switching method~\cite{endo2024divergence},
conversion to Liouville equation~\cite{lin2022koopman,jin2023time},
conversion to Fokker-Planck approach~\cite{tennie2024solving}
homotopy pertubation approach~\cite{xue2021quantum} 
and reproducing kernel Hilbert spaces approach \cite{giannakis2022embedding}. 
All of these approaches map the original system of nonlinear 
ODEs to an infinite-dimensional system of linear ODEs, necessitating 
a discretization or a truncation step in the design of the quantum algorithm.
To our knowledge, rigorous bounds on errors arising
from this step have been derived
only for Carleman \cite{krovi2023improved, costa2025further, liu2021efficient, jennings2025quantum} and
Koopman-von Neumann linearizations \cite{tanaka2024polynomialtimequantumalgorithm}, also yielding
worst-case complexity under explicitly stated assumptions on the ODEs.
In particular, Carleman linearization-based algorithms 
typically enforce a dissipativity condition, which rules out
chaotic or turbulent behaviour of the solution. These results tie in
well with the complexity theoretic bounds that exclude the possibility
of a linear in time and logarithmic in system-size quantum algorithm 
for solving non-dissipative ODEs \cite{liu2021efficient, lewis2024limitations}. Despite their success, quantum algorithms 
based on Carleman linearization apply only if the nonlinear terms 
in the ODEs are polynomials of the dependent variables, 
therefore limiting their applicability. 
This work expands the class of
nonlinear ODEs that can be solved efficiently on a quantum computer.\\

In this work, we construct an efficient quantum algorithm for solving
a system of ODEs with Fourier nonlinearity, 
expressed as $d{\bf u}/dt = G_0 + G_1 e^{i{\bf u}}$,
where ${\bf u}$ denotes a vector of $n$ complex variables evolving with $t$,
$G_0$ is an $n$-dimensional complex vector, $G_1$ is an $n \times n$ 
complex matrix and $e^{i{\bf u}}$ denotes the vector with entries $\{e^{iu_j}\}$
(\cref{table:symbols} for table of symbols for important quantities).
To tackle the Fourier nonlinear term, 
our algorithm employs a generalization of the 
Carleman linearization technique known as Koopman linearization \cite{mauroy2020koopman, brunton2016koopman}.
The exposition of the Koopman linearization technique towards constructing 
an efficient quantum algorithm with rigorous bounds in the worst-case 
complexity forms the main advance in this paper. 
While many of the tools in the
analysis are inspired by the contemporary works based on Carleman linearization,
we adapt them successfully to the problem of tackling Fourier nonlinearity
and offering improvements along the way. There are three 
improvements we highlight in particular:
\begin{enumerate}
    \item Similar to Carleman linearization, the success of Koopman linearization 
    approach is attributed to rigorous bounds
    on errors arising from truncation of the infinite-dimensional linearized ODE.
    In the Carleman literature, the conditions under which such error bounds hold, typically known
    as dissipative conditions, are derived in terms of standard $2$-norm of the initial state
    and operator $2$-norm of the input matrices \cite{krovi2023improved, liu2021efficient, costa2025further}. Ref.~\cite{costa2025further}
    also derived such bounds in terms of the infinity norm. We derive these bounds in terms of 
    $p$-norm and operator $p$-norm respectively for any $p \in [1,\infty)$.
    The overall complexity of solving a given system of nonlinear ODEs is also 
    formulated in terms of these more general norms.
    \item Much of the literature on Carleman linearization is dedicated to solving
    dissipative nonlinear ODEs \cite{krovi2023improved, liu2021efficient, costa2025further}. This regime is favoured as 
    bounds on Carleman truncation error lead to an efficient quantum algorithm.
    Here we apply a new technique to derive a bound on truncation error
    in the non-dissipative regime where the final time is
    bounded by a quantity expressed in terms of ${\bf u}_0, G_0$ and $G_1$. 
    The bound is exponentially small in the truncation level $N$, so the truncated system
    approximates the original linearized system exponentially well in $N$ as desired.
    By employing this bound, we construct an efficient algorithm that does not require
    dissipative conditions but applies only for short final times $T$ quantified 
    rigorously.
    \item Finally, most works in the literature focus on the task of generating a
    quantum state that encodes the solution of the ODE at the final time \cite{krovi2023improved, liu2021efficient, costa2025further}.
    In this work, we go one step further to integrate the readout of a classical
    quantity from the solution at the final time in our quantum algorithm. 
    Crucially, this integration of readout is enabled by 
    our derivation of the truncation error bounds
    on higher-order components in the linearized differential equation.
\end{enumerate}
We remark that as far as the problem of 
generating the state $e^{i{\bf u}}$ is concerned, the particular
ODE we consider in this work can be tackled by Carleman-based approaches with 
a substitution of variables $\Psi_1  = e^{i{\bf u}}$. 
However this is a feature of the specific class of ODEs considered in this paper.
There are other classes of nonlinear ODEs, including $d{\bf u}/dt = 
G_0 + G_1 e^{i{\bf u}} + G_{-1} e^{-i{\bf u}}$, which do not 
map to polynomial nonlinear ODEs with any straightforward 
substitution of variables and yet can potentially be tackled by 
Koopman linearization. A more detailed explanation of 
our approach based on Koopman linearization
and the complexities of the resulting quantum algorithms are
provided in the next section. \\

The organization of the rest of the paper is as follows.
In \cref{sec:contribution}, we provide a non-technical summary of the 
computational problem and the complexity of our algorithm.
\cref{sec:approach} provides a non-technical description of our approach
to the design of the algorithm. 
In \cref{sec:problem-description}, we provide a detailed statement of
the computational problem tackled in this paper.
\cref{sec:design} contains a detailed outline of the design of
our algorithm. In \cref{sec:error-analysis}, we derive bounds on
errors arising from truncation of the linearized ODE as well as
from the truncation of the Taylor series used in mapping the ODE
to a system of linear equations. \cref{sec:complexity} explains 
the construction of the circuit implementing our algorithm 
and its complexity. In \cref{sec:final-results}, we combine the
error analysis from \cref{sec:error-analysis} and 
complexity from \cref{sec:complexity} to derive final overall
complexities. Finally, \cref{sec:conclusion} concludes this paper with a 
summary and discussion of our results as well as open problems.
Proofs of some intermediate lemmas are deferred to the Appendices.

\subsection{Summary of results}\label{sec:contribution}
% {\color{red} This subsection should contain an informal statement of the problems we tackle
% and informal statements for the complexities of our algorithms. Definition of everyterm is needed. Block encoding, 2 norm etc - use words doesnt need to be math formal. Explain what e to ix0 is}

We now present an informal summary of the main results of the paper. 
We begin by introducing notation for various vector and matrix norms used.
For $p \in [1,\infty)$, the $p$-norm $\|{\bf a}\|_p$ of a vector $\mathbf{a} \in \mathbb{C}^n$  
and the induced (or operator) $p$-norm $\|A\|_p$ of a matrix $A \in \mathbb{C}^{n\times n}$
are given by
\begin{equation}
    \|{\bf a}\|_p = \left(\sum_j|a_j|^p\right)^{1/p},
    \quad \|A\|_p = \sup_{{\bf y} \in \mathbb{C}^n} 
    \frac{\norm{A{\bf y}}_p}{\norm{{\bf y}}_p},
\end{equation}
respectively. By convention, we use $\|\cdot \|=\|\cdot \|_2$.
We also use the maximum row $p$-norm of a matrix, which is defined to be
\begin{equation}
    \|A\|_{{\rm row},p} = \max_{j \in \{1,\dots,n\}} 
    \norm{A_j}_p,    
\end{equation}
where $\norm{A_j}_p$ denotes the (vector) $p$-norm of 
$A_j = [A_{j1} ,\dots, A_{jm}]$, which is the $j$-th row of $A$.\\

The computational problem considered in this paper is the following.\\

{\noindent \bf Problem \ref{prob:problem_statement_formal}} 
(Fourier ODE problem, informal){\bf .}
We are given a nonlinear ordinary differential equation of the form 
\begin{equation}\label{eq:raw_problem_1}
     \frac{d{\bf u}}{dt} = G_0 + G_1e^{i{\bf u}},
    \end{equation}
along with an initial condition ${\bf u}(0) = {\bf u}_0$,
where 
$G_0 \in \mathbb{C}^{n}$ and $G_1 \in \mathbb{C}^{n \times n}$ are time-independent coefficient matrices, ${\bf u}(t)\in \mathbb{C}^n$ is the solution vector as a function of $t$ and $e^{i{\bf u}}=(e^{iu_1},\cdots , e^{iu_n} )^{\rm T}\in \mathbb{C}^n$.
We are also given a function $g:\mathbb{C}^n\rightarrow \mathbb{C}$ 
expressible as a degree-$K$ Fourier series with non-negative Fourier coefficients,
\begin{align}
     g({\bf u}) = \sum_{|{\bf j}| \le K } d_{{\bf j}}e^{i{\bf u}\cdot{\bf j}}, \quad d_{{\bf j}} \in \mathbb{C} \quad \forall {\bf j}\in \mathbb{Z}^n.
\end{align}
The goal is to compute $g({\bf u}(T))$ for a given final 
time $T$ to a given accuracy $\epsilon$.\\

The coefficient matrices for the ODE are given by appropriate 
quantum oracles, as detailed in \cref{prob:problem_statement_formal}.
Where necessary, upper bounds on the norms of these matrices 
are also assumed to be given as input.
Circuits preparing states proportional to $e^{i{\bf u}_0}$
and ${\bf d}$ respectively are assumed to be given as 
input to the problem, where ${\bf d}$ is the vector with 
entries $d_{{\bf j}}$.\\

% The parameters of the differential equation are specified by a unitary circuit $U_{G_1}$ that is a $(\beta,0)$-block encoding of $G_1$ with known $\beta$, a constant $\alpha \geq \max_j |(G_0)_j|$, a unitary circuit $U_{G_0}$ with the action $U_{G_0}\ket{0^{b_1}}\ket{j} = \ket{(G_0)_j}\ket{j}$ and a unitary circuit $U_{u_0}$ with the action $U_{u_0}\ket{0}=\ket{e^{i{\bf u}_0}}/\|e^{i{\bf u}_0}\|$ where $\|e^{i{\bf u}_0}\|$  and $\|e^{i{\bf u}_0}\|_p$ are known for a chosen $p\in[1,\infty)$. We are also given $\|{\bf d}\|, \|{\bf d}\|_p$ and a unitary circuit $U_{d}$ with the action $U_{d}\ket{0}=\ket{d}/\|{\bf d}\|$.
% $\tilde{\mu}_0, \|G_1\|_{row,q}$ 

We construct quantum algorithms for solving the Fourier ODE problem
in dissipative and non-dissipative regimes.The following informal statement of the main result
reports the overall query complexity of our algorithm in the dissipative regime.\\

{\noindent \bf Theorem \ref{theorem:dissipative_result}} 
(Fourier ODE algorithm under dissipative conditions, informal){\bf .}
Let $\{p,q\} \in [1,\infty)$ be such that $1/p+1/q=1$.
Suppose $G_0, G_1$ and ${\bf u}_0$ in the Fourier ODE problem satisfy  
    \begin{equation}
    \label{eq:dissipativity}
        \tilde{\mu}_0 := \min_j \Im{(G_0)_j}\ge 0 \quad \text{and} \quad
         R_p := \frac{\|G_1\|_{{\rm row},q} \|e^{i{\bf u}_0}\|_p}{\tilde{\mu}_0}< \min \left\{1,\frac{\|e^{i{\bf u}_0}\|_p}{\|e^{i{\bf u}_0}\|_2}\right\}.
    \end{equation}
Then, for a fixed $K$ and $R_p$ bounded from above by a fixed constant less than $1$, 
a quantum circuit can be constructed that 
solves the Fourier ODE problem by making 
\begin{equation}
    \tilde{\mathcal{O}}\left(\frac{T^{3/2}\norm{G_0}_{\infty}^{3/2}
     \norm{{\bf d}}}{\epsilon}\left(1+\frac{\norm{G_1}_2}{\|G_1\|_{\rm row, q}}\right)
     \left(1+ \left(\frac{\tilde{\mu}_0}{\|G_1\|_{\rm row,q}}\right)^K\right)
    \right)
\end{equation}
queries, where
$\tilde{\mathcal{O}}$ suppresses polylogarithmic factors in
\begin{equation}
    {\rm polylog}
     \left(n,T,\norm{G_0}_{\infty},\norm{{\bf d}},1/\epsilon,
     \left(1+ \left(\frac{\tilde{\mu}_0}{\|G_1\|_{\rm row,q}}\right)^K\right)
     \right).
\end{equation}
\\

% Our quantum algorithm has comparable complexity 
% to the algorithms for solving polynomial nonlinear ODEs based
% on Carleman linearization.
Here, the quantity $R_p$ serves the same purpose as the quantity $R$ defined in 
References~\cite{krovi2023improved, liu2021efficient, costa2025further}, 
which is to quantify the degree of nonlinearity in the ODEs.
The overall query complexity of our algorithm 
sclaes as ${\rm polylog}(n)$ provided 
\begin{equation}
    T, \norm{G_0}_{\infty}, \norm{{\bf d}},
    \frac{1}{\epsilon}, \frac{\norm{G_1}_2}{\|G_1\|_{\rm row, q}},
     \frac{\tilde{\mu}_0}{\|G_1\|_{\rm row,q}} \in {\rm polylog}(n).
\end{equation}
Therefore, under these conditions our algorithm is exponentially  
faster than classical algorithms in terms of $n$.
The complexity of our algorithm with respect to
final time $T$ scales as $\tilde{O}(T^{3/2})$  
and not linearly, which is suboptimal. This inefficiency arises
from the use of block encoding-based tools for solving a
system of linear equations in an intermediate step, rather than the 
latest results on the adiabatic approach for the same.
We are forced to take the former approach in order to 
integrate the readout step in our algorithm while achieving $\tilde{O}(1/\epsilon)$
scaling with respect to error.\\

The above theorem requires the parameters of the ODE to satisfy
the dissipative condition \cref{eq:dissipativity}. We also construct a quantum 
algorithm for solving the Fourier ODE problem in the absence of the dissipative
condition as long as the final time is sufficiently short. The complexity of this
algorithm is summarized in the next theorem.\\

{\noindent \bf Theorem \ref{thm:non-dissipative-result}} 
(Fourier ODE algorithm without dissipative conditions, informal){\bf .}
Let $p \in [2,\infty)$ and $q=p/(p-1)$. Suppose $K \in \mathcal{O}(1)$
in the Fourier ODE problem. Let $r,\ell$  be positive real numbers
satisfying $r\ge e$ and $\ell \in \mathcal{O}(1)$.
Then for 
\begin{equation}
        0 < T \le \min\left\{\frac{\ell}{\max\left\{\norm{G_0}_{\infty},
    e^{\ell} r\|e^{i{\bf u}_0}\|_p\norm{G_1}_{{\rm row},q}\right\}(1+1/r)}, 
    \frac{\ln{r/e}}{\norm{G_0}_{\infty}+
    e^{\ell} r\|e^{i{\bf u}_0}\|_p\norm{G_1}_{{\rm row},q}}\right\},
    \end{equation}
a quantum circuit can be constructed that 
solves the Fourier ODE problem by making 
\begin{equation}
    \tilde{O}\left(\frac{r^{K\ell}\|\mathbf d\|_{q}^{1+\ell}
    \|e^{i{\bf u}_0}\|_p^{K(\ell +1)}
    }{\epsilon^{1+\ell}}\right).
\end{equation}
queries, where
$\tilde{\mathcal{O}}$ suppresses factors in
${\rm polylog}\left(r,\|e^{i{\bf u}_0}\|_p,\norm{{\bf d}}_q,1/\epsilon\right)$.\\

The overall query complexity is in ${\rm polylog}(n)$ provided
\begin{equation}
    \norm{{\bf d}}_q, \|e^{i{\bf u}_0}\|_p, \frac{1}{\epsilon} \in {\rm polylog}(n),
\end{equation}
and $r$ is also chosen such that $r \in {\rm polylog}(n)$.
Interestingly, the complexity of our algorithm does not depend on 
$T$, $\norm{G_0}_{\infty}$ and $\norm{G_1}_{{\rm row},q}$. The independence with respect to $T$ is
because the final time is required to be bounded in terms of other quantities.
The independence with respect to $\norm{G_0}$ and $\norm{G_1}_{{\rm row},q}$ is due
to the fact that the upper bound on $T$ is inversely proportional to both
$\norm{G_0}$ and $\norm{G_1}_{{\rm row},q}$. For $\ell \ll 1$, 
our algorithm achieves complexity quasilinear in $1/\epsilon$, but at the cost
of an exponential reduction with respect to $\ell$ in the maximum time $T$.

\subsection{Approach}
\label{sec:approach}

We now describe our approach to the design of quantum 
algorithms for solving these computational problems. 
Similar to previous works based on Carleman linearization~\cite{costa2025further},  
we begin by rescaling the nonlinear ODE to ensure an efficient readout. 
We then map the finite-dimensional nonlinear ODE to an infinite-dimensional linear 
ODE, which is subsequently truncated to order $N$. The linearization procedure used here is the 
Carleman-Fourier linearization method introduced in~\cite{chen2024carleman}, which can be viewed as a 
special case of the more general Koopman linearization framework~\cite{mauroy2020koopman}.\\

{\noindent \bf Rescaling}: In previous works based on Carleman linearization the original ODE was
rescaled to obtain the solution state with high probability. In our formulation of the problem the need for rescaling relates to efficient readout. If we directly linearize the given ODE, then,
since we do not in general have the 2-norm of $e^{i{\bf u}_0}$ being less then unity, our readout step
would yield $\text{poly}(n)$ scaling. For this reason, we 
add a common imaginary component $i\ln(\nu)$ to all dependent variables, i.e. $x_j = u_j + i\ln(\nu)$,
where $\nu$ is a sufficiently large positive number that ensures the 2-norm of $e^{i{\bf x}}$ is less then unity.
The rescaled variable ${\bf x}$ then satisfies the ODE $d{\bf x}/dt = F_0 + F_1e^{i{\bf x}}$,
where $F_0 = G_0$ and $F_1 = \nu G_1$. Similarly, we define a function $f$ such that $f({\bf x}) = g({\bf u})$ so that readout of $f$ on the rescaled solution vector solves \cref{prob:problem_statement_formal}.
\\

{\noindent \bf Carleman-Fourier linearization}: The crucial step in the design of our algorithm is
the application of Koopman linearization.
The idea can be illustrated using a simple single variable nonlinear problem.
Suppose $x(t)$ satisfies the ODE
$dx/dt = F_0 + F_1 e^{ix}$, where $F_0,F_1 \in \mathbb{C}$.
In Carleman linearization we construct an array 
of monomials $\Psi_{\rm Carleman} = [x\ x^2 \dots]^{\rm T}$, and then derive
an equation that governs the evolution of this array.
This is achieved by expressing the derivative of each component
of $\Psi_{\rm Carleman}$ as a linear combination of all its components.
For example, we have 
\begin{align}
    \frac{dx}{dt} &= F_0 + F_1 + iF_1x - F_1x^2/2 + \dots, \nonumber\\
    \frac{dx^2}{dt} &= 2xF_0 + 2xF_1 + 2iF_1x^2 - F_1x^3 + \dots
\end{align}
for the first two components. While the ODE can be linearised in this way, the resultant
coefficient matrix of the linearized ODE will not be not sparse, as is desired
for developing efficient quantum algorithms. While one may approximate this
coefficient matrix by a sparse matrix by truncating the number of terms in
the series expansion of $e^{ix}$, an extra step is then needed to bound the error
arising from such an approximation. A better approach is to use a different basis to monomials 
for the linearization. This technique is known as Koopman linearization \cite{brunton2016koopman, mauroy2020koopman}. For our ODE, an obvious choice is
$\Psi_{\rm Koopman} = [e^{ix}\ e^{2ix} \dots]^{\rm T}$. Using this basis for the linearization we obtain
\begin{align}
    \frac{d(e^{ix})}{dt} &= iF_0e^{ix} + iF_1e^{2ix}, \nonumber\\
    \frac{d(e^{2ix})}{dt} &= 2iF_0e^{2ix} + 2iF_1e^{3ix} 
\end{align}
for the first two components. We therefore obtain an infinite-dimensional
homogeneous linear ODE, which we denote as $d\Psi/dt = {{\bf L}}^{\rm T}\Psi$, where the the matrix of coefficients ${{\bf L}}^{\rm T}$ is sparse as it has only two non-zero entries in each row.
The above example demonstrates Koopman linearization for 
an ODE in a single variable, but the idea extends to multiple variables.
In this case, $\Psi$ is expressed in terms of the block components 
$\Psi_{\rm Koopman} = [e^{i{\bf x}}\ (e^{i{\bf x}})^{\otimes 2} \dots]^{\rm T}$.\\

In the next step, we truncate the infinite-dimensional 
homogeneous linear ODE $d\Psi/dt = {{\bf L}}^{\rm T}\Psi$ 
to the first $N$ components, denoted by $d\tilde{\Psi}/dt = {{\bf L}}_N^{\rm T}\tilde{\Psi}$. This truncation
introduces inevitable errors, which we bound using techniques known
from the literature on Carleman-Fourier linearization. Some such techniques
have been studied independently in Mathematics. We borrow a technique
from the latter works that allows us to obtain bounds on truncation error
in the absence of dissipative conditions. 
We prove and use truncation error bounds not just on the first, lowest component
of $\Psi$, namely $\Psi_1 = e^{i{\bf x}}$, but also on $K-1$ higher components. These truncation bounds on the higher components 
of $\Psi$ are required for the purpose of readout 
of $g({\bf u}(T))$ in the final step.\\

{\noindent \bf Solution of the homogeneous linear ODE}: Once we have obtained a finite-dimensional system of homogeneous linear ODEs, we convert it to a system of linear equations. For this conversion we make
use of the truncated Taylor series approach in which the total time interval
$[0,T]$ is divided into smaller subintervals of step size $h$ and the propagator
for each time step is approximated by a truncated Taylor series. \\

{\noindent \bf Integrated readout}: At this point it is possible to generate a solution history state for the ODE by using a known algorithm
for solving systems of linear equations. However, our ultimate goal is
not to generate the solution state, but to calculate $g({\bf u}(T))$. It is possible to 
calculate $g({\bf u}(T))$ using multiple copies of the solution state
via the swap test approach, however this would incur a suboptimal $\mathcal{O}(1/\epsilon^2)$ scaling
for the query complexity. We therefore take a  different approach of expressing
$g({\bf u}(T))$ as an expectation value of a unitary operator which allows us to achieve 
$\tilde{O}(1/\epsilon)$ scaling.

\section{Problem description 
% and solution strategy
}
\label{sec:problem-description}

In this section, we present a detailed statement for the 
computational problem considered in this paper. 
We begin by introducing notation, reviewing the definition of block encoding
and then provide the complete problem statement.\\

We denote by $ \mathbb{Z}_+^n $ the set of $n$-tuples of positive integers.
The ordered set $\mathbb{Z}_{+,k}^n$ is defined as follows using a tensor-style 
enumeration convention \footnote{Our definition of $\mathbb{Z}_{+,k}^n$ differs from the conventional definition, 
according to which $\mathbb{Z}_{+,k}^n$ is the set of \textit{distinct} $n$-tuples 
of non-negative integers summing to $k$. Consequently, under the conventional definition
$|\mathbb{Z}_{+,k}^n| = \binom{n + k - 1}{k}$.}. Consider the set $\{1,\dots,n\}^k$, which is the set of
$k$-tuples ${\bf l} = (l_1,\dots,l_k)$ with each entry in $\{1,\dots,n\}$.
The set $\mathbb{Z}_{+,k}^n$ is obtained by replacing each $k$-tuple
${\bf l}\in \{1,\dots,n\}^k$ by the $n$-tuple 
${\bf j}=(j_1,\dots,j_n) =: {\rm count}(l_1,\dots,l_k)$, where $j_1$ counts the number 
of times $1$ appears in $(l_1,\dots,l_k)$, $j_2$ counts the number 
of times $2$ appears in $(l_1,\dots,l_k)$, etc. 
By definition, each $n$-tuple in $(j_1,\dots,j_n) \in \mathbb{Z}_{+,k}^n$ 
satisfies $\sum_{i=1}^n j_i = k$. However, note that several tuples
appear more than once in the set $\mathbb{Z}_{+,k}^n$. For example,
\begin{align*}
    \mathbb{Z}^3_{+,2} &= {\rm count}\left(\{1,2,3\}^2\right) \\
    &= {\rm count}\left(\{(1,1),(1,2),(1,3),(2,1),(2,2),(2,3),(3,1),(3,2),(3,3)\}\right) \\
    &= \{(2,0,0),(1,1,0),(1,0,1),(1,1,0),(0,2,0),(0,1,1),(1,0,1),(0,1,1),(0,0,2)\}.
\end{align*}
% We will employ a tensor style enumeration convention whereby $ \mathbb{Z}_{+,k}^n $  includes all $n$-tuples arising from \textit{ordered} selections of $ k $ indices from $ \{1, \dots, n\} $. 
% For example, using the standard convention,
% \[\mathbb{Z}^2_{+,2} = \{(2,0), (1,1), (0,2)\},\]
% and with our tensor style enumeration convention 
% \[\mathbb{Z}^2_{+,2} = \{(2,0), (1,1), (1,1), (0,2)\}.\]
We further identify each $ \mathbf{j}=(j_1,\dots,j_n) \in \mathbb{Z}_{+,k}^n $ with a vector 
in $\mathbb{C}^n$ with integer-valued coordinates $(j_1,\dots,j_n)$, i.e.
$\mathbf{j} = \sum_{l=1}^{n} j_l\mathbf{e}_{l}$, 
and $\{\mathbf{e}_{1},\dots,\mathbf{e}_{n}\}$ denote the standard orthonormal basis
of $\mathbb{C}^n$ with $\mathbf{e}_l = (0,\dots,1_l,\dots,0)^{\rm T}$.
According to our definition of $\mathbb{Z}_{+,k}^n$, its cardinality is
$\left| \mathbb{Z}_{+,k}^n \right| = n^k$.\\

In our formulation of the computational problem, we use the following definition of block encoding.
\begin{definition}[Block-encoding {\cite[Definition~43]{gilyen2019quantum}}]\label{def:block_encoding}
For $\alpha, \varepsilon \in \mathbb{R}_+$ 
we say that the $(s+a)$-qubit unitary $U$ is an $(\alpha, \epsilon)$-block-encoding of the $s$-qubit operator $A$ if
\begin{equation}
\left\lVert A - \alpha \big( (\bra{0^a}\otimes I)\, U \,(\ket{0^a} \otimes I) \big) \right\rVert \leq \varepsilon.
\end{equation}
\end{definition}

We are now ready to state the computational problem considered in this work.
\begin{prob}[Fourier ODE problem]
\label{prob:problem_statement_formal}
We are given a nonlinear ordinary differential equation of the form 
\begin{equation}\label{eq:raw_problem_1}
     \frac{d{\bf u}}{dt} = G_1e^{i{\bf u}} + G_0,\quad {\bf u}(0) = {\bf u}_0
    \end{equation}
where $G_1 \in \mathbb{C}^{n \times n}$ and $G_0 \in \mathbb{C}^{n}$ are time-independent coefficient matrices, ${\bf u}(t)\in \mathbb{C}^n$ is the solution vector as a function of $t$ and $e^{i{\bf u}}=(e^{iu_1},\cdots , e^{iu_n} )^{\rm T}\in \mathbb{C}^n$.
The coefficient matrices $G_0$ and $G_1$ of the ODE are specified by unitary oracles 
$U_{G_0}$ and $U_{G_1}$ respectively.
$U_{G_0}$ is a binary oracle with the 
action $U_{G_0}\ket{0}\ket{j} = \ket{[(G_0)_j]}\ket{j}$,
where $[\bullet]$ denotes binary representation of its argument.  
$U_{G_1}$ is a $(\beta,0)$-block encoding of $G_1$ for a given $\beta \in \mathbb{R}^+$. 
We are also given $\tilde{\mu}_0 = \min_j \Im{(G_0)_j}$ and 
a constant $\alpha \geq \max_j |(G_0)_j|$.
The initial state is specified by a unitary circuit $U_{u_0}$ with the action $U_{u_0}\ket{0}=\ket{e^{i{\bf u}_0}} = e^{i{\bf u}_0}/\|e^{i{\bf u}_0}\|$ and $\|e^{i{\bf u}_0}\|$ is given.\\ 

\noindent We are also given a function $g:\mathbb{C}^n\rightarrow \mathbb{C}$ which can be expressed as a Fourier series comprising only positive frequency components of degree $\le K$, i.e.,
\begin{align}
     g({\bf u}) = \sum_{{\bf j} \in \bigcup_{l=1}^K\mathbb{Z}^n_{l,+}} d_{{\bf j}}e^{i{\bf u}\cdot{\bf j}} = \sum_{l=1}^{K}{\bf d}_l\cdot[e^{i{\bf u}}]^{\otimes l} = {\bf d}\cdot\left(e^{i{\bf u}\cdot(1,\cdots ,0)}, \cdots, e^{i{\bf u}\cdot(0,\cdots ,K)} \right)^{\rm T}, \\ 
     \nonumber d_{{\bf j}} \in \mathbb{C}, \, {\bf d}_l=\{d_{\bf j} |\,  \forall {\bf j}\in  \mathbb{Z}^n_{j,+} \}, \, {\bf d}\in \mathbb{C}^{D}.
\end{align}
where $D=\frac{n^{K+1}-n}{n-1}$.
The function $g$ is specified by a unitary circuit $U_{d}$ with the action $U_{d}\ket{0}={\bf d}/\|{\bf d}\|$, and
we are also given $\|{\bf d}\|$.\\

\noindent The goal is to compute $g({\bf u}(T))$ for a given final time $T\in \mathbb{R}_+$ to a given accuracy $\epsilon \in \mathbb{R}_+$.
\end{prob}

While the oracles assumed to provide input are standard, we comment about 
several key points. We use a binary oracle input for $G_0$ instead of a 
circuit preparing a state proportional to $G_0$, as this simplifies our circuit construction
without losing generality. It is worth highlighting that $U_{u_0}$ is assumed to 
prepare a state proportional to $e^{i{\bf u}_0}$ rather than ${\bf u}_0$. 
While constructing an efficient circuit for $U_{u_0}$ is not straightforward
in general, it is possible if  
$e^{i{\bf u}_0}$ is well-approximated by a sparse vector.
Finally, $U_{d}$ prepares the vector of Fourier coefficients for the function $g$.
How to construct such a $U_{d}$ for an arbitrary function $g$ is an interesting problem
that we leave for future work.

\section{Design of the algorithm}
\label{sec:design}
Our approach to solving \Cref{prob:problem_statement_formal} proceeds in four stages: rescaling, linearization, solving the resulting linear system, and readout. While this structure resembles the methodology used in deriving existing Carleman linearization results such as \cite{liu2021efficient, krovi2023improved, costa2025further}, there are key differences. These arise both from the fact that we address a different class of ODEs, namely ODEs with Fourier nonlinearity, 
and from the introduction of a new technique to tackle the same, namely Koopman linearization.

\subsection{Rescaling}
In general, the 2-norm of $e^{i{\bf u}_0}$ will be greater then unity and readout of classical quantities from the solution state will exhibit inefficient scaling with $n$. For this reason we introduce 
a rescaling of the variable ${\bf u}$ to the variable {\bf x} given by
\begin{equation}\label{eq:rescaling_definition}
    x_j := u_j + i \ln(\nu),\quad j=1,\dots,n,
\end{equation}
where $\nu\in \mathbb{R}_+$ is the rescaling parameter. 
This change of variable rescales the exponent of the initial state, 
since 
\begin{equation}
    e^{i{\bf x}_0} = e^{i{\bf u}_0}/\nu.
\end{equation}
By choosing $\nu$ to be sufficiently large, we can ensure that 
\begin{equation}
\label{eq:rescaling_norm_input}
    \gamma := \norm{e^{i{\bf x}_0}} = \norm{e^{i{\bf u}_0}}/\nu < 1.
\end{equation}

Under this rescaling,
the Fourier ODE in \cref{prob:problem_statement_formal} maps to 
\begin{equation}\label{eq:problem_1_rescaled}
\frac{d{\bf x}}{dt}= F_1e^{i{\bf x}}+F_0, \quad {\bf x}(0)={\bf x}_0 \quad \text{with} \quad F_1 =\nu G_1, F_0 =G_0.
\end{equation} 
Observe that $R_p$ and $\tilde{\mu}_0$  are invariant under the rescaling defined in 
\cref{eq:rescaling_definition}. We therefore refer to $\tilde{\mu}_0=\min_j \Im{(F_0)_j}$ and  $R_p = \frac{\|F_1\|_{{\rm row},q} \|e^{i{\bf x}_0}\|_p}{\tilde{\mu}_0}$ interchangeably with \cref{eq:dissipativity}. \\

Similarly, we can also rescale the coefficients $d_{\bf j}$ in the Fourier series
of the function $g$ in \cref{prob:problem_statement_formal} as
\begin{equation}\label{eq:c_j}
    c_{{\bf j}} := \nu^{\abs{{\bf j}}}d_{{\bf j}},
\end{equation}
with the vector $\textbf{c}$ containing all elements $c_\textbf{j}$. Then the function $f:\mathbb{C}^n\rightarrow \mathbb{C}$ defined by
\begin{equation}\label{eq:rescaled_output_function}
    f({\bf x}) := \sum_{{\bf j} \in \bigcup_{l=1}^K\mathbb{Z}^n_{l,+}} 
c_{{\bf j}}e^{i{\bf x}\cdot{\bf j}}, 
\end{equation}
satisfies \begin{equation}
    f({\bf x}) = g({\bf u}).
\end{equation} The solution to the rescaled ODE, ${\bf x}(t)$, can therefore be used to solve \cref{prob:problem_statement_formal}. 

% \begin{lemma}[Rescaling the initial condition]\label{eq:rescaling_norm_input} For the rescaling given in \cref{eq:rescaling_definition} parameterized by $\nu\in \mathbb{R}_+$, choose $\nu > a\|e^{i{\bf u}_0}\|$ for $a\in \mathbb{R}_+$. Then, the nonlinear differential equation given in \cref{eq:raw_problem_1} can be rescaled to \cref{eq:problem_1_rescaled} with 
% \begin{equation}
%   \gamma := \|e^{i{\bf x}_0}\|<\frac{1}{a}.
% \end{equation}
% \end{lemma}

% \begin{proof} Using the rescaling and $\nu > a\|e^{i{\bf u}_0}\|$,
% \begin{equation}
%     \|e^{i{\bf x}_0}\| = \frac{1}{\nu}\|e^{i{\bf u}_0}\|_2<\frac{1}{a}.
% \end{equation}
% \end{proof} 

% We now define the quantities $R_p$ and $\tilde{\mu}_0$ where $R_p$ is a generalization to \cref{prob:problem_statement_formal} of the quantity $R$ given in \cite{krovi2023improved, liu2021efficient, costa2025further} to quantify the degree of nonlinearity in the polynomial nonlinear ODEs discussed in these works. 
% \begin{definition}\label{def:R_p_mu_0}
%   \begin{equation}
%       \tilde{\mu}_0 := \min_j \Im{(G_0)_j} \quad 
%      R_p := \frac{\|G_1\|_{{\rm row},q} \|e^{i{\bf u}_0}\|_p}{\tilde{\mu}_0}
%   \end{equation}  
% \end{definition}

\subsection{The Koopman representation of nonlinear differential equations}
\label{sec:koopman}

The Koopman representation of nonlinear ODEs presents an approach for 
mapping a nonlinear ODE onto an infinite-dimensional linear ODE.
In this section, we review the theory of linearization by Koopman 
representation and also review how Carleman linearization can be
seen as a special case of Koopman linearization. \\

% Together with truncation discussed in \cref{sec:error-analysis},
% Koopman linearization provdes an approach
% to solving an arbitrary nonlinear ODE using a finite dimensional linear ODE.  
Consider the following first order nonlinear ODE problem 
\begin{equation}  \label{ODE}
\frac{d{\bf x}}{dt}={\bf F}({\bf x}),\quad {\bf x}(t=0):={\bf x}_{0},
\end{equation}
where ${\bf x}(t)\in \mathbb{C}^n, {\bf F}: \mathbb{C}^n\rightarrow \mathbb{C}^n$.
Whereas solving this initial value problem is equivalent to obtaining an explicit
description for the trajectory of the configuration ${\bf x}(t)$,
we are often interested only in calculating some 
function (observable) of the final configuration, $g({\bf x}(t))$, 
where $g:\mathbb{C}^n\rightarrow \mathbb{C}$. 
The Koopman approach proceeds by deriving dynamic equations for the 
evolution of a space of observables $\mathcal{G}$, which contains the observable $g$ of interest.
We begin by defining the Koopman Operator associated with the ODE \cite{mauroy2020koopman}.
\begin{definition}[The Koopman Operator]
For the first order nonlinear ODE problem presented in Eq. (\ref{ODE}), consider a (Banach) space $\mathcal{G}$ of observables $\psi:\mathbb{C}^n \rightarrow\mathbb{C}$ 
on the configuration ${\bf x}(t) \in \mathbb{C}^n$. The finite-time Koopman operator 
$K_t:\mathcal{G} \to \mathcal{G}$ 
associated with the evolution 
${\bf x}(0)\rightarrow{\bf x}(t), t\in\mathbb{R}^+$ is defined as 
\begin{equation}\label{Koopman_operator}
(K_t\psi)({\bf x}(0)) = \psi({\bf x}(t)), \quad \forall \psi\in \mathcal{G}.
\end{equation}    
\end{definition}
From here onwards we use the notation $\psi(t) := K_t\psi$, and therefore $\psi(0) = \psi$.
In the above definition we assumed that the image of 
$K_t$ is in $\mathcal{G}$, which can be ensured in practice
by carefully selecting a sufficiently large $\mathcal{G}$.
It is easy to observe that $K_t$ is a linear operator on $\mathcal{G}$.
In the limit $t\rightarrow0^+$ we define the infinitesimal Koopman generator  
${\bf L}$ as
\begin{equation}
    \frac{d\psi}{dt}=\lim_{t\rightarrow0^+} \frac{\psi(t) -\psi(0)}{t} := {\bf L}(\psi)
\end{equation}
In general consider a basis $\Psi = \{\Psi_j\}$ of $\mathcal{G}$.
Let ${\bf L}\Psi_j = \sum_k L_{kj}\Psi_k$. Then 
we can represent ${\bf L}$ in the bases 
$\Psi$ by the operator 
\begin{equation}
    [{\bf L}]_\Psi = 
    \begin{bmatrix}
    \vdots & \vdots & \cdots & \cdots & \cdots \\
    [{\bf L}\Psi_1]_{\Psi} & [{\bf L}\Psi_2]_{\Psi} & \cdots & \cdots & \cdots \\
    \vdots & \vdots & \cdots & \cdots & \cdots 
    \end{bmatrix} = 
    [L_{jk}].
\end{equation}
As is usual, since we work in a fixed basis 
we identify ${\bf L}$ with its representation $[{\bf L}]_\Psi$
and denote both by ${\bf L}$.
We now have
\begin{equation}
\label{eq:koopman-encoding}
    \frac{d\Psi(t)}{dt}={\bf L}^{\rm T}\Psi, 
\end{equation}
as the dynamical equation that governs the evolution of the observables in the
basis $\Psi$. This is called the Koopman representation of the ODE in~\cref{ODE}.
Notice that ${\bf L}^{\rm T}$, rather than ${\bf L}$, 
governs the evolution of $\Psi(t)$ because we are 
focusing on the evolution of the basis of $\mathcal{G}$ rather than
the representation of a specific vector in $\mathcal{G}$.\\

We may now evaluate this homogeneous linear ODE for an initial configuration ${\bf x}_0$
to obtain
\begin{equation}
\label{eq:koopman-encoding2}
    \frac{d(\Psi(t)[{\bf x}_0])}{dt}={\bf L}^{\rm T}(\Psi(t)[{\bf x}_0]).
\end{equation}
Importantly notice that $\Psi(t)[{\bf x}_0]$, which is sometimes called the ``lifted state'',
is an array of complex numbers that depends on the initial state ${\bf x}_0$.
Therefore the right-hand side of  
\cref{eq:koopman-encoding} is amenable to solution on a computer.
Moreover, since $g \in \mathcal{G}$ and $\Psi$ is a basis of $\mathcal{G}$,
the observable of interest $g$ can be expressed
as a sum of observables in $\Psi$. That is $g = {\bf d}\cdot\Psi = \sum_j d_j\Psi_j$,
where ${\bf d}$ is an array with complex entries $\{d_j\}$.
We now have 
\begin{equation}
    g({\bf x}(t)) = {\bf d}\cdot\Psi[{\bf x}(t)] = 
{\bf d}\cdot\Psi(t)[{\bf x}_0].
\end{equation}
Therefore, obtaining the value of $\Psi(t)[{\bf x}_0]$ at time $t$ 
also allows us to obtain $g({\bf x}(t))$ at time $t$.\\

The space $\mathcal{G}$ of observables and therefore
the infinitesimal Koopman operator ${\bf L}$, are often infinite-dimensional. 
In order to use the Koopman operator framework to solve nonlinear ODEs using numerical methods it will be necessary to obtain an approximate, finite dimensional matrix representation of the 
infinitesimal Koopman generator ${\bf L}$. The next proposition
describes how this is achieved via projection on a subspace of $\mathcal{G}$.
\begin{proposition}[Projected infinitesimal Koopman matrix]\label{prop:infKoop}
Consider an $N$ dimensional subspace $\mathcal{G}_N \subset \mathcal{G}$ spanned by the basis functions $\Psi^{(N)} = \{\Psi_j\}_{j=1}^N$ and let $\Pi: \mathcal{G}\rightarrow \mathcal{G}_N$ be a projection map. An $N$-dimensional matrix representation of the infinitesimal Koopman generator, ${\bf L}_N$, has $j^{\text{th}}$ column $\Pi {\bf L} \Psi_j = \Pi\frac{d\Psi_j}{dt}$ represented in the basis $\{\Psi_j\}_{j=1}^N$.  
\end{proposition}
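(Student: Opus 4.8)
The plan is to treat the statement as an unpacking of the definition of a matrix representation, applied to the compression of ${\bf L}$ onto the finite-dimensional subspace $\mathcal{G}_N$. Concretely, I would introduce the \emph{projected generator} as the linear operator ${\bf L}_N := \Pi\, {\bf L}\big|_{\mathcal{G}_N} : \mathcal{G}_N \to \mathcal{G}_N$, obtained by restricting ${\bf L}$ to inputs in $\mathcal{G}_N$ and then projecting the output back into $\mathcal{G}_N$ with $\Pi$. Following the convention already used in the excerpt, I identify this operator with its matrix in the basis $\{\Psi_j\}_{j=1}^N$. Since $\{\Psi_j\}_{j=1}^N$ is a basis of $\mathcal{G}_N$, the operator is completely determined by its action on the basis vectors, so the whole proposition reduces to identifying those actions.

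The key steps, in order, are as follows. First, I recall the standard linear-algebra fact (already used in the excerpt to build $[{\bf L}]_\Psi$) that the $j$-th column of the matrix representing a linear operator $A$ in a basis $\{\Psi_k\}$ is the coordinate vector of $A\Psi_j$ expanded in that basis. Second, I apply this to $A = {\bf L}_N$, so that the $j$-th column is the coordinate vector of ${\bf L}_N\Psi_j = \Pi\, {\bf L}\,\Psi_j$. Third, I invoke the definition of the infinitesimal Koopman generator, ${\bf L}\psi = d\psi/dt$, evaluated at $\psi = \Psi_j$, to rewrite this as $\Pi\,\tfrac{d\Psi_j}{dt}$. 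Finally, since $\Pi$ maps into $\mathcal{G}_N$ and $\{\Psi_j\}_{j=1}^N$ is a basis, the element $\Pi\,{\bf L}\,\Psi_j$ admits a unique expansion $\sum_{k=1}^N ({\bf L}_N)_{kj}\,\Psi_k$, and reading off the coefficients $\{({\bf L}_N)_{kj}\}_{k=1}^N$ yields precisely the $j$-th column, as claimed.

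I do not expect a serious obstacle, since the statement is essentially definitional; the only points requiring care are well-definedness issues. Specifically, one must ensure that each $\Psi_j$ lies in the domain of the (possibly unbounded) generator ${\bf L}$, so that $d\Psi_j/dt$ is meaningful—a mild regularity condition on the choice of $\mathcal{G}_N$ that, for the concrete Koopman basis $\Psi_{\rm Koopman} = [e^{i{\bf x}}, (e^{i{\bf x}})^{\otimes 2}, \dots]^{\rm T}$ used later, holds by direct computation. One should also record that $\Pi$ is assumed linear, so that ${\bf L}_N$ is genuinely a linear operator on $\mathcal{G}_N$ and its matrix representation is unambiguous. With these caveats dispatched, the result follows immediately from the definitions of compression and of matrix representation.
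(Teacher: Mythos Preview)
Your proposal is correct and matches the natural definitional argument; the paper itself does not supply an in-line proof but instead cites Section~1.4.1.1 of \cite{mauroy2020koopman}, where the construction is the same compression-and-represent-in-basis argument you outline.
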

% While calculating the exact solution for $\Psi^{(N)}(t)$ can be as hard as
% solving for $\Psi(t)$, we can obtain an approximate solution by truncating
% the generator ${\bf L}$. This is simply done by defining
% \begin{equation}
%     {\bf L}_N^{\rm T} := \Pi {\bf L} \Pi.
% \end{equation}
In the simple case where 
\begin{equation}
    \Pi \Psi_{j} = \left\{
    \begin{array}{lcl}
    \Psi_j & \text{if} & j \le N \\
    0 & \text{if} & j>N
    \end{array}\right.,
\end{equation}
${\bf L}_N^{\rm T}$ is simply the top-left $N\times N$ block of ${\bf L}$.
Next, the following proposition defines approximate evolution of the 
truncated lifted state.
\begin{proposition}[Truncated lifted Dynamics] \label{prop:LifDy}
Define the $N$-dimensional \textit{lifted state} as 
${\Psi}^{(N)}$ with elements ${\Psi}_j^{(N)}(t)$. 
An $N$-dimensional approximate Koopman representation of the ODE in \cref{ODE} is given by
\begin{equation}
\label{eq:truncatedkoopman}
    \frac{d}{dt}{\Psi}^{(N)}={\bf L}_N^{\rm T}{\Psi}^{(N)},
\end{equation}
with initial data $\Psi^{(N)}({\bf x}_0)$.
\end{proposition}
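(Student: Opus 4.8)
The plan is to derive the truncated system directly from the exact infinite-dimensional Koopman representation in \cref{eq:koopman-encoding} by projecting onto $\mathcal{G}_N$, and to identify precisely which term must be discarded to obtain a closed finite-dimensional system. First I would evaluate the exact representation along the trajectory emanating from ${\bf x}_0$, as in \cref{eq:koopman-encoding2}, so that $\Psi(t)[{\bf x}_0]$ is a genuine array of complex numbers rather than a formal object. Applying the projection $\Pi$ of \cref{prop:infKoop} to the exact lifted state produces $\Pi\Psi$, whose components are the first $N$ coordinates in the simple truncation where $\Pi$ retains $\{\Psi_j\}_{j=1}^N$ and annihilates the rest.

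The central step is to apply $\Pi$ to both sides of \cref{eq:koopman-encoding} and insert the resolution $I=\Pi+(I-\Pi)$ on the right, giving
\begin{equation}
    \frac{d}{dt}\Pi\Psi = \Pi{\bf L}^{\rm T}\Psi = \Pi{\bf L}^{\rm T}\Pi\Psi + \Pi{\bf L}^{\rm T}(I-\Pi)\Psi.
\end{equation}
By \cref{prop:infKoop}, the $j$-th column of ${\bf L}_N$ is $\Pi{\bf L}\Psi_j$ expressed in $\{\Psi_j\}_{j=1}^N$, so the first term on the right equals ${\bf L}_N^{\rm T}(\Pi\Psi)$. The exact projection $\Pi\Psi$ thus fails to satisfy a closed equation precisely because of the second term, which encodes the coupling of the retained observables to the neglected modes with index $k>N$. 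Defining $\Psi^{(N)}$ as the solution of the system obtained by discarding this coupling term yields the closed linear ODE \cref{eq:truncatedkoopman} with initial data $\Psi^{(N)}({\bf x}_0)=\Pi\Psi({\bf x}_0)$, which is exactly the claimed approximate representation.

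The main obstacle is conceptual rather than computational: $\Pi$ does not commute with ${\bf L}^{\rm T}$, so the projected exact dynamics is genuinely not closed and the discarded term $\Pi{\bf L}^{\rm T}(I-\Pi)\Psi$ is nonzero in general. Justifying the qualifier ``approximate'' amounts to showing that $\Psi^{(N)}$ stays close to $\Pi\Psi$, i.e.\ that the neglected coupling is small; I would defer this quantitative truncation bound to \cref{sec:error-analysis} and record here only that the construction is well-posed and that the neglected contribution has been isolated explicitly. A secondary check is that ${\bf L}_N$, and hence the truncated system, is defined intrinsically through $\Pi$ in \cref{prop:infKoop}, so the representation does not depend on incidental features of the basis enumeration.
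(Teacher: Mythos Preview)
Your derivation is correct. The paper does not give an in-text proof of this proposition; immediately after stating it, the authors write that the proof ``can be found in sections 1.4.1.1 and 1.4.1.2 of \cite{mauroy2020koopman}'', so there is no paper-internal argument to compare against. Your Galerkin-style projection, inserting $I=\Pi+(I-\Pi)$ and identifying $\Pi{\bf L}^{\rm T}\Pi$ with ${\bf L}_N^{\rm T}$, is the standard way to arrive at \cref{eq:truncatedkoopman} and makes explicit the discarded coupling $\Pi{\bf L}^{\rm T}(I-\Pi)\Psi$. You are also right that the proposition is essentially a definition of the truncated system rather than a theorem with nontrivial content, and that the substantive claim---that $\Psi^{(N)}$ approximates $\Pi\Psi$---is precisely what \cref{sec:error-analysis} establishes.
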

Notice that \cref{eq:truncatedkoopman} is not identical to the
exact ODE $\Pi {\bf L} \Psi_j = \Pi\frac{d\Psi_j}{dt}$.
Therefore, even though ${\Psi}^{(N)}$ and $\Pi\Psi$ coincide
at initial time $t=0$, in general ${\Psi}^{(N)}(t) \ne \Pi\Psi(t)$
for $t>0$. Here $\Psi_j^{(N)}(t)$ can be seen as an approximation to 
$\Psi_j(t)$.\\

\cref{prop:infKoop} and \cref{prop:LifDy} demonstrate (where the proof can be found in sections 1.4.1.1 and 1.4.1.2 of \cite{mauroy2020koopman} respectively) that the nonlinear ODE in \cref{ODE} can be approximately represented as an $N$-dimensional linear ODE (the lifted dynamics). The solution to this linear ODE, ${\Psi}^{(N)}(t)[{\bf x}_0]$ contains the approximate 
trajectories of $\{\Psi_j\}_{j=1}^N$, a basis for~$\mathcal{G}_N$. 
The matrix that defines this linear ODE, ${\bf L}_N^{\rm T}$ is an 
$N$-dimensional approximation of the infinitesimal Koopman matrix in 
the basis $\{\Psi_j\}_{j=1}^N$. Since $\{\Psi_j\}_{j=1}^N$ is a 
basis for $\mathcal{G}_N$, the closest approximation $g_N\in \mathcal{G}_N$ 
to any observable $g\in\mathcal{G}$ can be obtained using $\Psi^{(N)}$ 
through the expansion $g \approx g_N = \sum_{j=1}^Nd_j\Psi_j$. 
% where $d_j=\braket{\Psi_j|g}{}$. 
% For example if $y_N\in\mathcal{G}_N$ where  $y_N({\bf x})$ is the closest approximation to 
% $y({\bf x})={\bf x}$ in $\mathcal{G}_N$, then an approximation to the traditional solution 
% to the ODE, ${\bf x}(t)$ can be obtained from the solution to the finite dimensional linear lifted dynamics. 
As $N\rightarrow\infty$, if $g_N(t)\rightarrow g(t)$, 
then the value of the observable obtained using the truncated lifted dynamics 
$g_N(t)[{\bf x}_0] = g_N({\bf x}(t))\rightarrow g(t)[{\bf x}_0] = g({\bf x}(t))$ converges to the desired value.\\

It is clear from the discussion above that 
the choice of $\mathcal{G}$ dictates 
which observables can be inferred by solving the lifted dynamics,
and the choice of the basis functions, $\Psi$ 
determine the structure of the infinitesimal 
Koopman operator. We now review how Carleman linearization is a special
case of Koopman linearization discussed above.
\begin{exmp}[Carleman Linearization]
Consider an ODE with a single dependent variable $x(t)$. 
Then Koopman linearization with the choice of polynomial functions as the 
space of observables, i.e. 
$\mathcal{G} = \mathbb{C}[x]$  and 
the choice of monomials as the basis functions, namely
$\Psi = \{\Psi_j:x \to x^j\}$ is equivalent to Carleman linearization. 
For example, consider the simple $n=1$ nonlinear ODE $\frac{dx}{dt} = -x^2$. 
Since $\Psi_j(t)[x] = \Psi_j(0)[x(t)] = (x(t))^j$,
from here onwards we write $\Psi_j =  x^j$.
Note that this simple equation conveys two important pieces of information.
One is the definition of the basis function itself, which maps
any given configuration $x \in \mathbb{C}$ to 
its observable value $x^j \in \mathbb{C}$. Second, it denotes that
$\Psi_j(t)$ acts on any initial configuration $x(0)$ to yield $(x(t))^j$.
Now we have 
\[
\frac{d\Psi_j}{dt} = \frac{dx^j}{dt} = jx^{j-1}\frac{dx}{dt}.
\]
Then using the given ODE, $\frac{dx}{dt} = -x^2$ we obtain
\[
\frac{d\Psi_j}{dt} = -jx^{j+1} = -j\Psi_{j+1}.
\]
We therefore have
% the $j^{\text{th}}$ column of ${\bf L}_N$ is 
\begin{equation} 
{\bf L}\Psi_j = -j \Psi_{j+1}.
\end{equation}
% \textit{represented in the basis $\Psi_j$ i.e $(0,\cdots,-j_{j+1},\cdots,0)^{\rm T}$}\\
Therefore, the given nonlinear ODE can be expressed as the
infinite-dimensional homogeneous linear ODE
\begin{equation}
\frac{d}{dt} \begin{pmatrix}
    x\\x^2 \\\vdots \\\vdots 
\end{pmatrix} = {\bf L}^{\rm T}\begin{pmatrix}
    x\\x^2 \\\vdots \\\vdots 
\end{pmatrix} = 
\begin{pmatrix}
0 & -1 \\
 & 0 & -2 \\
 & & 0 & \ddots \\
 & & & \ddots  
\end{pmatrix}
\begin{pmatrix}
    x\\x^2 \\\vdots \\\vdots 
\end{pmatrix}.
\end{equation}
After truncation according to \cref{prop:LifDy},
we get an $N$-dimensional linear ODE 
\begin{equation}
\label{Koopman_example1}
    \frac{d}{dt}{\Psi}^{(N)} = {\bf L}_N^{\rm T}{\Psi}^{(N)},
\quad \text{with} 
\quad
{\bf L}_N^{\rm T} = 
\begin{pmatrix}
0 & -1 \\
 & 0 & -2 \\
 & & 0 & \ddots \\
 & & & 0 & -(N{-}1) \\
 &&&&0\\
\end{pmatrix}.
\end{equation}
The truncation to $N$ basis functions now gives rise to error,
resulting in ${\Psi}_j^{(N)}(t) \ne \Psi_j(t)$
for $t>0$ in general.
This is precisely what we obtain by following the 
traditional process of Carleman linearization presented in Ref.~\cite{kowalski1991nonlinear, childs2021high}.
% represents the ODE as a linear system by introducing monomials as new variables; $g_j=x^j$ and taking their derivative. 
% \begin{equation}\frac{dg_j}{dt}=jx^{j-1}\dot{x}=-jx^{j+1} = -jg_{j+1}\end{equation}
% Truncating at $j=N$ the linear system of ODEs formed by the Carleman linearization process is identical to Eq. (\ref{Koopman_example1}) derived from the Koopman process using the monomial basis. 
The same conclusion can be drawn for $n>1$, in which case the basis functions 
are chosen to be monomials in $n$ variables.
% with the choice of Koopman basis functions $\Psi_j = \left[{\bf x}\right]^{\otimes j}$.
\end{exmp}

\subsection{Carleman-Fourier Linearization}

The Koopman representation extends Carleman Linearization by providing a procedure to linearize a nonlinear ODE in any choice of basis. Linearization using the Fourier basis $\Psi_j({\bf x}) = \left[e^{i\mathbf{x}}\right]^{\otimes j}$ has been referred to as Carleman-Fourier linearization \cite{chen2024carleman} and is the focus of this work. Here $\Psi_j$ denotes a set of
basis functions rather than a single basis function.
We now outline how the Koopman representation with the Fourier basis can be used to represent the rescaled $n$-dimensional nonlinear ODE given in \cref{eq:problem_1_rescaled} as a linear ODE. Following the procedure outlined in \cref{sec:koopman} with $\Psi_j({\bf x}) =[e^{i{\bf x}}]^{\otimes j}$ we show how to obtain ${\bf L}_N$ from \cref{prop:infKoop}.\\

Using the Hadamard (element-wise) product of matrices 
defined as $(A \circ B)_{ij} \coloneqq A_{ij} B_{ij}$,
the derivative of \( e^{i\mathbf{x}} \) can be expressed as
    \begin{equation}
    \label{eq:first_deri}
    \frac{d e^{i\mathbf{x}}}{dt}
    = i \begin{pmatrix}
        \dot{x}_1 e^{i x_1} \\
        \dot{x}_2 e^{i x_2} \\
        \vdots
    \end{pmatrix}
    = i\, \dot{\mathbf{x}} \circ e^{i \mathbf{x}}
    = i \left( F_0 + F_1 e^{i \mathbf{x}} \right) \circ e^{i \mathbf{x}}
    = i \begin{pmatrix}
        (F_0)_1 + \sum_j (F_1)_{1j} e^{i x_j} \\
        (F_0)_2 + \sum_j (F_1)_{2j} e^{i x_j} \\
        \vdots
    \end{pmatrix} \circ e^{i \mathbf{x}}.
    \end{equation}
Applying the Hadamard product gives
\begin{align}
 \frac{d e^{i\mathbf{x}}}{dt} &= i\begin{pmatrix}
    (F_0)_1e^{ix_1} + \sum_j (F_1)_{1j}e^{i(x_{j}+x_1)} \\(F_0)_2e^{ix_2} + \sum_j (F_1)_{2j}e^{i(x_{j}+x_2)}\\ \vdots
\end{pmatrix}\nonumber\\
&= i\widetilde{F}_1 [e^{i{\bf x}}]^{\otimes2}+i\widetilde{F}_0e^{i{\bf x}}\nonumber\\
&=i\widetilde{F}_1 \Psi_2+i\widetilde{F}_0\Psi_1,
\end{align}
where $\widetilde{F}_1$ and $\widetilde{F}_0$ are given by
\begin{equation}
\label{eq:F0_tilde} 
\widetilde{F}_0 = \text{diag}(F_0) =  \begin{pmatrix}
    (F_0)_1 &  & \\ & (F_0)_2 & \\& & \ddots \\ & & & (F_0)_n 
\end{pmatrix} \in \mathbb{C}^{n\times n},
\end{equation} 
\begin{equation}
\label{eq:F1_tilde} 
\widetilde{F}_1 = \left(
\begin{array}{c@{\qquad}c@{\qquad}c@{\qquad}c}
\begin{matrix}(F_1)_{11} & \cdots & (F_1)_{1n}\end{matrix} & & & \\
& \begin{matrix}(F_1)_{21} & \cdots & (F_1)_{2n}\end{matrix} & & \\
& & \ddots & \\
& & & \begin{matrix}(F_1)_{n1} & \cdots & (F_1)_{nn}\end{matrix}
\end{array}
\right) \in \mathbb{C}^{n\times n^2}\end{equation}
Here $\Psi_j(t)$ is defined as   
\begin{equation}
    \label{eq:psi_j}
        \Psi_j(t) = \left[e^{i\mathbf{x}(t)}\right]^{\otimes j},
    \end{equation}
    where \( \otimes \) denotes the tensor product. 
    Explicitly, for \( \mathbf{x}(t) \in \mathbb{C}^n \), the vector \( \Psi_j \in \mathbb{C}^{n^j} \) is given by:
    \begin{equation}
    \label{eq:tensM}
    \Psi_j = \left(
        e^{i j x_1},\,
        e^{i[(j-1)x_1 + x_2]},\,
        \dots,\,
        e^{i[x_1 + (j-1)x_n]},\,
        e^{i[x_2 + (j-1)x_1]},\,
        \dots,\,
        e^{i[(j-1)x_n + x_{n-1}]},\,
        e^{i j x_n}
    \right)^{\rm T}.
\end{equation}
As a concrete example, for \( \mathbf{x} \in \mathbb{C}^2 \) and \( j = 2 \), we obtain:
\begin{equation}
        \Psi_2 = \left( e^{i2x_1},\, e^{i(x_1 + x_2)},\, e^{i(x_1 + x_2)},\, e^{i2x_2} \right)^{\rm T} \in \mathbb{C}^{n^2}.
\end{equation}
The $j^{\text{th}}$ column of ${\bf L}_N$ is obtained from  
\begin{align}
\label{eq:Carl_line}
\frac{d\Psi_j}{dt} &= \frac{d[e^{i\mathbf{x}}]^{\otimes j}}{dt} \nonumber \\  
&=\frac{de^{i\mathbf{x}}}{dt}\otimes e^{i\mathbf{x}} \otimes \cdots \otimes e^{i\mathbf{x}} + \dots + e^{i\mathbf{x}}\otimes e^{i\mathbf{x}} \otimes \dots \otimes \frac{de^{i\mathbf{x}}}{dt} \nonumber \\ 
&= \left(i\widetilde{F}_1 \Psi_2 + \widetilde{F}_0 \Psi_1\right) e^{i\mathbf{x}} \otimes e^{i\mathbf{x}} \otimes \dots \otimes e^{i\mathbf{x}} + \dots +  e^{i\mathbf{x}}\otimes e^{i\mathbf{x}} \otimes\cdots \otimes  \left(i\widetilde{F}_1 \Psi_2 + \widetilde{F}_0 \Psi_1\right).
\end{align}
By splitting this summation into two terms, we get 
\begin{multline}
\frac{d\Psi_j}{dt} =
i\widetilde{F}_1 \Psi_2  \otimes e^{i\mathbf{x}} \otimes \dots \otimes e^{i\mathbf{x}} + \dots +  e^{i\mathbf{x}}\otimes e^{i\mathbf{x}} \otimes\cdots \otimes  i\widetilde{F}_1 \Psi_2 \\
+ \widetilde{F}_0 \Psi_1 \otimes e^{i\mathbf{x}} \otimes \dots \otimes e^{i\mathbf{x}} + \dots +  e^{i\mathbf{x}}\otimes e^{i\mathbf{x}} \otimes\cdots \otimes   \widetilde{F}_0 \Psi_1.
\end{multline}
By using the definition of $\Psi_j$, we can recast this equation as a recurrence relation
\begin{multline}
\label{eq:recurrence_long}
\frac{d\Psi_j}{dt} =
(i\widetilde{F}_1\otimes I^{\otimes (j-1)} + I\otimes i\widetilde{F}_1\otimes I^{\otimes(j-2)} + \dots + I^{\otimes (j-1)}\otimes i\widetilde{F}_1)\Psi_{j+1} \\
+ (i\widetilde{F}_0\otimes I^{\otimes (j-1)} + I\otimes i\widetilde{F}_0\otimes I^{\otimes(j-2)} + \dots + I^{\otimes (j-1)}\otimes i\widetilde{F}_0)\Psi_j.
\end{multline}
By defining $B^{(1)}_{j+1} \in \mathbb{C}^{n^j \times n^{j+1 }}$ and $B^{(0)}_{j} \in \mathbb{C}^{n^j \times n^{j}}$ as 
\begin{align}
\label{eq:carleman-assembly2}
B^{(1)}_{j+1} &= i\widetilde{F}_1\otimes I^{\otimes (j-1)} + I\otimes i\widetilde{F}_1\otimes I^{\otimes(j-2)} + \dots + I^{\otimes (j-1)}\otimes i\widetilde{F}_1  \nonumber\\
 B^{(0)}_{j} &=i\widetilde{F}_0\otimes I^{\otimes (j-1)} + I\otimes i\widetilde{F}_0\otimes I^{\otimes(j-2)} + \dots + I^{\otimes (j-1)}\otimes i\widetilde{F}_0,
\end{align}
\cref{eq:recurrence_long} can be written as 
\begin{equation}\label{eq:recurence_relation}
\frac{d\Psi_j}{dt}=   B_{j+1}^{(1)}\Psi_{j+1} + B_{j}^{(0)}\Psi_{j},
\end{equation}
Then by \cref{prop:LifDy}, the lifted dynamics is given by the following infinite system of linear ODEs 
\begin{equation}
\label{eq:Lineari_syst}
\frac{d}{dt}\begin{pmatrix}
    \Psi_1  \\
    \Psi_2   \\
     \vdots  \\
\end{pmatrix} = \begin{pmatrix}
    B_0^{(0)} & B_1^{(1)}  & 0 & 0 & \dots \\
    0 & B_1^{(0)} & B_2^{(1)} & 0 & \dots \\
     \vdots &  \vdots & \ddots &  \vdots & \dots 
\end{pmatrix}
\begin{pmatrix}
    \Psi_1  \\
    \Psi_2   \\
     \vdots 
\end{pmatrix},
\end{equation}
with initial data $\Psi({\bf x}_{\text{0}})$.
This procedure can also be viewed as a generalization of the Carleman linearization method used in Ref.~\cite{costa2025further, krovi2023improved, liu2021efficient} where in \cref{eq:Carl_line} we introduce Fourier basis functions instead of monomials as the change of variables.\\

The solution to the order $N$ truncated system $\Psi^{(N)}$ can be used to obtain a good approximation to $\Psi$. The truncated lifted state follows the equation
\begin{equation}
\label{eq:Lin_ODE}
\frac{d \Psi^{(N)}}{dt}={\bf L}_N^{\rm T}\Psi^{(N)},
\end{equation}
where
\begin{equation}
\label{eq:LN}
{\bf L}_N^{\rm T} = \begin{pmatrix}
    B_1^{(0)} & B_2^{(1)}  & 0 & 0 & \dots \\
    0 & B_2^{(0)} & B_3^{(1)} & 0 & \dots \\
     \vdots &  \vdots & \ddots &  \vdots & \dots \\
    0 & 0 & \cdots & 0 & B_N^{(0)}
\end{pmatrix},   
\end{equation}
with initial data $\Psi^{(N)}({\bf x}_{\text{0}})$.\\

Before moving to the solution of the resulting linear ODE in \cref{eq:Lin_ODE},
we modify this equation slightly to allow a simpler circuit.
The blocks in adjacent rows of both $\Psi^{(N)}$ and ${\bf L}_N^{\rm T}$
are of unequal sizes which creates complications in the design of the quantum circuit.
For this reason, similar to Ref.~\cite{an2023quantum}, we redefine these states and matrices 
by appending trivial entries as 
\begin{equation}\label{eq:padded_Psi_N}
    \Psi^{(N)}(t) = 
    \bigoplus_{j=1}^{N}{\bf e}_1^{\otimes N-j}\otimes \Psi^{(N)}_j(t) \in \mathbb{C}^{Nn^N},
\end{equation}
where ${\bf e}_1$ denotes the $n$-dimensional vector $(1,0,\cdots,0)^{\rm T}$ and
\begin{equation}\label{eq:Lin_ODE_with_zeros}
    [{\bf L}_N^{\rm T}]_{j,j} = \mathds{1}^{\otimes N-j} \otimes B_j^{(0)}\in\mathbb{C}^{n^N\times n^N}, \quad
    [{\bf L}_N^{\rm T}]_{j,j+1} = \mathds{1}^{\otimes N-j-1} \otimes  ({\bf e}_1\otimes B_{j+1}^{(1)})\in\mathbb{C}^{n^N\times n^N},
\end{equation}
where $\mathds{1}$ denotes the $n\times n$ identity matrix.
It is straightforward to verify that \cref{eq:Lin_ODE} remains valid with 
these redefinitions.

\subsection{Solution of linear differential equation by truncated Taylor series approach}

After reducing the rescaled nonlinear differential equation, \cref{eq:problem_1_rescaled}, to a linear one, \cref{eq:Lin_ODE},
we now discuss our approach to solving the resulting linear differential equation.
We use the truncated Taylor series approach, which has been explored 
previously~\cite{krovi2023improved}.
Following Ref.~\cite{krovi2023improved}, the $Nn^N$ dimensional linear system of ODEs can be reduced to the  $2mNn^N$ dimensional algebraic system of linear equations, i.e.,
\begin{equation}
\label{eq:linear_algebraic_system}
\mathcal{L}_{N,m}\mathcal{X}_{N,m} = \mathcal{B}_{N,m},   
\end{equation}
where 
\begin{equation}\label{eq:linear_algebraic_system_M}
    \underbrace{\begin{bmatrix}
    \openone & 0 & 0 & 0 & 0 & 0 &  0 \\
    -V & \ddots & 0 & 0 & 0 & 0 &  0 \\
    0 & \ddots & \openone & 0 & 0 & 0 & 0 \\
    0 & 0 & -V & \openone & 0 & 0 & 0 \\
    0 & 0 & 0 & -\openone & \openone & 0 & 0 \\
    0 & 0 & 0 & 0 & \ddots & \ddots & 0 \\
    0 & 0 & 0 & 0 & 0 & -\openone & \openone
\end{bmatrix}}_{\mathcal{L}_{N,m}}
\underbrace{\begin{bmatrix}
\Psi^{(N)}(0) \\ \Psi^{(N)}(h) \\ \vdots \\ \Psi^{(N)}((m-1)h) \\ \Psi^{(N)}(mh)
\\ \Psi^{(N)}(mh) \\ \vdots \\ \Psi^{(N)}(mh)
\end{bmatrix}}_{\mathcal{X}_{N,m}}
=
\underbrace{\begin{bmatrix}
\Psi^{(N)}(0) \\ {\bf 0} \\ \vdots \\ \vdots \\ \vdots \\ \vdots \\ {\bf 0}
\end{bmatrix}}_{\mathcal{B}_{N,m}}.
\end{equation}
Here, $h$ is the small time step, and $m$ is the number of steps such that $mh = T$.
 $m$ is chosen to be an integer power of $2$ and the number of extra copies of the final state $p$ is chosen to be $p=m$. We have 
\begin{equation}
\label{eq:V}
    V = e^{{\bf L}_N^{\rm T}h}.
\end{equation} 
In principle, one can obtain the solution history state 
$\mathcal{X}_{N,m}$
by inverting the matrix $\mathcal{L}_{N,m}$. However, since we do not have
access to $V$, we replace $V$ in $\mathcal{L}_{N,m}$ by 
a truncated Taylor series
\begin{equation}
\label{eq:V_k}
    V_{k} = \sum_{j=0}^{k}\frac{({\bf L}_N^{\rm T}h)^j}{j!},
\end{equation}
with $k$ the order of truncation. 
We denote this new matrix by $\mathcal{L}_{N,m,k}$, and
the corresponding vector approximating the solution history state
${\mathcal{X}}_{N,m,k}$. Then by definition, 
${\mathcal{X}}_{N,m,k}$ satisfies
\begin{equation}
\label{eq:TaylorSystem}
\mathcal{L}_{N,m,k}{\mathcal{X}}_{N,m,k} = \mathcal{B}_{N,m}.  
\end{equation}
We denote the approximated solution in ${\mathcal{X}}_{N,m,k}$ 
at $j$th time step by $\Phi_j$, and therefore $\Phi_j \approx \Psi^{(N)}(jh)$.
Finally, the Taylor-truncated system of equations can be written in the form
\begin{equation}
\label{eq:TaylorSystem_M}
    \underbrace{\begin{bmatrix}
    \openone & 0 & 0 & 0 & 0 & 0 &  0 \\
    -V_k & \ddots & 0 & 0 & 0 & 0 &  0 \\
    0 & \ddots & \openone & 0 & 0 & 0 & 0 \\
    0 & 0 & -V_k & \openone & 0 & 0 & 0 \\
    0 & 0 & 0 & -\openone & \openone & 0 & 0 \\
    0 & 0 & 0 & 0 & \ddots & \ddots & 0 \\
    0 & 0 & 0 & 0 & 0 & -\openone & \openone
\end{bmatrix}}_{\mathcal{L}_{N,m,k}}
\underbrace{\begin{bmatrix}
\Phi_0 \\ \Phi_1 \\ \vdots \\ \Phi_{m-1} \\ \Phi_m
\\ \Phi_m \\ \vdots \\ \Phi_m
\end{bmatrix}}_{{\mathcal{X}}_{N,m,k}}
=
\underbrace{\begin{bmatrix}
\Psi^{(N)}(0) \\ {\bf 0} \\ \vdots \\ \vdots \\ \vdots \\ \vdots \\ {\bf 0}
\end{bmatrix}}_{\mathcal{B}_{N,m}}.
\end{equation}
Here the dependence of the components $\Phi_j$ on $N$, $m$, $h$
is suppressed in the notation for brevity.
For sufficiently high value of $k$, the solution state 
${\mathcal{X}}_{N,m,k}$ is a good approximation of
$\mathcal{X}_{N,m}$.

\subsection{Reduction to expectation value of a unitary operator}\label{sec:Reduction to expectation value of a unitary operator}
We now discuss how to efficiently compute $f({\bf x})$ 
using the outcome state of the linear ODE solver.
In contrast to the previous work on quantum Carleman linearization, we are going to use the first $K$ components of the outcome state of the ODE solver.
We first redefine the vector ${\bf c}$ from \cref{eq:c_j} as  
\begin{equation}\label{eq:c_def}
    {\bf c} = \bigoplus_{j=1}^{N}{\bf e}_1^{\otimes N-j} \otimes {\bf c}_j  \in \mathbb{C}^{Nn^N},
\end{equation}
where for $j\leq K$ we have ${\bf c}_j=\{c_{\bf j}\,  | \, {\bf j}\in\mathbb{Z}^n_{j,+}\}\in\mathbb{C}^{n^j}$ and for all $j>K$ we have ${\bf c}_j={\bf 0}$. Now ${\bf c}$ has $D=\frac{n^{K+1}-n}{n-1}$ nontrivial terms $c_{\bf j}$ for ${\bf j} \in  \bigcup_{l=1}^K\mathbb{Z}^n_{l,+}$.
We begin by rewriting \cref{eq:rescaled_output_function} in terms of components 
$\Psi_j$ at final time $T$ as
\begin{equation}
\label{eq:exact}
f({\bf x}) = \sum_{j=1}^{K} {\bf c}_j\cdot \Psi_j(T).    
\end{equation}
For sufficiently high Koopman truncation order $N$, we have $\Psi_j \approx \Psi^{(N)}_j$.
Therefore, we can approximate $f({\bf x})$ by 
\begin{equation}
\label{eq:f^N}
f({\bf x}) = \sum_{j=1}^{K} {\bf c}_j\cdot \Psi_j(T) 
\approx \sum_{j=1}^{K} {\bf c}_j\cdot \Psi^{(N)}_j(T) = {\bf c}\cdot\Psi^{(N)}(T)  
\end{equation}
Furthermore, for sufficiently high Taylor truncation order $k$, we have 
$\Psi^{(N)}_j(T) \approx [\Phi_m]_j$. Therefore we can approximate ${\bf c}\cdot\Psi^{(N)}(T)$ by 
${\bf c}\cdot \Phi_m$. \\

To express ${\bf c}\cdot \Phi_m$ in terms of 
the solution history state ${\mathcal{X}}_{N,m,k}$, we first define
\begin{equation}\label{eq:mathcal_C}
\mathcal{C} = \frac{1}{m}\begin{bmatrix}
{\bf 0} & {\bf 0 }& \dots & {\bf 0}_m & {\bf c}
& {\bf c} & \dots & {\bf c}_{2m}
\end{bmatrix}^{\rm T}\in \mathbb{C}^{2mNn^N}.\end{equation} 
Then we have
\begin{equation}\label{eq:f_n_hat}
f({\bf x}) \approx {\bf c}\cdot\Phi_m 
    = \mathcal{C}\cdot {\mathcal{X}}_{N,m,k} = \mathcal{C}\cdot 
    \mathcal{L}_{N,m,k}^{-1}\mathcal{B}_{N,m}. 
\end{equation}

Our quantum circuit computes the expression 
$\mathcal{C}\cdot \mathcal{L}_{N,m,k}^{-1}\mathcal{B}_{N,m}$
by computing the expectation value of a unitary operator. For this purpose, we first express $\mathcal{C}\cdot \mathcal{L}_{N,m,k}^{-1}\mathcal{B}_{N,m}$
as an expectation value. 
Let $\mathcal{C}^{\rm T} = \alpha_C({}_1\!\bra{0}U_C^\dagger)$ 
and $\mathcal{B} = \alpha_\mathcal{B} U_\mathcal{B} \ket{0}_1$
for unitaries $U_\mathcal{C}$ and $U_\mathcal{B}$ acting on $(1+\log_2(m) + \log_2(N)+N\log_2 n)$-qubit register 1. 
Let $U_{\mathcal{L}^{-1}}$ be a block encoding of $\mathcal{L}_{N,m,k}^{-1}/\alpha_{\mathcal{L}^{-1}}$
with ancilla register 2, so that 
$\alpha_{\mathcal{L}^{-1}}({}_2\!\expval{0|U_{\mathcal{L}^{-1}}|0}_2) \approx \mathcal{L}_{N,m,k}^{-1}$.
Then we have
\begin{equation}\label{eq:reducation_to_exp_of_unitary}
    f({\bf x}) \approx \mathcal{C}\cdot \mathcal{L}_{N,m,k}^{-1}\mathcal{B}_{N,m}
    = \alpha_\mathcal{C}\alpha_{\mathcal{L}^{-1}}\alpha_\mathcal{B}
    \bra{0}U_\mathcal{C}^\dagger U_{\mathcal{L}^{-1}} U_\mathcal{B}\ket{0},
\end{equation}
where $\ket{0} = \ket{0}_1\ket{0}_2$. 
In our quantum circuit, we assume access to $U_\mathcal{C}$ is given, and we construct
$U_{\mathcal{L}^{-1}}$ and $U_\mathcal{B}$ using known techniques.

\section{Error analysis}
\label{sec:error-analysis}
In this section we obtain upper bounds on the errors that arise 
from truncation of the Koopman representation as well as from
truncation of the Taylor series in solving the linearized ODE.

\subsection{Analysis of errors arising from truncation of the linearized system
of differential equations}
% Dissipative condition of the nonlinear ODE and the Koopman error
% In this section, we first review some results on the norm of matrix exponential.
% We then 
In this section, we derive bounds on the error due to truncation of the linearized ODE,
first under dissipative conditions and then without dissipative conditions.

% \subsubsection{Norm of the Matrix Exponential}

\subsubsection{Truncation error under dissipative conditions}
We begin by establishing the conditions under which the rescaled nonlinear ODE problem given in \cref{eq:problem_1_rescaled} is dissipative with respect to the $p$-norm. Subsequently, we perform an error analysis of the linearization, deriving the conditions that ensure the stability of the resulting linearized ODE. The techniques used in this section are very close to those used in
Ref.~\cite{costa2025further}.

\begin{lemma}[Dissipativity in the  $p$-Norm]\label{lemm:dissp}
\label{lem:pnormdecrease}
Let $\{p,q\} \in [1,\infty)$ be such that $1/p+1/q=1$. 
Consider the rescaled ODE from \cref{eq:problem_1_rescaled}  where   
\begin{equation}
    \tilde{\mu}_0 := \min_j \Im{(F_0)_j}\ge 0 \quad \text{and} \quad
     R_p := \frac{\|F_1\|_{{\rm row},q} \|\Psi_1(0)\|_p}{\tilde{\mu}_0}<1.
\end{equation}
Then
\begin{equation}
    \|\Psi_1(t)\|_p\leq \|\Psi_1(0)\|_p.
\end{equation}
\end{lemma}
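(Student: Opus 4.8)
The plan is to track the scalar quantity $S(t) := \|\Psi_1(t)\|_p^p = \sum_j |w_j(t)|^p$, where I abbreviate $w_j(t) := (\Psi_1(t))_j = e^{ix_j(t)}$, and to show that $\dot S(t) \le 0$ whenever $\|\Psi_1(t)\|_p \le \|\Psi_1(0)\|_p$. A helpful structural feature is that each component $w_j = e^{ix_j}$ is an exponential and hence never vanishes, so $|w_j|^p$ is smooth in $t$ and no regularity issue arises at $w_j = 0$ (this is special to the Fourier basis and would not hold for monomials). From \cref{eq:first_deri}, the component-wise dynamics of the rescaled ODE in \cref{eq:problem_1_rescaled} read $\dot w_j = i\big((F_0)_j + \sum_k (F_1)_{jk} w_k\big) w_j$.

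First I would differentiate $|w_j|^p = (w_j \bar w_j)^{p/2}$ to get $\frac{d}{dt}|w_j|^p = p\,|w_j|^{p-2}\,\Re{(\bar w_j \dot w_j)}$. Substituting the dynamics gives $\Re{(\bar w_j \dot w_j)} = \Re{(i(F_0)_j)}\,|w_j|^2 + |w_j|^2\,\Re{(i\textstyle\sum_k (F_1)_{jk} w_k)}$. Using $\Re{(iz)} = -\Im{(z)}$, the first term equals $-\Im{(F_0)_j}\,|w_j|^2 \le -\tilde{\mu}_0\,|w_j|^2$ by the hypothesis $\Im{(F_0)_j} \ge \tilde{\mu}_0$, while the second term is controlled in absolute value by Hölder's inequality, $|\sum_k (F_1)_{jk} w_k| \le \|(F_1)_j\|_q\,\|\Psi_1\|_p \le \|F_1\|_{{\rm row},q}\,\|\Psi_1\|_p$, with $(F_1)_j$ the $j$th row. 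Summing over $j$ and using $\|\Psi_1\|_p = S^{1/p}$ then yields the differential inequality
\begin{equation}
\frac{dS}{dt} \;\le\; p\big(\|F_1\|_{{\rm row},q}\,S^{1/p} - \tilde{\mu}_0\big)\,S .
\end{equation}

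The remaining and most delicate step is a continuity (bootstrap) argument concluding $\|\Psi_1(t)\|_p \le \|\Psi_1(0)\|_p$ for all $t \ge 0$. The hypothesis $R_p < 1$ is precisely the statement that the bracket $\|F_1\|_{{\rm row},q}\,S(0)^{1/p} - \tilde{\mu}_0$ is strictly negative at $t=0$, so $S$ is initially strictly decreasing. I would argue by contradiction: if $\|\Psi_1\|_p$ ever exceeded its initial value, let $t^\star$ be the first time it returns to $\|\Psi_1(0)\|_p$ from below; there $S(t^\star)^{1/p} = S(0)^{1/p}$ keeps the bracket strictly negative, forcing $\dot S(t^\star) < 0$ and contradicting that $S$ is increasing through the level $S(0)$ at $t^\star$. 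Hence $S(t) \le S(0)$ throughout. The main obstacle is making this invariance argument fully rigorous — cleanly handling the boundary of the sublevel set $\{S \le S(0)\}$ — but since $S$ is continuously differentiable and the sign of $\dot S$ is governed monotonically by $S^{1/p}$ alone, a standard maximal-interval/comparison argument closes the gap.
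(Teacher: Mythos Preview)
Your proposal is correct and follows essentially the same approach as the paper: both compute $\frac{d}{dt}|e^{ix_j}|^p$ componentwise, bound the $F_0$-contribution below by $\tilde{\mu}_0$ and the $F_1$-contribution via H\"older (the paper routes this through $\|\widetilde{F}_1\|_p = \|F_1\|_{{\rm row},q}$ rather than applying H\"older directly to the row), and close with a continuity-based invariance argument. The paper phrases the invariance step as ``repeated application on short intervals $[0,\delta]$'' rather than your first-passage-time contradiction, but these are interchangeable standard arguments.
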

The proof can be found in \cref{app:error_analysis}. 
If the rescaled ODE from \cref{eq:problem_1_rescaled} satisfies 
$\tilde{\mu}_0\geq 0$ and $R_p<1$, then we say that the rescaled ODE
is {\it dissipative in the $p$-norm}.\\

Now we move to the error analysis due to the Carleman-Fourier linearization.
We first define the error due to truncation of the linearized system.
Consider the Carleman-Fourier linearization given in  \cref{eq:Lin_ODE}
of the rescaled ODE from \cref{eq:problem_1_rescaled}. 
We define the $k$th component of the Carleman-Fourier truncation error to be  
\begin{equation}
    \eta_k =\Psi_k - \Psi^{(N)}_k, \quad k = 1,\dots,N.
\end{equation}
The Carleman-Fourier truncation error component $\eta_k$ captures the difference between
the $k$th components of the solution of the exact and truncated linearized 
systems respectively.

\begin{theorem}[Infinite time Carleman-Fourier error]\label{lemma:inf_time_error_bound} 
Let the rescaled ODE from \cref{eq:problem_1_rescaled} be dissipative in the 
$p$-norm as given in \cref{lemm:dissp}, then the $k$th component of the Carleman-Fourier truncation error
satisfies
\begin{equation}
    \|\eta_k\|_p\leq  (\|\Psi_1(0)\|_p)^{N+1}  \left(\frac{\|F_1\|_{\rm{row},{q}}}{\tilde{\mu}_0}\right)^{N+1-k}.
\end{equation}
\end{theorem}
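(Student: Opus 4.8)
The plan is to treat the truncation as a localized source term perturbing the exact recurrence, and then to solve the resulting inhomogeneous system by variation of parameters (Duhamel). Comparing the exact recurrence \cref{eq:recurence_relation} with the truncated system \cref{eq:Lin_ODE}--\cref{eq:LN}, I would first observe that the two systems differ only in the top block: for $k<N$ both $\Psi_k$ and $\Psi^{(N)}_k$ obey $\dot y_k = B^{(0)}_k y_k + B^{(1)}_{k+1}y_{k+1}$, whereas at $k=N$ the truncated system drops the coupling $B^{(1)}_{N+1}\Psi_{N+1}$. Hence each error component solves $\dot\eta_k = B^{(0)}_k\eta_k + b_k(t)$ with $\eta_k(0)=0$, where $b_k = B^{(1)}_{k+1}\eta_{k+1}$ for $k<N$ and $b_N = B^{(1)}_{N+1}\Psi_{N+1}$. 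Because the linearized matrix is block upper-bidiagonal, this lets me work from the top block downward using only the diagonal propagator of $B^{(0)}_k$, and Duhamel's formula gives $\eta_k(t)=\int_0^t e^{B^{(0)}_k(t-s)}b_k(s)\,ds$, which I will bound in $p$-norm.

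The analysis rests on three estimates. First, since $\widetilde{F}_0 = \mathrm{diag}(F_0)$, the operator $B^{(0)}_k$ is diagonal, with the entry attached to a multi-index ${\bf j}$ (with $|{\bf j}|=k$) equal to $i\,{\bf j}\cdot F_0$, so that $e^{B^{(0)}_k(t-s)}$ has entries of modulus $e^{-({\bf j}\cdot \Im F_0)(t-s)}$; using $\Im (F_0)_l\ge\tilde{\mu}_0\ge0$ and $|{\bf j}|=k$ yields ${\bf j}\cdot\Im F_0\ge k\tilde{\mu}_0$, and since the induced $p$-norm of a diagonal matrix is the largest modulus of its entries, $\|e^{B^{(0)}_k(t-s)}\|_p\le e^{-k\tilde{\mu}_0(t-s)}$. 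Second, I would bound the off-diagonal coupling by $\|B^{(1)}_{k+1}\|_p\le k\,\|F_1\|_{{\rm row},q}$: each of the $k$ summands $I^{\otimes l}\otimes i\widetilde{F}_1\otimes I^{\otimes(k-1-l)}$ has $p$-norm $\|\widetilde{F}_1\|_p$ by multiplicativity of the induced $p$-norm over tensor products, and a Hölder estimate on the block structure of $\widetilde{F}_1$ gives $\|\widetilde{F}_1\|_p\le\|F_1\|_{{\rm row},q}$. Third, because the $p$-norm is multiplicative over tensor products of vectors, $\|\Psi_{N+1}(s)\|_p=\|\Psi_1(s)\|_p^{N+1}$, and \cref{lemm:dissp} gives $\|\Psi_1(s)\|_p\le\|\Psi_1(0)\|_p$, so $\|\Psi_{N+1}(s)\|_p\le\|\Psi_1(0)\|_p^{N+1}$ uniformly in $s$.

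With these in hand I would run a downward induction on $k$, proving $\|\eta_k(t)\|_p\le\|\Psi_1(0)\|_p^{k}R_p^{N+1-k}$ uniformly in $t$. For the base case $k=N$, combining the three estimates in Duhamel gives $\|\eta_N(t)\|_p\le N\|F_1\|_{{\rm row},q}\|\Psi_1(0)\|_p^{N+1}\int_0^t e^{-N\tilde{\mu}_0(t-s)}\,ds$, and bounding the integral by $1/(N\tilde{\mu}_0)$ cancels the factor $N$, leaving $\|\Psi_1(0)\|_p^{N}(\|F_1\|_{{\rm row},q}/\tilde{\mu}_0)=\|\Psi_1(0)\|_p^N R_p$. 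The inductive step is identical with $b_k=B^{(1)}_{k+1}\eta_{k+1}$: the bound $\|B^{(1)}_{k+1}\|_p\le k\|F_1\|_{{\rm row},q}$ together with $\int_0^t e^{-k\tilde{\mu}_0(t-s)}\,ds\le 1/(k\tilde{\mu}_0)$ again cancels $k$, turning the hypothesis $\|\eta_{k+1}\|_p\le\|\Psi_1(0)\|_p^{k+1}R_p^{N-k}$ into $\|\eta_k\|_p\le\|\Psi_1(0)\|_p^{k}R_p^{N+1-k}$, and substituting $R_p=\|F_1\|_{{\rm row},q}\|\Psi_1(0)\|_p/\tilde{\mu}_0$ recovers the stated bound. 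I expect the main obstacle to be the two operator estimates — in particular verifying $\|\widetilde{F}_1\|_p\le\|F_1\|_{{\rm row},q}$ by Hölder and the diagonal decay $\|e^{B^{(0)}_k(t-s)}\|_p\le e^{-k\tilde{\mu}_0(t-s)}$ — together with the bookkeeping that keeps every estimate uniform in time, since it is precisely the cancellation of the level-dependent factor $k$ against $1/(k\tilde{\mu}_0)$ that removes any $N$-dependent prefactor and produces the clean geometric decay in $N+1-k$.
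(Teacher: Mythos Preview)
Your proposal is correct and matches the paper's proof essentially step for step: both derive the error recurrence $\dot\eta_k=B^{(0)}_k\eta_k+B^{(1)}_{k+1}\eta_{k+1}$ (with source $B^{(1)}_{N+1}\Psi_{N+1}$ at $k=N$), apply Duhamel with the diagonal decay $\|e^{B^{(0)}_k(t-s)}\|_p\le e^{-k\tilde\mu_0(t-s)}$ and the coupling bound $\|B^{(1)}_{k+1}\|_p\le k\|\widetilde F_1\|_p$, and then iterate from $k=N$ downward so that the factor $k$ cancels against $1/(k\tilde\mu_0)$. Two minor notes: in your base case the intermediate expression should read $\|\Psi_1(0)\|_p^{N+1}(\|F_1\|_{{\rm row},q}/\tilde\mu_0)$ rather than $\|\Psi_1(0)\|_p^{N}(\cdots)$ (your final value $\|\Psi_1(0)\|_p^N R_p$ is nonetheless correct), and the paper actually establishes the equality $\|\widetilde F_1\|_p=\|F_1\|_{{\rm row},q}$ (\cref{lemma:F_1_p_F_1_row_q}) rather than only the inequality you anticipate.
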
  

\begin{proof}
From the definition of $\eta_k$ we have
\begin{align}
\frac{d \eta_k(t)}{dt} &= \left(\mathbf{L}^{\rm T}\Psi\right)_k - \left(\mathbf{L}^{\rm T}_N\widetilde{\Psi}\right)_k\nonumber\\ 
&= B_k^{(0)}\Psi_k + B_{k+1}^{(1)}\Psi_{k+1} - B_k^{(0)}\widetilde{\Psi}_k - B_{k+1}^{(1)}\widetilde{\Psi}_{k+1}\nonumber\\
&= B_k^{(0)}\eta_k + B_{k+1}^{(1)}\eta_{k+1}\quad \text{for }  k\in [1,N-1],
\end{align}
where $\mathbf{L}^{\rm T}$ is the Carleman-Fourier matrix before the truncation. Now  for $k=N$, we have
\begin{equation}
\frac{d \eta_N(t)}{dt} = B_N^{(0)}\eta_N + B_{N+1}^{(1)}\Psi_{N+1}.   
\end{equation}
%\begin{align}\dot{{\bf u}}_k&= B_k^{(0)}{\bf u}_k+B_{k+1}^{(1)}{\bf u}_{k+1} \quad \text{for $k\in[1,N-1]$},\nonumber\\\dot{{\bf u}}_N&= B_N^{(0)}{\bf u}_N - B_{N+1}^{(1)}\Psi_{N+1},\end{align} 
Starting from the last equation, since 
$\eta_k(0)={\bf 0}$ we have
\begin{equation}
\eta_N(t) = \int_0^t e^{B_N^{(0)}(t-s)}B_{N+1}^{(1)}\Psi_{N+1}(s) ds.
\end{equation}
Since $B_{k}^{(0)}$ in \cref{eq:carleman-assembly2} is a diagonal matrix,
we have 
\begin{equation}
\Re{((B^{(0)}_{k})_{ii})} \leq -k\tilde{\mu}_0,\quad \forall j \in \{1, \cdots, n^k\},     
\end{equation}
and 
\begin{equation}
\|e^{B^{(0)}_N(t-s)}\| \leq e^{-k\tilde{\mu}_0(t-s)}.    
\end{equation}

Now we can bound 
$\|\eta_N(t)\|_p$ as
\begin{align}
\label{eq:n_N}
\|\eta_{N}(t)\|_p &\le \big\|B_{N+1}^{(1)}\big\|_p\max_s \big\|\Psi_{N+1}(s)\big\|_p 
\int_0^t \big\|e^{B^{(0)}_N(t-s)}\big\|_p ds\nonumber\\
&\le  \big\|B_{N+1}^{(1)}\big\|_p\max_s \big\|\Psi_{N+1}(s)\big\|_p 
\int_0^t e^{-N\tilde{\mu}_0(t-s)}ds\nonumber\\
&\le \frac{N\big\|\widetilde{F}_1\big\|_p 
 \max_s\|\Psi_1(s)\|_p^{N+1}}{\tilde{\mu}_0 N} \nonumber\\
 &\le \frac{\big\|\widetilde{F}_1\big\|_p \|\Psi_1(0)\|_p^{N+1}}{\tilde{\mu}_0} .
\end{align}
Here we used  $\big\|B_{N+1}^{(1)}\big\|_p\leq N\big\|\widetilde{F}_1\big\|_p$, 
$\norm{\Psi_k}_p = (\norm{\Psi_1}_p)^k$, and \cref{lemm:dissp}. Now for $1\leq k\leq N-1$,
\begin{equation}
    \eta_k(t) = -\int_0^t e^{B_{k}^{(0)}(t-s)} \left(B_{k+1}^{(1)}\eta_{k+1}\right) ds.
\end{equation}
Now we may bound $\|\eta_k(t)\|_p$ as
\begin{align}
   \|\eta_k(t)\|_p &\leq \int_0^t \big\|e^{B_k^{(0)}(t-s)} \big\|_p \left\|B_{k+1}^{(1)}\eta_{k+1} \right \|_p ds \nonumber\\
    &\leq \int_0^t \big\|e^{-k\tilde{\mu}_0(t-s)}\big\| 
  \|B_{k+1}^{(1)}\|_p \|\eta_{k+1}\|_pds \nonumber\\
  &\leq \frac{\big\|\widetilde{F}_1\big\|_p \max_s\|\eta_{k+1}(s)\|_p}{\tilde{\mu}_0}.
\end{align}
We can now similarly bound $\|\eta_{k+1}\|_p$ in terms of 
$\|\eta_{k+2}\|_p$, and so on and so forth, till we obtain a bound in terms
of $\|\eta_N\|$, which is
\begin{align}
\max_t\|\eta_k(t)\|_p&\leq \frac{(\|\widetilde{F}_1\|_p)^{(N-k)} 
\max_t\|\eta_{N}(t)\|_p}{\tilde{\mu}_0^{N-k}}\nonumber\\
&\le \frac{(\|\widetilde{F}_1\|_p)^{N-k}}{\tilde{\mu}_0^{N-k}}\left(\frac{\|\widetilde{F}_1\|_p (\|\Psi_1(0)\|_p)^{N+1}}{\tilde{\mu}_0 }\right)\nonumber\\
&\le \frac{(\|\widetilde{F}_1\|_p)^{N+1-k}(\|\Psi_1(0)\|_p)^{N+1}}{\tilde{\mu}_0^{N+1-k}}\nonumber\\
&=\frac{(\|F_1\|_{\text{row},q})^{N+1-k}(\|\Psi_1(0)\|_p)^{N+1}}{\tilde{\mu}_0^{N+1-k}}.
\end{align}
Here we have used \cref{lemma:F_1_p_F_1_row_q} in the last equality.
\end{proof}

\subsubsection{Truncation error without dissipative conditions}
We now derive a similar truncation-error bound in the absence of dissipative conditions
on the ODE in \cref{eq:problem_1_rescaled}.
We first determine the time interval over which the norm of the solution of the 
rescaled nonlinear ODE in \cref{eq:problem_1_rescaled} 
is upper bounded with respect to the $p$-norm. Within this time interval interval, we then carry out an error analysis of the linearization and derive the conditions required to guarantee the stability of the resulting linearized ODE. The techniques used in this section are close to those used in
Ref.~\cite{chen2024carleman}.

\begin{lemma}[Upper Bounded in  $p$-Norm]
\label{lemma:finitetimedecreasingnorm}
Let $\{p,q\} \in [1,\infty)$ be such that $1/p+1/q=1$ and $r >1$. Consider the rescaled ODE from \cref{eq:problem_1_rescaled}  where  
\begin{equation}
    \Lambda_p := \max\left\{\|F_0\|_{\infty},\norm{{F}_1}_{{\rm row},q}\right\} \quad \text{and} \quad
     \norm{\Psi_1(0)}_p < 1/r.
\end{equation}
Then for $t\in [0,T_r]$,
\begin{equation}
    \|\Psi_1(t) \|_p \leq \frac{1}{r},
\end{equation}
where
\begin{equation}
    T_{r} := \frac{1}{\Lambda_p(1+1/r)}\ln \left(\frac{1}{r\|\Psi_1(0)\|_p}\right).
\end{equation}
\end{lemma}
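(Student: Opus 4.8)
The plan is to control the single scalar quantity $w(t) := \|\Psi_1(t)\|_p = \|e^{i\mathbf{x}(t)}\|_p$ through a differential inequality, and then to close the estimate with a continuity (bootstrap) argument. First I would write $y_j = e^{ix_j}$ so that $\Psi_1 = \mathbf{y}$, and differentiate along the rescaled ODE $\dot{\mathbf{x}} = F_0 + F_1 e^{i\mathbf{x}}$ to get $\dot{y}_j = i y_j\big((F_0)_j + \sum_k (F_1)_{jk} y_k\big)$. Since $y_j = e^{ix_j}$ never vanishes, $|y_j| > 0$ throughout and $w$ is smooth wherever the solution exists; using $\frac{d}{dt}|y_j| = \Re(\bar{y}_j \dot{y}_j)/|y_j|$ together with $\Re(iz) = -\Im(z)$ I obtain the clean identity $\frac{d}{dt}|y_j| = -|y_j|\,\Im\big((F_0)_j + \sum_k (F_1)_{jk}y_k\big)$.

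Next I would bound the right-hand side. Taking absolute values, the $F_0$ term is at most $\|F_0\|_\infty$, and H\"older's inequality with the conjugate exponents $p,q$ gives $\big|\sum_k (F_1)_{jk}y_k\big| \le \|(F_1)_j\|_q \|\mathbf{y}\|_p \le \|F_1\|_{\mathrm{row},q}\, w$, which is exactly where the hypothesis $1/p + 1/q = 1$ enters. Feeding this into the $p$-norm derivative $\dot{w} = w^{1-p}\sum_j |y_j|^{p-1}\frac{d}{dt}|y_j|$ and collapsing $\sum_j |y_j|^p = w^p$ yields the logistic-type differential inequality $\dot{w} \le w\big(\|F_0\|_\infty + \|F_1\|_{\mathrm{row},q}\, w\big) \le \Lambda_p\, w(1+w)$.

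The decisive observation, which keeps the estimate linear, is that as long as $w \le 1/r$ we have $1 + w \le 1 + 1/r$, so on that region $\dot{w} \le \Lambda_p(1+1/r)\,w$, and a Gr\"onwall argument gives $w(t) \le w(0)\,e^{\Lambda_p(1+1/r)t}$. Evaluating at $t = T_r$ and substituting the definition of $T_r$ makes the exponent collapse to precisely $w(0)\cdot\frac{1}{r\,w(0)} = 1/r$; this is what pins down the exact form of $T_r$. I would then finish with a standard continuity step: let $T^\ast$ be the supremum of times up to which $w \le 1/r$, which is positive since $w(0) < 1/r$, and as long as $w$ stays bounded by $1/r$ the right-hand side of the $\mathbf{x}$-ODE is bounded so the solution extends. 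If $T^\ast < T_r$, continuity forces $w(T^\ast) = 1/r$, yet the linear bound gives $w(T^\ast) \le w(0)e^{\Lambda_p(1+1/r)T^\ast} < 1/r$, a contradiction; hence $T^\ast \ge T_r$ and the claim holds on $[0,T_r]$.

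The main obstacle I anticipate is not any individual estimate but making the bootstrap logically airtight: the linear differential inequality is only valid inside the region $\{w \le 1/r\}$ that we are trying to establish, so one must combine continuity of $w$ with a non-blow-up argument for $\mathbf{x}(t)$ (available precisely because $|y_j| \le w$ remains bounded, keeping $\dot{\mathbf{x}}$ bounded) to rule out escape before $T_r$. A minor but worthwhile point is to record that $w(0) = \|\Psi_1(0)\|_p > 0$ since $e^{i\mathbf{x}_0}$ has no zero entries, so the logarithm defining $T_r$ is well posed and $w$ is differentiable throughout.
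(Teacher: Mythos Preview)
Your argument is correct and follows essentially the same route as the paper: both derive the integral/differential inequality $\|\Psi_1(t)\|_p \le \|\Psi_1(0)\|_p + \Lambda_p\int_0^t \|\Psi_1\|_p(1+\|\Psi_1\|_p)\,ds$, linearize it on the region $\{\|\Psi_1\|_p \le 1/r\}$, apply Gr\"onwall, and close with the continuity/bootstrap argument. The only cosmetic difference is that the paper reaches the inequality via the operator norms of $B_1^{(0)}$ and $B_2^{(1)}$ acting on $\Psi_1,\Psi_2$, whereas you compute $\frac{d}{dt}|y_j|$ componentwise and invoke H\"older directly; the resulting estimates coincide.
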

\begin{proof} See \cref{app:error_analysis}.
\end{proof}

\begin{theorem}[Finite-time Carleman-Fourier error]
\label{thm:finitetimebound} Consider the rescaled ODE from \cref{eq:problem_1_rescaled}  with its Carleman-Fourier linearization given in  \cref{eq:Lin_ODE}. Let the  ODE be upper bounded on a finite time interval as in \cref{lemma:finitetimedecreasingnorm} i.e $\norm{\Psi_1(0)}_p < 1/r$ so that $\norm{\Psi_1(t)}_p \leq 1/r$ for $t\in[0,T_r]$ with $r>1$ and $T_r$ as given in \cref{lemma:finitetimedecreasingnorm}. Then, the error of the solution is given by $\eta_k =\Psi_k - \Psi^{(N)}_k$ at the order $k \in [N]$ due to Carleman-Fourier truncation $N\geq 2$ and the error vector $\eta =\begin{pmatrix}
    \eta_1, \eta _2, \cdots , \eta_N
\end{pmatrix}^{\rm T}$ is upper bounded by 
\begin{equation}
    \|\eta(t)\|_p \le \frac{1}{r}
    \left(\frac{e^{\left(\|F_0\|_{\infty} + \norm{F_1}_{{\rm row},q}\right)t}}{r}\right)^{N}.
\end{equation}
\end{theorem}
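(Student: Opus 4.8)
The plan is to isolate the error as the solution of an inhomogeneous linear ODE driven only by the truncated block, and then to control its growth through a Duhamel estimate. Writing $\Psi_{[N]} := (\Psi_1,\dots,\Psi_N)^{\rm T}$, the recurrence \cref{eq:recurence_relation} shows that both $\Psi_{[N]}$ and $\Psi^{(N)}$ evolve under ${\bf L}_N^{\rm T}$, the only discrepancy being that the exact block $N$ couples to $\Psi_{N+1}$ whereas the truncated one does not. Hence, with $\eta = \Psi_{[N]} - \Psi^{(N)}$, one obtains
\begin{equation}
    \frac{d\eta}{dt} = {\bf L}_N^{\rm T}\eta + {\bf b}(t), \qquad \eta(0) = {\bf 0},
\end{equation}
where the source ${\bf b}(t)$ has only its last block nonzero and equal to $B_{N+1}^{(1)}\Psi_{N+1}(t)$. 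Since padding by ${\bf e}_1$ leaves all block $p$-norms unchanged, I may work with these natural blocks. Duhamel's formula then gives $\eta(t) = \int_0^t e^{{\bf L}_N^{\rm T}(t-s)}{\bf b}(s)\,ds$, so that $\|\eta(t)\|_p \le \int_0^t \|e^{{\bf L}_N^{\rm T}(t-s)}\|_p\,\|{\bf b}(s)\|_p\,ds$.

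The heart of the proof is a bound on $\|e^{{\bf L}_N^{\rm T}\tau}\|_p$. I would split ${\bf L}_N^{\rm T} = D + S$, where $D$ collects the diagonal blocks $B_j^{(0)}$ and $S$ is the strictly upper block-bidiagonal part carrying the couplings $B_{j+1}^{(1)}$. Each $B_j^{(0)}$ is diagonal with $\Re\big((B_j^{(0)})_{ii}\big)\le j\|F_0\|_\infty\le N\|F_0\|_\infty$, so the block-diagonal flow satisfies $\|e^{D\tau}\|_p \le e^{N\|F_0\|_\infty\tau}$; and since $\|B_{j+1}^{(1)}\|_p \le j\|\widetilde{F}_1\|_p = j\|F_1\|_{{\rm row},q}$ by \cref{lemma:F_1_p_F_1_row_q} and each block-row of $S$ has a single nonzero block, one gets $\|S\|_p \le N\|F_1\|_{{\rm row},q}$. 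Crucially $S$ is nilpotent ($S^N=0$), so the Dyson expansion of $e^{(D+S)\tau}$ terminates after $N$ terms; bounding the $m$-th ordered-integral term by $e^{N\|F_0\|_\infty\tau}\|S\|_p^m\tau^m/m!$ (the diagonal exponentials telescope over a total time $\tau$) and summing yields
\begin{equation}
    \big\|e^{{\bf L}_N^{\rm T}\tau}\big\|_p \le e^{N\|F_0\|_\infty\tau}\sum_{m=0}^{N-1}\frac{\big(N\|F_1\|_{{\rm row},q}\,\tau\big)^m}{m!} \le e^{N\left(\|F_0\|_\infty+\|F_1\|_{{\rm row},q}\right)\tau}.
\end{equation}

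Finally I would bound the source and integrate. Using $\|B_{N+1}^{(1)}\|_p \le N\|F_1\|_{{\rm row},q}$, multiplicativity of the $p$-norm under tensor products ($\|\Psi_{N+1}(s)\|_p = \|\Psi_1(s)\|_p^{\,N+1}$), and the finite-time bound $\|\Psi_1(s)\|_p \le 1/r$ from \cref{lemma:finitetimedecreasingnorm} on $[0,T_r]$, I get $\|{\bf b}(s)\|_p \le N\|F_1\|_{{\rm row},q}\,(1/r)^{N+1}$. Substituting the two estimates into the Duhamel bound and evaluating $\int_0^t e^{N(\|F_0\|_\infty+\|F_1\|_{{\rm row},q})(t-s)}\,ds$, the prefactor $N\|F_1\|_{{\rm row},q}/\big(N(\|F_0\|_\infty+\|F_1\|_{{\rm row},q})\big)\le 1$ absorbs cleanly, leaving $\|\eta(t)\|_p \le (1/r)^{N+1}e^{N(\|F_0\|_\infty+\|F_1\|_{{\rm row},q})t}$, which is exactly the claimed bound.

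The main obstacle is the exponential estimate on ${\bf L}_N^{\rm T}$: without dissipativity the diagonal blocks no longer contract, so the decaying-exponential argument of \cref{lemma:inf_time_error_bound} is unavailable. The growth must instead be tracked simultaneously from the diagonal part (contributing $e^{N\|F_0\|_\infty\tau}$) and from the nilpotent coupling $S$ (whose terminating Dyson series contributes $e^{N\|F_1\|_{{\rm row},q}\tau}$), and it is precisely the termination of that series together with the time-telescoping of the ordered integrals that prevents the bound from blowing up with $N$.
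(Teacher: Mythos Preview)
Your proof is correct and follows the same overall scaffold as the paper: derive the inhomogeneous error ODE with source supported on the last block, apply Duhamel, bound the source via $\|B_{N+1}^{(1)}\|_p\|\Psi_{N+1}\|_p \le N\|F_1\|_{{\rm row},q}(1/r)^{N+1}$, bound the propagator $\|e^{{\bf L}_N^{\rm T}\tau}\|_p$, and integrate so that the prefactor $N\|F_1\|_{{\rm row},q}/\big(N(\|F_0\|_\infty+\|F_1\|_{{\rm row},q})\big)\le 1$ cancels.

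The one genuine difference is how you bound the propagator. You split ${\bf L}_N^{\rm T}=D+S$ into block-diagonal and nilpotent parts and run a terminating Dyson expansion, telescoping the diagonal exponentials over total time $\tau$. The paper instead uses the one-line estimate $\|e^{{\bf L}_N^{\rm T}\tau}\|_p\le e^{\|{\bf L}_N^{\rm T}\|_p\tau}$ together with \cref{lemma:LN_le_F}, which gives $\|{\bf L}_N^{\rm T}\|_p\le N(\|\widetilde{F}_0\|_p+\|\widetilde{F}_1\|_p)=N(\|F_0\|_\infty+\|F_1\|_{{\rm row},q})$ by a triangle-inequality argument on $H_0+H_1$. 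Both routes land on exactly the same exponential bound, so your Dyson machinery---while correct and instructive in separating the diagonal growth from the nilpotent coupling---is more work than needed here; the paper's version is the more economical proof. Your decomposition would pay off only if one wanted to exploit additional structure (e.g.\ sign information on $\Re(D_{ii})$) that the crude $\|e^A\|\le e^{\|A\|}$ bound discards, but in the present non-dissipative regime no such gain materializes.
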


\begin{proof} 
From the definition of $\eta_k$ we have
\begin{align}
\frac{d \eta_k(t)}{dt} &= \left(\mathbf{L}^{\rm T}\Psi\right)_k - \left(\mathbf{L}^{\rm T}_N\widetilde{\Psi}\right)_k\nonumber\\ 
&= B_k^{(0)}\Psi_k + B_{k+1}^{(1)}\Psi_{k+1} - B_k^{(0)}\widetilde{\Psi}_k - B_{k+1}^{(1)}\widetilde{\Psi}_{k+1}\nonumber\\
&= B_k^{(0)}\eta_k + B_{k+1}^{(1)}\eta_{k+1}\quad \text{for }  k\in [1,N-1],
\end{align}
where $\mathbf{L}^{\rm T}$ is the Carleman-Fourier matrix before the truncation. This can be written as
\begin{equation}
     \frac{d\eta}{dt} = {\bf L}_N^{\rm T}\eta + v,
     \quad v = 
     \begin{bmatrix}
         0 \\ \vdots \\ 0 \\ B_{N+1}^{(1)}\Psi_{N+1}
     \end{bmatrix},
\end{equation}
with $\eta(0) = 0$. 
Therefore we have
\begin{equation}
    \eta(t) = \int_{0}^{t} e^{{\bf L}_N^{\rm T}(t-s)}v(s)ds.
\end{equation}
Taking the norm of the equation and by using properties of the norm, we get
\begin{equation}
    \norm{\eta(t)}_p \le \int_{0}^{t} \norm{e^{{\bf L}_N^{\rm T}(t-s)}}_p\norm{v(s)}_pds.
\end{equation}
We have
\begin{equation}
    \norm{e^{{\bf L}_N^{\rm T}(t-s)}}_p \le e^{\norm{{\bf L}_N^{\rm T}}_p(t-s)} \le 
    e^{N\left(\norm{\widetilde{F}_0}_p + \norm{\widetilde{F}_1}_p\right)(t-s)}
\end{equation}
where we have used \cref{lemma:LN_le_F}, and from $\big\|B_{N+1}^{(1)}\big\|_p\leq N\big\|\widetilde{F}_1\big\|_p$ we obtain
\begin{equation}
    \norm{v(s)}_p \le \norm{B_{N+1}^{(1)}}_p\norm{\Psi_{N+1}}_p \le 
    N\norm{\widetilde{F}_1}_p\left(\frac{1}{r}\right)^{N+1}.
\end{equation}
Substituting these in the inequality for $\norm{\eta(t)}_p$ yields
\begin{align}
    \norm{\eta(t)}_p &\le 
    N\norm{\widetilde{F}_1}_p\left(\frac{1}{r}\right)^{N+1}
    \int_{0}^{t}e^{N\left(\norm{\widetilde{F}_0}_p + \norm{\widetilde{F}_1}_p\right)(t-s)}ds \nonumber\\
    &= \frac{\norm{\widetilde{F}_1}_p}{\norm{\widetilde{F}_0}_p+\norm{\widetilde{F}_1}_p}
    \left(\frac{1}{r}\right)^{N+1}
    \left(e^{N\left(\norm{\widetilde{F}_0}_p + \norm{\widetilde{F}_1}_p\right)t}-1\right)\nonumber\\
    &\le \left(\frac{1}{r}\right)^{N+1}
    e^{N\left(\norm{\widetilde{F}_0}_p + \norm{\widetilde{F}_1}_p\right)t}\nonumber\\
    &= \frac{1}{r}
    \left(\frac{e^{\left(\norm{\widetilde{F}_0}_p + \norm{\widetilde{F}_1}_p\right)t}}{r}\right)^{N}.
\end{align}
The final result follows from \cref{lemma:F_1_p_F_1_row_q} and using $\|\widetilde{F}_0\|_{p}=\|F_0\|_{\infty}$.
\end{proof}

\begin{remark}
 The truncation bound in \cref{thm:finitetimebound} converges for large $N$ if $t < \frac{\ln{r}}{\norm{\widetilde{F}_0}_p + \norm{\widetilde{F}_1}_p}$, however, this time can occur outside the time interval in which the bound is valid i.e there exists values for $r$ such that $\frac{\ln{r}}{\norm{\widetilde{F}_0}_p + \norm{\widetilde{F}_1}_p}\leq T_r$. Therefore, the finite time Carleman-Fourier linearization is valid for $t\in[0,T_{\max}]$ where
\begin{align}
    T_{\max} &= \min\left\{T_r, \frac{\ln{r}}{\norm{\widetilde{F}_0}_p + \norm{\widetilde{F}_1}_p}\right\} \nonumber\\
    &= \min\left\{\frac{1}{\max\left\{\norm{F_0}_{\infty},
    \norm{F_1}_{{\rm row},q}\right\}(1+1/r)}\ln \left(\frac{1}{r\|\Psi_1(0)\|_p}\right), 
    \frac{\ln{r}}{\norm{F_0}_{\infty}+
    \norm{F_1}_{{\rm row},q}}\right\}.
\end{align}
\end{remark}

\subsection{Errors from truncation of the Taylor series}

In this section we derive bounds on errors due to truncation of the 
Taylor series to order $k$, equivalently due to replacement of $V$
in \cref{eq:V} by $V_k$. We begin by defining a quantity that tells us how the solution of a linear ODE grows. 
\begin{definition}\label{def:C(A)} For $p \in [1,\infty)$ and $A\in \mathbb{C}^{n\times n}$,
    \begin{equation} 
        C_p(A)=\sup_{t\in[0,T]} \norm{\exp(At)}_p.
    \end{equation}
\end{definition}
For $p=2$, we also use the notation $C_2(A) = C(A)$. Next, we bound the remainder due to truncation of the Taylor series.
\begin{lemma}[Taylor remainder bound]\label{lemma:V_ke^-Lh_bound}
   For \cref{eq:TaylorSystem} let $\|{\bf L}_N^{\rm T}\|_ph\leq 1$ and $me^2/(k+1)!\leq 1$. Then for all $j\leq m,$
    \begin{equation}
        \norm{\left(V_k^{j}-e^{{\bf L}_N^{\rm T}hj}\right)e^{-{\bf L}_N^{\rm T}hj} }_p  \leq \frac{(e-1)je^2}{(k+1)!},
    \end{equation}
where $V_k$ is a $k$th-order truncated Taylor series of $V = e^{{\bf L}_N^{\rm T}h}$, i.e., 
\begin{equation}
V_{k} = \sum_{i=0}^{k}\frac{({\bf L}_N^{\rm T}h)^i}{i!}.
\end{equation}
\end{lemma}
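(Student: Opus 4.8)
The plan is to exploit that both $V=e^{{\bf L}_N^{\rm T}h}$ and its truncation $V_k$ are functions of the single matrix $A:={\bf L}_N^{\rm T}h$, so they commute with one another and with $V^{-1}=e^{-A}$. Setting $W:=V_k V^{-1}=V_k e^{-{\bf L}_N^{\rm T}h}$, commutativity gives $V_k^{\,j}V^{-j}=(V_kV^{-1})^j=W^j$, so the quantity to be bounded collapses to $\norm{W^j-I}_p$. I would then invoke the telescoping identity $W^j-I=(W-I)\sum_{l=0}^{j-1}W^l$ to obtain
\[
\norm{W^j-I}_p\le\norm{W-I}_p\sum_{l=0}^{j-1}\norm{W}_p^{\,l},
\]
reducing the task to bounding the single-step deviation $\norm{W-I}_p$ and the growth factor $\norm{W}_p$.

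For the single-step deviation, note that $W-I=(V_k-V)e^{-A}=-Re^{-A}$, where $R=V-V_k=\sum_{i=k+1}^{\infty}A^i/i!$ is the Taylor remainder. Submultiplicativity together with the hypothesis $\norm{A}_p=\norm{{\bf L}_N^{\rm T}}_ph\le1$ gives $\norm{R}_p\le\sum_{i=k+1}^{\infty}1/i!$ and $\norm{e^{-A}}_p\le e^{\norm{A}_p}\le e$. The one genuinely nontrivial elementary estimate is
\[
\sum_{i=k+1}^{\infty}\frac{1}{i!}\le\frac{e-1}{(k+1)!},
\]
which I would establish by factoring out $1/(k+1)!$ and comparing each residual reciprocal product $[(k+2)(k+3)\cdots]^{-1}$ termwise against $1/(m+1)!$, whose sum is $e-1$. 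This yields $\delta:=\norm{W-I}_p\le (e-1)e/(k+1)!$.

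Finally, since $\norm{W}_p\le1+\delta$, every term in the geometric sum satisfies $\norm{W}_p^{\,l}\le(1+\delta)^{m}\le e^{m\delta}$ for $l\le j-1\le m-1$. Here the second hypothesis $me^2/(k+1)!\le1$ enters decisively: it forces $m\delta\le(e-1)e\,m/(k+1)!\le(e-1)/e=1-1/e<1$, so $\norm{W}_p^{\,l}\le e$ and hence $\sum_{l=0}^{j-1}\norm{W}_p^{\,l}\le je$. Combining this with the bound on $\delta$ gives
\[
\norm{W^j-I}_p\le\delta\cdot je\le\frac{(e-1)je^2}{(k+1)!},
\]
which is the claim.

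I expect the main obstacle to be conceptual rather than computational: recognizing that $V_k$ and $V$ commute is what permits the clean reduction to $\norm{W^j-I}_p$; without it the cross terms arising in the telescoping of $V_k^{\,j}-V^j$ would not collapse and the right factor $e^{-{\bf L}_N^{\rm T}hj}$ would fail to distribute neatly across the sum. The remaining difficulty is purely bookkeeping — the two hypotheses must be invoked at precisely the right places so that the accumulated constants assemble into the factor $e^2$ rather than something larger.
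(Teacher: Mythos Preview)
Your proposal is correct and follows essentially the same approach as the paper: both set $W=V_ke^{-{\bf L}_N^{\rm T}h}$, use commutativity to reduce to $\norm{W^j-I}_p$, telescope via $(W-I)\sum_{l=0}^{j-1}W^l$, and bound the single-step deviation using the Taylor tail times $\norm{e^{-A}}_p\le e$. The only cosmetic difference is in the final assembly of constants: the paper bounds $\norm{I-W}_p\le e^2/(k+1)!$ via the Lagrange-form remainder, sums the geometric series exactly to $(1+x)^j-1$, and then invokes $\ln(1+x)\le x$ together with monotonicity of $(e^y-1)/y$ on $[0,1]$; you instead obtain the slightly sharper $\norm{I-W}_p\le (e-1)e/(k+1)!$ from the direct tail estimate and bound each term of the sum uniformly by $e$.
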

The proof can be found in \cref{app:error_analysis}. 
Finally, in the next lemma we use \cref{lemma:V_ke^-Lh_bound} to bound the error 
in the solution state at the final time step due to truncation of the Taylor series.
\begin{lemma}[Taylor truncation error]\label{lemma:taylor_error_bound}
     For ${\bf L}_N^{\rm T}, h, m, k, \Phi_{j}$ and $\Psi^{(N)}(jh)$ be as defined
     in \cref{eq:TaylorSystem} let $\|{\bf L}_N^{\rm T}\|_ph\leq 1$ and $me^2/(k+1)!\leq 1$. 
     Then for all $j\leq m$, 
    \begin{equation}
        \norm{\Phi_{j}-\Psi^{(N)}(jh)}_p \le \frac{(e-1)e^2j}{(k+1)!}
        C_p({\bf L}_N^{\rm T})\norm{\Psi^{(N)}(0)}_p.
    \end{equation}
\end{lemma}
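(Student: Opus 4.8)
The plan is to reduce the claim to the relative Taylor-remainder estimate already proved in \cref{lemma:V_ke^-Lh_bound}. First I would note that both $\Phi_j$ and $\Psi^{(N)}(jh)$ originate from the same initial data $\Psi^{(N)}(0)$ through repeated application of a one-step propagator. Reading off the block rows of \cref{eq:TaylorSystem_M}, the constraint in row $j+1$ gives $\Phi_j = V_k\Phi_{j-1}$ with $\Phi_0 = \Psi^{(N)}(0)$, hence $\Phi_j = V_k^{\,j}\,\Psi^{(N)}(0)$, while the exact truncated dynamics satisfies $\Psi^{(N)}(jh) = V^{\,j}\Psi^{(N)}(0) = e^{{\bf L}_N^{\rm T}hj}\,\Psi^{(N)}(0)$, with $V$ and $V_k$ as in \cref{eq:V,eq:V_k}. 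Subtracting yields the identity
\begin{equation}
    \Phi_j - \Psi^{(N)}(jh) = \left(V_k^{\,j}-e^{{\bf L}_N^{\rm T}hj}\right)\Psi^{(N)}(0).
\end{equation}

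Next I would connect the operator on the right to the quantity bounded in \cref{lemma:V_ke^-Lh_bound}. Because $V_k$ and $e^{{\bf L}_N^{\rm T}h}$ are both functions of the single matrix ${\bf L}_N^{\rm T}$, they commute with $e^{-{\bf L}_N^{\rm T}hj}$, so I can insert $e^{-{\bf L}_N^{\rm T}hj}e^{{\bf L}_N^{\rm T}hj}=I$ and factor
\begin{equation}
    V_k^{\,j}-e^{{\bf L}_N^{\rm T}hj} = \left[\left(V_k^{\,j}-e^{{\bf L}_N^{\rm T}hj}\right)e^{-{\bf L}_N^{\rm T}hj}\right]e^{{\bf L}_N^{\rm T}hj}.
\end{equation}
Taking the induced $p$-norm and using submultiplicativity then gives
\begin{equation}
    \norm{\Phi_j-\Psi^{(N)}(jh)}_p \le \norm{\left(V_k^{\,j}-e^{{\bf L}_N^{\rm T}hj}\right)e^{-{\bf L}_N^{\rm T}hj}}_p\,\norm{e^{{\bf L}_N^{\rm T}hj}}_p\,\norm{\Psi^{(N)}(0)}_p.
\end{equation}

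Finally I would bound the three factors separately. The hypotheses $\norm{{\bf L}_N^{\rm T}}_p h\le 1$ and $me^2/(k+1)!\le 1$ are precisely those required to invoke \cref{lemma:V_ke^-Lh_bound}, which controls the first factor by $(e-1)je^2/(k+1)!$; and since $j\le m$ forces $hj\le hm=T$, the definition of $C_p(\cdot)$ in \cref{def:C(A)} bounds the middle factor by $C_p({\bf L}_N^{\rm T})$. Multiplying the three estimates produces the stated inequality. I expect no serious obstacle, since the analytic core has already been absorbed into \cref{lemma:V_ke^-Lh_bound}; the only step demanding care is the commutation and identity insertion, which is exactly what lets me convert the \emph{relative} remainder bound into an \emph{absolute} error controlled by the growth factor $C_p({\bf L}_N^{\rm T})$ rather than by a coarser exponential of $\norm{{\bf L}_N^{\rm T}}_p T$.
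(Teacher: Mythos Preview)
Your proposal is correct and follows essentially the same route as the paper's proof: write $\Phi_j-\Psi^{(N)}(jh)=(V_k^{\,j}-e^{{\bf L}_N^{\rm T}hj})\Psi^{(N)}(0)$, insert $e^{-{\bf L}_N^{\rm T}hj}e^{{\bf L}_N^{\rm T}hj}$, apply submultiplicativity, then invoke \cref{lemma:V_ke^-Lh_bound} for the first factor and \cref{def:C(A)} for the second. Your write-up is in fact slightly more explicit than the paper's, spelling out the commutation justification and the bound $hj\le T$ needed to apply the definition of $C_p$.
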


\begin{proof}
\begin{align}
\norm{\Phi_{j}-\Psi^{(N)}(jh)}_p &= \norm{(V_k^{j}-e^{\bar{j}h{\bf L}_N^{\rm T}})\Psi^{(N)}(0)}_p \\ \nonumber
&\leq \norm{(V_k^{j}-e^{jh{\bf L}_N^{\rm T}})e^{-jh{\bf L}_N^{\rm T}}}_p\norm{e^{jh{\bf L}_N^{\rm T}}}_p\norm{\Psi^{(N)}(0)}_p\\ \nonumber
&\leq \frac{(e-1)je^2}{(k+1)!}C_p({\bf L}_N^{\rm T})\norm{\Psi^{(N)}(0)}_p.
\end{align}  
where we have used \cref{lemma:V_ke^-Lh_bound},  $\norm{V_k^{j}}_p\leq \norm{e^{hj{\bf L}_N^{\rm T}}}_p$ and the definition of $C_p({\bf L}_N^{\rm T})$.
\end{proof}

\section{Complexity of the algorithm}
\label{sec:complexity}
In this section, we derive the complexity of implementing 
the unitary operators $U_{\mathcal{C}}$, $U_{\mathcal{L}^{-1}}$
and $U_{\mathcal{B}}$ in 
\cref{eq:reducation_to_exp_of_unitary} 
by constructing explicit circuits for the same.
These circuits and complexities 
are used in our ODE solvers in computing an approximation of $f(\mathbf{x})$
by expectation value estimation as discussed in \cref{sec:Reduction to expectation value of a unitary operator}. The techniques employed in this section are closely related to those used in Refs.~\cite{krovi2023improved,costa2025further}.\\

% As detailed in \cref{sec:Reduction to expectation value of a unitary operator} computing $f(\mathbf{x})$ requires the \emph{entire} solution of the linearised system~\cref{eq:Lineari_syst}, in contrast with prior Carleman--linearisation-based quantum algorithms that require only on the $\Psi_{1}$ component. Consequently, our circuit construction follows a different strategy to prior works. Our central task is to implement a block encoding $U_{\mathcal{L}^{-1}}$ of the operator $\mathcal{L}_{N,m,k}^{-1}$ from~\cref{eq:TaylorSystem}. This section details the construction of $U_{\mathcal{L}^{-1}}$, the associated initial--state preparation unitary, and the resulting complexity bounds. Block encodings of auxiliary operators such as $\widetilde{F}_{0}$, $\widetilde{F}_{1}$ are provided in~\cref{App:Supp_res}. The techniques employed in this section are closely related to those used in Refs.~\cite{krovi2023improved,costa2025further}.\\

We begin by constructing a block encoding of ${\bf L}_N^{\rm T}$ in \cref{eq:LN}.
This construction uses several preliminary lemmas that we prove in \cref{App:Supp_res}.
\begin{lemma}[Block encoding of ${\bf L}_N^{\rm T}$]\label{lemma: complexity_L_N_T} Given a real number $\alpha \geq \max_j|(G_0)_j|$, $\nu\in\mathbb{R}_+$ for the rescaling given in \cref{eq:rescaling_definition}, access to an oracle $U_{G_0}$ with the action $U_{G_0}\ket{0^{b_1}}\ket{j}=\ket{[(G_0)_j]}\ket{j}$ and a unitary $U_{G_1}$ that is a $(\beta,0)$-block encoding of $F_1$, an exact block encoding of the matrix ${\bf L}_N^{\rm T}$ defined in \cref{eq:Lin_ODE_with_zeros} can be constructed with scaling factor $\alpha_{{\bf L}_N^{\rm T}}=\frac{N}{2} (N(\alpha+\nu\beta)+\alpha-\nu\beta)$ using $\mathcal{O}(N^2)$ calls to $U_{G_0}$ and $U_{G_1}$.
\end{lemma}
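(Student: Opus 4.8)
The plan is to realise the padded block–bidiagonal matrix ${\bf L}_N^{\rm T}$ of \cref{eq:LN,eq:Lin_ODE_with_zeros} as a linear combination of $\mathcal{O}(N^2)$ elementary operators, each carrying a single factor $i\widetilde{F}_0$ or $i\widetilde{F}_1$, and then to glue these together by the standard linear‑combination‑of‑block‑encodings construction. The starting point is the decomposition
\begin{equation}
    {\bf L}_N^{\rm T} = \sum_{j=1}^{N} \ket{j}\bra{j} \otimes [{\bf L}_N^{\rm T}]_{j,j}
    + \sum_{j=1}^{N-1} \ket{j}\bra{j+1} \otimes [{\bf L}_N^{\rm T}]_{j,j+1},
\end{equation}
where $\ket{j}$ indexes the $N$ block rows and columns. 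Expanding \cref{eq:carleman-assembly2} inside the padding, the diagonal block $[{\bf L}_N^{\rm T}]_{j,j}$ is a sum of exactly $j$ tensor‑product terms each placing one $i\widetilde{F}_0$ in one of the last $j$ slots, and the superdiagonal block $[{\bf L}_N^{\rm T}]_{j,j+1}$ is a sum of exactly $j$ terms each placing one $i\widetilde{F}_1$ (the ${\bf e}_1$ insertion squaring off the otherwise rectangular $B_{j+1}^{(1)}\in\mathbb{C}^{n^j\times n^{j+1}}$). The essential observation is that every coefficient in this expansion equals one; this is what will make the final scale factor an exact equality rather than an upper bound.

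First I would construct the two building‑block encodings using the preliminary results of \cref{App:Supp_res}. Since $\widetilde{F}_0=\operatorname{diag}(F_0)$ with $F_0=G_0$ and $|(G_0)_j|\le\alpha$, the binary oracle $U_{G_0}$ furnishes an exact $(\alpha,0)$‑block encoding of $i\widetilde{F}_0$ with $\mathcal{O}(1)$ calls (the phase $i$ is free). Likewise, the unitary $U_{G_1}$ together with the rescaling parameter $\nu$ yields an exact $(\nu\beta,0)$‑block encoding of $i\widetilde{F}_1$ with $\mathcal{O}(1)$ calls, the row‑spreading structure of \cref{eq:F1_tilde} and the ${\bf e}_1$ padding being handled by the construction in \cref{App:Supp_res}. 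Because a block encoding of a tensor product is the tensor product of the block encodings (identity factors contributing scale one), placing one copy of $i\widetilde{F}_0$ in a chosen slot produces a term of scale $\alpha$, and placing one $i\widetilde{F}_1$ produces a term of scale $\nu\beta$.

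Next I would assemble the full operator with a single SELECT controlled on the block‑index register and on a position register specifying which tensor slot receives the factor, with the $\ket{j}\bra{j+1}$ pieces obtained by composing SELECT with a shift on the block‑index register. Since all coefficients equal one, the overall scale factor is exactly the sum of the elementary term scales,
\begin{equation}
    \alpha_{{\bf L}_N^{\rm T}} = \alpha\sum_{j=1}^{N} j + \nu\beta\sum_{j=1}^{N-1} j
    = \frac{N(N+1)}{2}\,\alpha + \frac{N(N-1)}{2}\,\nu\beta
    = \frac{N}{2}\bigl(N(\alpha+\nu\beta)+\alpha-\nu\beta\bigr),
\end{equation}
matching the claim, and the encoding is exact because every ingredient has zero error. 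Counting queries, the diagonal part uses $\sum_{j=1}^{N}j$ applications of $i\widetilde{F}_0$ and the superdiagonal part $\sum_{j=1}^{N-1}j$ applications of $i\widetilde{F}_1$, each costing $\mathcal{O}(1)$ oracle calls, for a total of $\mathcal{O}(N^2)$ calls to each of $U_{G_0}$ and $U_{G_1}$.

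The main obstacle I anticipate is not the scale‑factor arithmetic but the clean bookkeeping of the padding. One must verify that embedding the rectangular blocks $B_{j+1}^{(1)}$ into square $n^N\times n^N$ blocks via the ${\bf e}_1$ insertions and identity paddings of \cref{eq:Lin_ODE_with_zeros} is reproduced faithfully by a circuit acting on registers of fixed size, and that the position‑register control realises precisely the staircase pattern—exactly the last $j$ slots active in block $j$—so that no spurious terms corrupt the coefficient‑one count on which the exact scale factor depends. Carrying out this verification, together with the explicit circuits for $i\widetilde{F}_0$ and $i\widetilde{F}_1$ deferred to \cref{App:Supp_res}, completes the construction.
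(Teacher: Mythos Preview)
Your proposal is correct and follows essentially the same approach as the paper: decompose ${\bf L}_N^{\rm T}$ into its $\sum_{j=1}^N j + \sum_{j=1}^{N-1} j$ elementary tensor‑product terms carrying a single $i\widetilde{F}_0$ or $i\widetilde{F}_1$, block‑encode each with scale $\alpha$ or $\nu\beta$, and sum via LCU to obtain the stated scale factor and $\mathcal{O}(N^2)$ query count. The only organisational difference is that the paper first packages the $j$ elementary terms into intermediate $(j\alpha,0)$‑ and $(j\nu\beta,0)$‑block encodings of $B_j^{(0)}$ and ${\bf e}_1\otimes B_{j+1}^{(1)}$ (\cref{lemma:complexity_B}) before summing over $j$, whereas you go directly to the elementary terms with a single SELECT; both routes yield the identical arithmetic and complexity.
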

\begin{proof}
From \cref{eq:Lin_ODE_with_zeros} we have
\begin{equation}
    {\bf L}_N^{\rm T} = \sum_{j=1}^{N} {\bf e}_j^{\dagger}{\bf e}_j \otimes (\mathds{1}^{\otimes N-j} \otimes B_j^{(0)}) +\sum_{j=1}^{N-1} {\bf e}_j^{\dagger}{\bf e}_{j+1} \otimes \left(\mathds{1}^{\otimes N-j-1} \otimes  ({\bf e}_1\otimes B_{j+1}^{(1)})\right).
\end{equation}
Using \cref{lemma:complexity_B} and \cref{lemma:block_encoding_arithmetic} a $(j\alpha,0)$-block encoding of $\mathds{1}^{\otimes N-j} \otimes B_j^{(0)}$ and a $(j\nu\beta,0)$-block encoding of $\mathds{1}^{\otimes N-j-1} \otimes  ({\bf e}_1\otimes B_{j+1}^{(1)})$ can be constructed using $\mathcal{O}(j)$ calls to $U_{G_0}$ and $U_{G_1}$ respectively.\\

\noindent Since the matrices ${\bf e}_j^{\dagger}{\bf e}_j$ and ${\bf e}_j^{\dagger}{\bf e}_{j+1}$ have at most 1 nonzero entry in any row or column, it follows from \cref{lemma:block_encoding_sparse} that they can be exactly block encoded with scaling factor $1$ and $\mathcal{O}(1)$ calls to the orcales $O_r,O_c,O_A$ which can all be efficiently constructed for any ${\bf e}_j^{\dagger}{\bf e}_k$ with no calls to $U_{G_0}$ and $U_{G_1}$.\\

\noindent It then follows from \cref{lemma:block_encoding_arithmetic} that a $(j\alpha,0)$-block encoding of $T_j^{(0)}= {\bf e}_j^{\dagger}{\bf e}_j \otimes (\mathds{1}^{\otimes N-j} \otimes B_j^{(0)})$ and a $(j\nu\beta,0)$-block encoding of $T_j^{(1)}= {\bf e}_j^{\dagger}{\bf e}_{j+1} \otimes \left(\mathds{1}^{\otimes N-j-1} \otimes  ({\bf e}_1\otimes B_{j+1}^{(1)})\right)$ can be constructed using $\mathcal{O}(j)$ calls to $U_{G_0}$ and $U_{G_1}$ respectively.\\

\noindent Then, using \cref{lemma:block_encoding_arithmetic} the $N$ term sum $\sum_j T_j^{(0)}$ can be exactly block encoded with scaling factor $ \sum_jj\alpha =N(N+1)\alpha/2$ using $\mathcal{O}(N(N+1)/2)=\mathcal{O}(N^2)$ calls to $U_{G_0}$. Similarly, the $N-1$ term sum $\sum_j T_j^{(1)}$ can be exactly block encoded with scaling factor $\sum_jj\nu\beta =N(N-1)\nu\beta/2$ using $\mathcal{O}(N(N-1)/2)=\mathcal{O}(N^2)$ calls to $U_{G_1}$. The final result follows from \cref{lemma:block_encoding_arithmetic}.
\end{proof}

\begin{remark}
\label{remark:inequality-testing}
    This Lemma can be further improved using techniques based on 
    inequality testing described in 
    Sec. VI, Supplementary Material of Ref.~\cite{costa2025further}.
    Using these techniques, 
    a block encoding of ${\bf L}_N^{\rm T}$ with scaling factor 
    $\mathcal{O}(N(\alpha+\nu\beta))$ can be constructed
    using a single query to $U_{G_0}$ and $U_{G_1}$ respectively.
\end{remark}

The next lemma shows that the operator $\mathcal{A}$ defined therein can be used to block-encode $\mathcal{L}^{-1}$. 

\begin{lemma}[Block encoding of $\mathcal{L}^{-1}$]\label{lemma:complexity_L_inv} Let 

\begin{align}\label{eq:mathcal_M}
\mathcal{A} &= I-\mathcal{M} \nonumber\\
\mathcal{M} &= \sum_{j=0}^{m-1} \ket{j+1}_1\bra{j}  \otimes\, \mathcal{M}_{2}(I-\mathcal{M}_{1})^{-1}
+ \sum_{j=m}^{2m-2} \ket{j+1}\bra{j} \,\otimes\, I \nonumber\\
\mathcal{M}_1 &= \sum_{i=0}^{k-1} \ket{i+1}_2\bra{i}\otimes \frac{{\bf L}_N^{\rm T}h}{i+1} \nonumber\\
\mathcal{M}_2 &= \sum_{i=0}^{k} \ket{0}_2\bra{i} \otimes I. 
\end{align}
where the first register indexed by $i$ contains $\lceil\log _2(k+1)\rceil$ qubits, the second indexed by $j$ contains $\lceil1+\log _2m \rceil$ qubits  and the third contains $\lceil N\log _2(n)+\log_2(N)\rceil$ qubits. Then 
\begin{equation}
    {}_2\!\bra{0}\mathcal{A}^{-1}\ket{0}_2 = \mathcal{L}^{-1},
\end{equation}
for $\mathcal{L} = \mathcal{L}_{N,m,k}$ defined in \cref{eq:TaylorSystem}.
\end{lemma}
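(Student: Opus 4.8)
The plan is to exploit the fact that both $\mathcal{A} = I - \mathcal{M}$ and $\mathcal{L} = \mathcal{L}_{N,m,k}$ are block-lower-triangular in the register-1 (``time-step'') index $j$ with identity on the diagonal, so that both inverses are \emph{finite} Neumann series. First I would observe that every term of $\mathcal{M}$ shifts $j \mapsto j+1$, so $\mathcal{M}$ is nilpotent on the $2m$-dimensional register~1 with $\mathcal{M}^{2m}=0$; consequently $\mathcal{A}^{-1} = (I-\mathcal{M})^{-1} = \sum_{l=0}^{2m-1}\mathcal{M}^l$ is well-defined with no convergence issue. Writing $\mathcal{L} = I - \mathcal{S}$, where $\mathcal{S} = \sum_{j=0}^{m-1}\ket{j+1}_1\bra{j}_1\otimes V_k + \sum_{j=m}^{2m-2}\ket{j+1}_1\bra{j}_1\otimes I$ is the subdiagonal read off from \cref{eq:TaylorSystem_M}, the same nilpotency gives $\mathcal{L}^{-1} = \sum_{l=0}^{2m-1}\mathcal{S}^l$. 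The whole proof then reduces to the term-by-term identity ${}_2\bra{0}\mathcal{M}^l\ket{0}_2 = \mathcal{S}^l$.

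Next I would analyze the building block $\mathcal{M}_2(I-\mathcal{M}_1)^{-1}$ on registers~2 (the Taylor-order index $i$) and~3. Since $\mathcal{M}_1$ raises $i$ by one and annihilates $\ket{k}_2$, it is nilpotent with $\mathcal{M}_1^{k+1}=0$, and a short computation gives $(I-\mathcal{M}_1)^{-1}\ket{0}_2\otimes v = \sum_{i=0}^k \ket{i}_2\otimes \frac{({\bf L}_N^{\rm T}h)^i}{i!}v$. Applying $\mathcal{M}_2 = \sum_i \ket{0}_2\bra{i}\otimes I$ collapses every $\ket{i}_2$ back to $\ket{0}_2$ and sums the coefficients, yielding $\mathcal{M}_2(I-\mathcal{M}_1)^{-1}\ket{0}_2\otimes v = \ket{0}_2\otimes V_k v$ with $V_k$ as in \cref{eq:V_k}. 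Hence ${}_2\bra{0}\,\mathcal{M}_2(I-\mathcal{M}_1)^{-1}\,\ket{0}_2 = V_k$, and the projection of the $I$ blocks in the second sum of $\mathcal{M}$ is just the identity on register~3, so the single-factor statement ${}_2\bra{0}\mathcal{M}\ket{0}_2 = \mathcal{S}$ holds.

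The key step, and the one requiring the most care, is promoting this to arbitrary powers, because projecting an inverse is \emph{not} in general the inverse of the projection, so I cannot naively substitute $\mathcal{M}\to\mathcal{S}$ inside the series. The resolution is the ``reset'' property of $\mathcal{M}_2$: it always outputs register~2 in $\ket{0}_2$, while the $I$ blocks preserve register~2. Expanding $\mathcal{M}^l = \sum_{j_0}\ket{j_0+l}_1\bra{j_0}_1\otimes (W_{j_0+l-1}\cdots W_{j_0})$, where $W_j = \mathcal{M}_2(I-\mathcal{M}_1)^{-1}$ for $j<m$ and $W_j = I$ for $j\ge m$, I would argue by induction on the number of factors: inserting a resolution of identity on register~2 between consecutive $W_j$'s, the ``reset'' property forces the intermediate register-2 state to be $\ket{0}_2$ at every stage, so only that component survives and no information is lost. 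Thus ${}_2\bra{0}(W_{j_0+l-1}\cdots W_{j_0})\ket{0}_2 = M_{j_0+l-1}\cdots M_{j_0}$, where $M_j = V_k$ for $j<m$ and $M_j = I$ for $j\ge m$ are exactly the subdiagonal blocks of $\mathcal{S}$, giving ${}_2\bra{0}\mathcal{M}^l\ket{0}_2 = \mathcal{S}^l$.

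Finally I would sum over $l$ to conclude ${}_2\bra{0}\mathcal{A}^{-1}\ket{0}_2 = \sum_{l=0}^{2m-1}{}_2\bra{0}\mathcal{M}^l\ket{0}_2 = \sum_{l=0}^{2m-1}\mathcal{S}^l = (I-\mathcal{S})^{-1} = \mathcal{L}^{-1}$, which is the claimed identity. The only genuinely subtle point is the reset argument of the third paragraph; the nilpotency observations and the $V_k$ computation are routine bookkeeping.
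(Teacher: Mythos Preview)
Your proposal is correct and follows essentially the same approach as the paper. The paper also computes $\mathcal{M}_2(I-\mathcal{M}_1)^{-1}\ket{0}_2 = \ket{0}_2\otimes V_k$, then uses precisely your ``reset'' observation---phrased there as ``provided the second register is in $\ket{0}_2$ state, $\mathcal{M}$ acts trivially on the second register and acts as $(\mathds{1}-\mathcal{L}_{N,m,k})$ on the first and third registers''---to conclude that $\mathcal{M}^j$ restricted to the $\ket{0}_2$ sector equals $(\mathds{1}-\mathcal{L})^j$, and finishes via the same nilpotent Neumann series.
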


\begin{proof}
Following \cite{krovi2023improved} Section 4, 
for $\ket{l}_2\ket{v}_3$ an arbitrary standard basis state for the first and third register, 
\begin{align}
(I - \mathcal{M}_1)^{-1} \ket{l}_2\ket{v}_3 
&= \sum_{i=0}^k \mathcal{M}_1^i \ket{l}_2\ket{v}_3 \nonumber\\
&= \sum_{i=0}^{k-l} \ket{l + i}_2 \otimes 
   \frac{l! \,({\bf L}_N^{\rm T}h)^i}{(l + i)!} \ket{v}_3,
\end{align}
since $\mathcal{M}_1$ is nilpotent and $\mathcal{M}_1^{k+1}=0$. Then,
\begin{equation}\label{eq:M_2_M_1_action}
    \mathcal{M}_{2}(I-\mathcal{M}_{1})^{-1} \ket{l}_2\ket{v}_3
    = \ket{0}_2\otimes \sum_{i=0}^{k-l} \frac{l! \, ({\bf L}_N^{\rm T}h)^i}{(l+ i)!} \ket{v}_3 = \ket{0}_2\otimes W_{\ell,k} \ket{v}_3,
\end{equation}
where
\begin{equation}\label{def:W_l}
    W_{l,k} \coloneqq \sum_{i=0}^{k-l} \frac{l! \, ({\bf L}_N^{\rm T}h)^i}{(l + i)!}.
\end{equation}
For $l=0$, we get
\begin{equation}
    \mathcal{M}_{2}(I-\mathcal{M}_{1})^{-1} \ket{0}_2\ket{v}_3= \ket{0}_2\otimes W_{0,k} \ket{v}_3 = \ket{0}_2 V_k\ket{v}_3.
\end{equation}
Now observe that
\begin{equation}
\mathcal{M}\ket{0}_2\ket{j}_1\ket{v}_3=
\begin{cases}
\ket{0}_2\ket{j+1}_1V_k\ket{v}_3, & 0 \le j \le m-1 \\[2pt]
\ket{0}_2\ket{j+1}_1\ket{v}_3, & m-1 < j \le 2m-2 \\[2pt]
0, & j=2m-1 
\end{cases}.
\end{equation} 
By comparing with \cref{eq:TaylorSystem}, we observe that
\begin{equation}
    \mathcal{M}\ket{0}_2\ket{j}_1\ket{v}_3 = \ket{0}_2(\mathds{1} -
    \mathcal{L}_{N,m,k})\ket{j}_1\ket{v}_3,
\end{equation}
which holds for any $\ket{j}_1$ and any $\ket{v}_3$.
In effect, provided the second register is in $\ket{0}_2$ state, $\mathcal{M}$ 
acts trivially on the second register and acts as $(\mathds{1} - \mathcal{L}_{N,m,k})$
on the first and the third register regardless of their state. Consequently, 
$\mathcal{M}^j$ acts as $(\mathds{1} - \mathcal{L}_{N,m,k})^j$ on the first and 
the third register
for any non-negative integer $j$.
Now observe that $\mathcal{M}^{2m}=0$, i.e. $\mathcal{M}$ is nilpotent, so 
$\mathcal{A} = (\mathds{1} - \mathcal{M})$ is invertible and its inverse is given by
$\mathcal{A}^{-1} = (\mathds{1} - \mathcal{M})^{-1} = \sum_{j=0}^{2m-1}\mathcal{M}^j$.
Therefore,
\begin{align}
    \mathcal{A}^{-1}\ket{0}_2\ket{j}_1\ket{v}_3 &= 
    (\mathds{1}-\mathcal{M})^{-1}\ket{0}_2\ket{j}_1\ket{v}_3 \nonumber\\
    &= \ket{0}_2\left(\mathds{1}-(\mathds{1} 
    - \mathcal{L}_{N,m,k})\right)^{-1}\ket{j}_1\ket{v}_3 \nonumber\\
    &= \ket{0}_2 \mathcal{L}_{N,m,k}^{-1}\ket{j}_1\ket{v}_3,
\end{align}
which implies
\begin{equation}
    {}_2\!\bra{0}\mathcal{A}^{-1}\ket{0}_2 = \mathcal{L}_{N,m,k}^{-1}.
\end{equation}
\end{proof}

In contrast to previous work on Carleman linearisation which block-encodes $\mathcal{A}$ and employs the QLSA to perform inversion and obtain the normalized solution state \cite{krovi2023improved}, our approach directly block-encodes $\mathcal{A}^{-1}$ since we are concerned with task of extracting classical quantities from the solution state which can be reduced to computing the expectation value of a unitary operator. 

\begin{lemma}[Block encoding of $\mathcal{A}^{-1}$]\label{lemma:complexity_A_inv} Let $h\|{\bf L}_N^{\rm T}\|_2\leq 1$, $me^2/(k+1)!\leq 1$ and 
\begin{equation}
    \tau\leq  \frac{1}{4e^2mC({\bf L}_N^{\rm T})\sqrt{k+1}}.
\end{equation}
Then, given a real number $\alpha \geq \max_j|(G_0)_j|$, $\nu\in\mathbb{R}_+$ for the rescaling given in \cref{eq:rescaling_definition}, access to an oracle $U_{G_0}$ with the action $U_{G_0}\ket{0^{b_1}}\ket{j}=\ket{[(G_0)_j]}\ket{j}$ and a unitary $U_{G_1}$ that is a $(\beta,0)$-block encoding of $G_1$, a $(\alpha_{\mathcal{A}^{-1}}, \mathcal{E}(\sigma))$-block encoding of $\mathcal{A}^{-1}$ for $\mathcal{A}$ defined in \cref{eq:mathcal_M} can be constructed using 
\begin{equation}
\mathcal{O}\left(N^4k^{7/2} T(\alpha+\nu\beta)C({\bf L}_N^{\rm T}) \cdot \log k \log \left(\frac{1}{\sigma}\right)\log \left(\frac{1}{\tau }\right)\right),
\end{equation}
calls to $U_{G_0}$ and $U_{G_1}$, where
\begin{align}
\label{eq:alpha_E_sigma}
   \alpha_{\mathcal{A}^{-1}} &= 8e^2mC({\bf L}_N^{\rm T})\nonumber\\
    \mathcal{E}(\sigma) &= \sigma +8e^4m^2\sqrt{k+1}\tau C({\bf L}_N^{\rm T})^2.
\end{align}
\end{lemma}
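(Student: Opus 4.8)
The plan is to construct a block encoding of $\mathcal{A}$ from the bottom up and then invert it using the quantum singular value transformation (QSVT). The construction involves two nested inversions: the inner one, $(I-\mathcal{M}_1)^{-1}$, appearing inside the definition of $\mathcal{M}$ in \cref{eq:mathcal_M}, and the outer one, $\mathcal{A}^{-1}=(I-\mathcal{M})^{-1}$. The essential structural fact, already exploited in \cref{lemma:complexity_L_inv}, is that both $\mathcal{M}_1$ and $\mathcal{M}$ are nilpotent ($\mathcal{M}_1^{k+1}=0$ and $\mathcal{M}^{2m}=0$), so both $I-\mathcal{M}_1$ and $I-\mathcal{M}$ are invertible with finite, terminating Neumann series. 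I would nonetheless realize each inverse through QSVT applied to a polynomial that uniformly approximates $1/x$ on the relevant spectral interval, rather than through a literal linear combination of the Neumann terms, since QSVT keeps the subnormalization (scaling factor) polynomial in the condition number instead of letting it blow up geometrically in the number of terms. The two approximation precisions, $\tau$ for the inner inversion and $\sigma$ for the outer inversion, are precisely the two error parameters appearing in the statement, and account for the $\log(1/\tau)$ and $\log(1/\sigma)$ factors in the query count.

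Concretely, I would start from the block encoding of ${\bf L}_N^{\rm T}$ furnished by \cref{lemma: complexity_L_N_T}, with scaling factor $\alpha_{{\bf L}_N^{\rm T}}=\mathcal{O}(N^2(\alpha+\nu\beta))$ (or $\mathcal{O}(N(\alpha+\nu\beta))$ via the inequality-testing improvement of \cref{remark:inequality-testing}) at the cost of $\mathcal{O}(N^2)$ oracle calls. Combining this with the controlled shift on the second register and the weights $h/(i+1)$ through the block-encoding arithmetic of \cref{lemma:block_encoding_arithmetic} yields a block encoding of $\mathcal{M}_1$; since $\mathcal{M}_1$ is nilpotent with $(I-\mathcal{M}_1)^{-1}=\sum_{i=0}^k\mathcal{M}_1^i$ and the associated condition number is controlled by $\norm{V_k}_2\le C({\bf L}_N^{\rm T})$, a QSVT inversion to precision $\tau$ produces a block encoding of $(I-\mathcal{M}_1)^{-1}$ using $\tilde{\mathcal{O}}(k^{\,\mathcal O(1)}\log(1/\tau))$ queries to the ${\bf L}_N^{\rm T}$ block encoding. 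Composing with the trivial block encodings of $\mathcal{M}_2$ and of the outer shift structure then assembles a block encoding of $\mathcal{M}$, and hence of $\mathcal{A}=I-\mathcal{M}$, carrying an operator error $\tau$ inherited from the inner inversion (and, if ${\bf L}_N^{\rm T}$ itself is only approximately encoded, the primitive block-encoding error as well).

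The crux of the argument is the outer inversion together with its error bookkeeping. First I would establish the scaling factor $\alpha_{\mathcal{A}^{-1}}=8e^2mC({\bf L}_N^{\rm T})$ by bounding $\norm{\mathcal{A}^{-1}}_2=\norm{\mathcal{L}_{N,m,k}^{-1}}_2=\mathcal{O}(mC({\bf L}_N^{\rm T}))$: each block of $\mathcal{L}_{N,m,k}^{-1}$ is a power of the truncated propagator $V_k$, and using $\norm{V_k^{\,j}}_2\le\norm{e^{jh{\bf L}_N^{\rm T}}}_2\le C({\bf L}_N^{\rm T})$ together with the Taylor-remainder control of \cref{lemma:V_ke^-Lh_bound} under the hypotheses $h\norm{{\bf L}_N^{\rm T}}_2\le 1$ and $me^2/(k+1)!\le 1$, summing the $\mathcal{O}(m)$ blocks produces the factor $e^2mC({\bf L}_N^{\rm T})$. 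A QSVT inversion of $\mathcal{A}$ to precision $\sigma$ then gives an $(\alpha_{\mathcal{A}^{-1}},\cdot)$-block encoding of $\mathcal{A}^{-1}$ using $\tilde{\mathcal{O}}(\alpha_{\mathcal{A}^{-1}}\log(1/\sigma))$ queries to the $\mathcal{A}$ block encoding; multiplying this by the per-query cost from the previous paragraph and collecting factors yields the stated $\mathcal{O}(N^4k^{7/2}T(\alpha+\nu\beta)C({\bf L}_N^{\rm T})\log k\,\log(1/\sigma)\log(1/\tau))$. The main obstacle, and the part requiring the most care, is propagating the inner error $\tau$ through the outer inversion: via the first-order perturbation estimate $\norm{\mathcal{A}^{-1}-\widetilde{\mathcal{A}}^{-1}}_2\lesssim\norm{\mathcal{A}^{-1}}_2^{2}\,\norm{\mathcal{A}-\widetilde{\mathcal{A}}}_2$, with the $\sqrt{k+1}$ accumulation of the inner Neumann error and $\norm{\mathcal{A}^{-1}}_2=\mathcal{O}(mC)$, one arrives at the contribution $8e^4m^2\sqrt{k+1}\,\tau\,C({\bf L}_N^{\rm T})^2$, so that adding the outer polynomial-approximation error $\sigma$ gives exactly $\mathcal{E}(\sigma)$. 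The hypothesis $\tau\le 1/(4e^2mC({\bf L}_N^{\rm T})\sqrt{k+1})$ is what keeps $\norm{\mathcal{A}-\widetilde{\mathcal{A}}}_2$ safely below $\norm{\mathcal{A}^{-1}}_2^{-1}$, guaranteeing that the approximate $\mathcal{A}$ remains well-conditioned and that the QSVT inversion is valid.
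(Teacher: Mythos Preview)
Your approach is essentially the paper's: build a block encoding of $I-\mathcal{M}_1$ from the ${\bf L}_N^{\rm T}$ block encoding of \cref{lemma: complexity_L_N_T}, invert it via QSVT to precision $\tau$ (\cref{lemma:block_encoding_inversion}), multiply by $\mathcal{M}_2$ and the outer shift to assemble $\mathcal{A}$, and then invert again to precision $\sigma$ using \cref{lemma:block_encoding_inversion_with_error}, which supplies exactly the perturbation estimate $\norm{\mathcal{A}^{-1}-\widetilde{\mathcal{A}}^{-1}}\lesssim\norm{\mathcal{A}^{-1}}^{2}\norm{\mathcal{A}-\widetilde{\mathcal{A}}}$ you invoke.

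Two technical points need correction. First, the equality $\norm{\mathcal{A}^{-1}}_2=\norm{\mathcal{L}_{N,m,k}^{-1}}_2$ is false: \cref{lemma:complexity_L_inv} only gives ${}_2\!\bra{0}\mathcal{A}^{-1}\ket{0}_2=\mathcal{L}^{-1}$, so $\mathcal{L}^{-1}$ is a compression of $\mathcal{A}^{-1}$ and bounding its norm does not bound $\norm{\mathcal{A}^{-1}}$. The paper instead bounds $\norm{\mathcal{A}^{-1}}$ directly via $\norm{\mathcal{A}^{-1}}\le\sum_{q=0}^{2m-1}\norm{\mathcal{M}^q}$ (\cref{lemma:A_norm}), which requires controlling the action of $\mathcal{M}^q$ on \emph{all} basis states $\ket{l,j,v}$, not just $l=0$; this brings in the operators $W_{l,k}$ of \cref{def:W_l} and their bound $\norm{W_{l,k}}\le e$ from \cref{lemma:bound_on_T}. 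Second, the condition number for the inner inversion is not governed by $\norm{V_k}\le C({\bf L}_N^{\rm T})$ but simply by $\norm{(I-\mathcal{M}_1)^{-1}}\le k$ (\cref{lemma:I-M_1_norm}), and the $\sqrt{k+1}$ factor multiplying $\tau$ in $\mathcal{E}(\sigma)$ arises from the sparse block encoding of $\mathcal{M}_2$ (which has $k+1$ ones in its nonzero row), not from Neumann-series accumulation.
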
 

\begin{proof} The block encoding can be constructed using the following steps. 

\begin{enumerate}
    
    \item A $\left(\frac{\alpha_{{\bf L}_N^{\rm T}}h}{i+1},0\right)$-block encoding of $\ket{i+1}\bra{i}\otimes\frac{{\bf L}_N^{\rm T}h}{i+1}$ can be constructed using $\mathcal{O}(1)$ calls to a $(\alpha_{{\bf L}_N^{\rm T}}, 0)$-block encoding of ${\bf L}_N^{\rm T}$ from \cref{lemma:block_encoding_arithmetic} and \cref{lemma:block_encoding_sparse}.

\item A $(\mu,0)$-block encoding of $I-\mathcal{M}_1$ can be constructed  using $\mathcal{O}(k)$ calls to a $(\alpha_{{\bf L}_N^{\rm T}}, 0)$-block encoding of ${\bf L}_N^{\rm T}$ from \cref{lemma:block_encoding_arithmetic} and with $\mu=1+\sum_{i=0}^{k-1}\frac{\alpha_{{\bf L}_N^{\rm T}}h}{i+1} =1+ \alpha_{{\bf L}_N^{\rm T}}hH_k$ where $H_k$ is the $k^{\text{th}}$ harmonic number.
    
    \item  Let $\omega\geq \|(I-\mathcal{M}_1)^{-1}\|$. Using \cref{lemma:block_encoding_inversion} a $\left(2\omega,\tau\right)$-block encoding of $(I-\mathcal{M}_1)^{-1}$ can be constructed using $\mathcal{O}(\omega\mu\log \frac{1}{\tau})$ queries to the block encoding of $I-\mathcal{M}_1$.

\item  A $(\sqrt{k+1},0)$-block encoding of $\mathcal{M}_2$  can be constructed with no calls to $U_{G_0}$ or $U_{G_1}$ from \cref{lemma:block_encoding_sparse}.

    \item A $(2\omega \sqrt{k+1}, \sqrt{k+1}\tau)$- block encoding of $\mathcal{M}_2(I-\mathcal{M}_1)^{-1}$ can be constructed using $\mathcal{O}(1)$ query to the block encoding of $(I-\mathcal{M}_1)^{-1}$ and $\mathcal{M}_2$.

\item A $(2\omega \sqrt{k+1}, \sqrt{k+1}\tau)$-block encoding of $\sum_{j=0}^{m-1} \ket{j+1}\bra{j} \otimes\mathcal{M}_2(I-\mathcal{M}_1)^{-1}$ can be constructed using $\mathcal{O}(1)$ call to the block encoding of $\mathcal{M}_2(I-\mathcal{M}_1)^{-1}$ from \cref{lemma:block_encoding_arithmetic}
 and \cref{lemma:block_encoding_sparse}.

\item A $(2(1+\omega \sqrt{k+1}), \sqrt{k+1}\tau)$-block encoding of $\mathcal{A}$ can be constructed using $\mathcal{O}(1)$ call to the block encoding of $\mathcal{M}_2(I-\mathcal{M}_1)^{-1}$ from \cref{lemma:block_encoding_arithmetic}
 and \cref{lemma:block_encoding_sparse}.

 \item From the assumptions stated in the Lemma we have $\sqrt{k+1}\tau\leq  \frac{1}{2 \|\mathcal{A}^{-1}\|}$. Then with $\phi \geq \|\mathcal{A}^{-1}\|$, a $(4\phi, \sigma + 2\sqrt{k+1}\tau \|\mathcal{A}^{-1}\|^2)$ block encoding of $\mathcal{A}^{-1}$ can be constructed using $\mathcal{O}((1+\omega \sqrt{k+1})\phi \log\frac{1}{\sigma })$ calls to the block encoding of $\mathcal{A}$ from \cref{lemma:block_encoding_inversion_with_error}.
\end{enumerate}

The lemma is proven by adding the complexities in step 1 - 8 to obtain the total number of calls to $U_{G_0}$ and $U_{G_1}$ of
\begin{equation}
\mathcal{O}\left((1+k\sqrt{k+1})e^2mC({\bf L}_N^{\rm T}) \log\frac{1}{\sigma } \cdot k(1+\alpha_{{\bf L}_N^{\rm T}}hH_k) \log \frac{1}{\tau}\cdot k\cdot N^2\right),
\end{equation}
which follows from choosing $\phi  = 2e^2mC({\bf L}_N^{\rm T})$ from \cref{lemma:A_norm},  $\|\mathcal{A}^{-1}\|^2\leq4e^4m^2C({\bf L}_N^{\rm T})^2$ from \cref{lemma:A_norm}, $\omega  = k$ from \cref{lemma:I-M_1_norm} and \cref{lemma: complexity_L_N_T}.

\end{proof}

\begin{remark}\label{remark:block_enc_L}
    Since \cref{lemma:complexity_L_inv} establishes that $\mathcal{A}^{-1}$ is a $(1,0)$-block encoding of $\mathcal{L}^{-1}$, the circuit described in \cref{lemma:complexity_A_inv} is a $(\alpha_{\mathcal{L}^{-1}},\mathcal{E})$-block encoding of $\mathcal{L}^{-1}$ with $\alpha_{\mathcal{L}^{-1}}=\alpha_{\mathcal{A}^{-1}}$.
\end{remark}

\begin{lemma}[Initial state preparation]\label{lemma:complexity_inital_state} Given $\nu\in\mathbb{R}_+$ for the rescaling defined in \cref{eq:rescaling_definition}, $\widetilde{\gamma}:=\|e^{i{\bf u}_0}\|$ and a unitary $U_{u_0}$ which prepares the state $\ket{e^{i{\bf u}_0}}$ encoding the normalized vector
$e^{i{\bf u}_0}/\widetilde{\gamma}$ in its amplitudes, a $\lceil 1+\log_2m+\log_2N+N\log_2n\rceil$-qubit unitary $U_\mathcal{B}$, which prepares the state $\ket{\mathcal{B}_{N,m}}$ encoding the normalized vector $\mathcal{B}_{N,m}/\alpha_{\mathcal{B}}$ in its amplitudes for $\mathcal{B}_{N,m}$ defined in \cref{eq:linear_algebraic_system} and $\alpha_{\mathcal{B}}:=\|\Psi^{(N)}(0)\|$ for $\Psi^{(N)}$ given in \cref{eq:Lin_ODE} can be constructed using $\mathcal{O}(N)$ calls to $U_{u_0}$.
\end{lemma}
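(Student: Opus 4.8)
The plan is to build $U_\mathcal{B}$ register-by-register, exploiting the tensor-product and direct-sum structure of $\mathcal{B}_{N,m}$. Recall from \cref{eq:linear_algebraic_system_M} that $\mathcal{B}_{N,m}$ is supported entirely on the first time step, equal to $\Psi^{(N)}(0)$ there and zero in every later slot. Writing the register as (time)$\otimes$(Koopman)$\otimes$(tensor), where the tensor register is itself $N$ subregisters of $\log_2 n$ qubits each, it therefore suffices to leave the time register in $\ket{0}$ and to prepare the normalized state $\ket{\Psi^{(N)}(0)}$ on the Koopman and tensor registers. First I would recast $\Psi^{(N)}(0)$ from \cref{eq:padded_Psi_N} into a form suited to amplitude loading. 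The key observation is that rescaling leaves the normalized initial direction invariant: since $e^{i\mathbf{x}_0}=e^{i\mathbf{u}_0}/\nu$, the \emph{normalized} vectors coincide, $\ket{e^{i\mathbf{x}_0}}=\ket{e^{i\mathbf{u}_0}}$, whence $[e^{i\mathbf{x}_0}]^{\otimes j}=\gamma^{j}\ket{e^{i\mathbf{u}_0}}^{\otimes j}$ with $\gamma=\widetilde\gamma/\nu$. Together with the fact that the padding vector $\mathbf{e}_1$ is exactly the computational-basis state $\ket{0}$, this gives
\begin{equation}
\Psi^{(N)}(0)=\sum_{j=1}^{N}\gamma^{j}\,\ket{j}\otimes\ket{0}^{\otimes N-j}\otimes\ket{e^{i\mathbf{u}_0}}^{\otimes j},\qquad
\alpha_{\mathcal{B}}=\Bigl(\sum_{j=1}^{N}\gamma^{2j}\Bigr)^{1/2}.
\end{equation}

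The construction then proceeds in two steps. First, I would load the geometric weights into the Koopman register by preparing $\tfrac{1}{\alpha_{\mathcal{B}}}\sum_{j=1}^{N}\gamma^{j}\ket{j}$ on its $\lceil\log_2 N\rceil$ qubits; because $\gamma=\widetilde\gamma/\nu$ is known classically from the given inputs $\widetilde\gamma$ and $\nu$, this is a fixed single-register state preparation requiring no calls to $U_{u_0}$. Second, I would populate the tensor subregisters conditioned on the Koopman index: for each subregister $k\in\{1,\dots,N\}$, apply $U_{u_0}$ controlled on the Koopman register satisfying $j\ge N-k+1$, so that precisely the last $j$ subregisters receive $\ket{e^{i\mathbf{u}_0}}$ while the first $N-j$ remain in $\ket{0}=\mathbf{e}_1$. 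Each such application is a single controlled call to $U_{u_0}$, with the inequality test $[\,j\ge N-k+1\,]$ realized by an ancilla that is computed and uncomputed using $\mathcal{O}(\mathrm{poly}\log N)$ auxiliary gates and no further $U_{u_0}$ calls; summing over the $N$ subregisters yields the claimed $\mathcal{O}(N)$ calls to $U_{u_0}$.

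To close the argument I would track amplitudes through both steps. Since each $U_{u_0}$ is unitary and acts on a subregister initialized to $\ket{0}$, the weight $\gamma^{j}$ carried by the Koopman register is preserved, so the prepared state is $\tfrac{1}{\alpha_{\mathcal{B}}}\sum_{j}\gamma^{j}\ket{j}\otimes\ket{0}^{\otimes N-j}\otimes\ket{e^{i\mathbf{u}_0}}^{\otimes j}$, which by the displayed identity equals $\ket{\Psi^{(N)}(0)}$; tensoring with the untouched $\ket{0}$ time register gives exactly $\ket{\mathcal{B}_{N,m}}=\mathcal{B}_{N,m}/\alpha_{\mathcal{B}}$, confirming the scale factor $\alpha_{\mathcal{B}}=\|\Psi^{(N)}(0)\|$. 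The two facts that make the proof go through cleanly are the invariance of the normalized direction under rescaling (so a single oracle $U_{u_0}$ serves every block) and the identification $\mathbf{e}_1=\ket{0}$ (so the staircase pattern of \cref{eq:padded_Psi_N,eq:Lin_ODE_with_zeros} is generated automatically by the $j\ge N-k+1$ controls). I expect the main obstacle to be purely bookkeeping rather than analytic: matching the enumeration conventions of the Koopman and tensor indices to the control pattern, and verifying that the conditional placements of $U_{u_0}$ reproduce the block structure $\mathbf{e}_1^{\otimes N-j}\otimes\Psi_j(0)$ slot-for-slot, including any $0$-versus-$1$ index shift in the Koopman register.
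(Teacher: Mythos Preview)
Your proposal is correct and follows essentially the same approach as the paper: prepare the geometric-weight superposition $\propto\sum_{j=1}^N\gamma^j\ket{j}$ on the Koopman index register (no oracle calls), then apply a single controlled-$U_{u_0}$ to each of the $N$ tensor subregisters, giving $\mathcal{O}(N)$ calls in total, and tensor with the untouched $\ket{0}$ on the time register. The paper simply cites Lemma~5 of \cite{liu2021efficient} for the controlled-filling step where you spell out the inequality-test control explicitly; your ordering $\ket{0}^{\otimes N-j}\otimes\ket{e^{i\mathbf{u}_0}}^{\otimes j}$ in fact matches the definition in \cref{eq:padded_Psi_N} more faithfully than the paper's own proof, which writes the tensor factors in the opposite order.
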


\begin{proof}  Since from the definition of the rescaling, \cref{eq:rescaling_definition}, $e^{i{\bf x}}=e^{i{\bf u}}/\nu$, we have $U_{u_0}\ket{0}=\ket{e^{i{\bf x}_0}}$. \\

We first construct a $\lceil\log_2N+N\log_2n\rceil$ qubit unitary, $U_{S}$, which prepares a state proportional to $\ket{\Psi^{(N)}(0)}$,
\begin{equation}
   U_{S}\ket{0}=\frac{\ket{\Psi^{(N)}(0)}}{\|\Psi^{(N)}(0)\|} = \frac{1}{\sqrt{Q}}
\sum_{j=1}^{N}\ket{j}\ket{\Psi_1(0)}^{\otimes j}\ket{0}^{\otimes N-j},
\end{equation}
where $Q=\sum_{j=1}^N\gamma^{2j} = \frac{\gamma^2(1-\gamma^{2N})}{1-\gamma^2}$ for $\gamma=\|e^{i\textbf{x}_0}\|=\widetilde{\gamma}/\nu$.\\

Following \cite{liu2021efficient} Lemma 5, $U_{S}$ can be constructed using $\mathcal{O}(N)$ calls to $U_{u_0}$ by initializing a $\log_2 N$ qubit ancilla in 
a state proportional $\sum_{j=1}^{N}\gamma^{j}\ket{j}$
and applying controlled rotations on the target register of the form
$\ket{0}\bra{0} \otimes \mathds{1} + \ket{1}\bra{1}\otimes U_{u_0}^l$.\\

Let $U_{\rm idx}$ be a $\lceil1+\log_2m\rceil$ qubit unitary with the action $U_{\rm idx}\ket{x} = \ket{0}$. Then the $\lceil1+\log_2m+\log_2N+N\log_2n\rceil$-qubit unitary $U_{\rm idx}\otimes U_{S}$ prepares $\ket{\mathcal{B}_{N,m}}$ and can be constructed using 1 call to $U_S$.
\end{proof}

\begin{remark}\label{remark:rescaling}
If $\gamma = \|e^{i{\bf x}_0}\|>1$ then as $N$ grows $\alpha_{\mathcal{B}}=\|\Psi^{(N)}(0)\| = \sqrt{\frac{\gamma^2(1-2\gamma^{2N})}{1-\gamma^2}}=\mathcal{O}(\gamma^{N})$. However, from \cref{eq:rescaling_norm_input} the rescaling paramter $\nu$ from \cref{eq:rescaling_definition} can be chosen such that $\gamma\le1/\sqrt{2}$ and therefore $\alpha_{\mathcal{B}}\leq \frac{\gamma}{\sqrt{1-\gamma^2}} =\mathcal{O}(1)$ as $N$ grows.
\end{remark}

\begin{lemma} [Fourier coefficient preparation]\label{lemma:fourier_coefficient_preperation}
Let the components of the  vector ${\bf d}\in\mathbb{C}^D$ for $D=\frac{n^{K+1}-n}{n-1}$ be denoted $d_{\bf j}$ for ${\bf j} \in \mathcal{S}_K =\bigcup_{l=1}^K\mathbb{Z}^n_{l,+}$. Suppose the $D$ nonzero components of the vector ${\bf c}\in \mathbb{C}^{Nn^N}$ defined in \cref{eq:c_def} are given by $c_{\bf j}=\nu^{\abs{{\bf j}}}d_{\bf j}$ for ${\bf j} \in \mathcal{S}_K$ and some $\nu\in \mathbb{R}_+$. Then, given a unitary $U_{d}$ that prepares the state $\ket{d}$ encoding the normalised vector ${\bf d/}\|{\bf d}\|$ in its amplitudes, we can construct a unitary circuit $U_\mathcal{C}$ such that $\mathcal{C}^{\rm T} = \alpha_\mathcal{C}(\!\bra{0}U_\mathcal{C}^\dagger)$ for $\mathcal{C}$ defined in \cref{eq:mathcal_C} using $\mathcal{O}(1)$ calls to $U_d$ and with 
\begin{equation}
    \alpha_\mathcal{C}=\frac{\max{\{\nu,\nu^K\}} \|{\bf d}\|}{\sqrt{m}}
    = \mathcal{O}\left(\frac{(1+\nu^K) \|{\bf d}\|}{\sqrt{m}}\right).
\end{equation}
\end{lemma}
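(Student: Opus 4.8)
The plan is to build $U_{\mathcal C}$ as a flagged state preparation (a block encoding of the vector $\mathcal C$), so that $\mathcal C/\alpha_{\mathcal C}$ appears as the component of $U_{\mathcal C}\ket{0}$ on which all flag and padding ancillas are in $\ket{0}$; this is exactly what the identity $\mathcal{C}^{\rm T}=\alpha_{\mathcal C}(\bra{0}U_{\mathcal C}^\dagger)$ asserts. Two structural features of the target guide the construction. First, from \cref{eq:mathcal_C}, $\mathcal C$ consists of $m$ zero blocks followed by $m$ identical copies of $\mathbf c/m$, so it factorizes as (a uniform index over the last $m$ blocks) $\otimes$ ($\mathbf c$). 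Second, from \cref{eq:c_def} and \cref{eq:c_j}, the only nonzero entries of $\mathbf c$ are $c_{\mathbf j}=\nu^{l}d_{\mathbf j}$ for $\mathbf j\in\mathbb Z^n_{l,+}$ with $l\le K$, placed in block $l$ after padding with $\mathbf e_1^{\otimes N-l}$. Since $\mathbf e_1=\ket{0}$, this padding is just the appending of $\ket{0}$ registers, and the map $\mathbf d\mapsto\mathbf c$ is the blockwise diagonal reweighting $d_{\mathbf j}\mapsto\nu^{l}d_{\mathbf j}$.

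First I would apply $U_d$ once to prepare $\ket d=\mathbf d/\norm{\mathbf d}$, whose amplitudes are $d_{\mathbf j}/\norm{\mathbf d}$ indexed by $(l,\mathbf j)$ with $\mathbf j\in\mathbb Z^n_{l,+}$. The central step is to realize the blockwise factor $\nu^{l}$. Because this rescaling is not norm preserving, I would implement it as a flagged controlled rotation: introduce one ancilla qubit and, controlled on the degree register $l$, rotate it so that the flag-$\ket 0$ amplitude is multiplied by $\nu^{l}/\max\{\nu,\nu^{K}\}$, with the deficit routed to the flag-$\ket 1$ subspace. The rotation angle is well defined precisely because $\nu^{l}\le\max\{\nu,\nu^{K}\}$ for every $l\in\{1,\dots,K\}$; this follows from monotonicity of $l\mapsto\nu^{l}$, which gives $\max_{1\le l\le K}\nu^{l}=\max\{\nu,\nu^{K}\}$. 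After this step the flag-$\ket 0$ component carries amplitudes $\nu^{l}d_{\mathbf j}/(\max\{\nu,\nu^{K}\}\norm{\mathbf d})=c_{\mathbf j}/(\max\{\nu,\nu^{K}\}\norm{\mathbf d})$, i.e. the subnormalized vector $\mathbf c/(\max\{\nu,\nu^{K}\}\norm{\mathbf d})$. Appending $\ket 0$ ancillas to realize the $\mathbf e_1^{\otimes N-l}$ padding then places these amplitudes at the correct coordinates of the $Nn^N$-dimensional register. All of these steps use only $\mathcal O(1)$, indeed a single, call to $U_d$.

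It remains to install the outer $2m$-block structure of \cref{eq:mathcal_C}. I would prepare the block-index register of $\lceil 1+\log_2 m\rceil$ qubits in $\tfrac{1}{\sqrt m}\sum_{j=m}^{2m-1}\ket j$ by fixing the leading ("second-half") qubit to $\ket 1$ and applying Hadamards to the remaining $\log_2 m$ qubits, which uses no calls to $U_d$. Tensoring this with the state produced above places the common vector $\mathbf c/(\sqrt m\,\max\{\nu,\nu^{K}\}\norm{\mathbf d})$ in each of the last $m$ blocks (flag and padding ancillas in $\ket 0$) and zero in the first $m$. Matching this against each nonzero block of $\mathcal C/\alpha_{\mathcal C}$, namely $(\mathbf c/m)/\alpha_{\mathcal C}$, forces
\begin{equation}
    \alpha_{\mathcal C}=\frac{\sqrt m}{m}\,\max\{\nu,\nu^{K}\}\,\norm{\mathbf d}
    =\frac{\max\{\nu,\nu^{K}\}\,\norm{\mathbf d}}{\sqrt m},
\end{equation}
which is the claimed value; the stated $\mathcal O\!\left((1+\nu^{K})\norm{\mathbf d}/\sqrt m\right)$ then follows from $\max\{\nu,\nu^{K}\}\le 1+\nu^{K}$.

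I expect the main obstacle to be the non-unitarity of the $\nu^{l}$ reweighting: unlike the $\gamma^{j}$ weights in \cref{lemma:complexity_inital_state}, it cannot be absorbed into a genuine state preparation, because $U_d$ hands us $\mathbf d$ as a single entangled superposition rather than block by block. The flag-rotation device resolves this, and it is exactly this device that converts the true norm $\norm{\mathcal C}=\norm{\mathbf c}/\sqrt m$ into the upper bound $\alpha_{\mathcal C}$; one should note that $\alpha_{\mathcal C}=\norm{\mathcal C}$ only when all active $\nu^{l}$ coincide (e.g. $\nu=1$). A secondary point I would address explicitly is that the pairing $\mathcal C\cdot(\,\cdot\,)$ in \cref{eq:f_n_hat} is bilinear rather than sesquilinear, so to match $\bra{0}U_{\mathcal C}^\dagger$ one should prepare the complex conjugate $\overline{\mathbf c}$ (equivalently, feed a $U_d$ that prepares $\overline{\mathbf d}$); for real Fourier coefficients this is immaterial.
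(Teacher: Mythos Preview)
Your proposal is correct and follows essentially the same approach as the paper: the paper phrases the $\nu^{l}$ reweighting and embedding as applying a block encoding of the rectangular matrix $M$ with $\mathbf c=M\mathbf d$ and $\alpha_M=\max\{\nu,\nu^K\}$, which is precisely what your flagged controlled rotation implements, and then installs the $2m$-block structure via the same uniform index superposition $\tfrac{1}{\sqrt m}\sum_{t=m}^{2m-1}\ket t$. Your treatment is in fact more explicit than the paper's about why $\alpha_M=\max\{\nu,\nu^K\}$ suffices and about the conjugation subtlety in matching $\mathcal C^{\rm T}$ to $\bra{0}U_{\mathcal C}^\dagger$.
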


\begin{proof}

Let $M\in\mathbb{R_+}^{(Nn^N\times D)}$ be the rectangular matrix such that ${\bf c}=M{\bf d}$. We can construct an $(\alpha_M,0)$-block encoding of $M$ using $\nu$ and additional single and two qubit gates. Using standard properties of the operator 2 norm we can choose $\alpha_M = \max{\{\nu,\nu^K\}}$ so that
\begin{equation}  \bra{0^a}U_MU_d\ket{0}\ket{0^a}=\frac{{\bf c}}{\max{\{\nu,\nu^K\}}\|{\bf d}\|}.
\end{equation}

The circuit $U_\mathcal{C}$ can then be constructed to prepare the state $\frac{\sqrt{m}}{\alpha_M\|{\bf d}\|}\mathcal{C} $ using one call to $U_M$ and $U_d$. This is achieved by preparing an index register $\frac{1}{\sqrt{m}}\sum_{t=m}^{2m-1}\ket{t}$ and using controlled two-qubit gates to map this index register into the correct offset of the data register. We therefore have $\alpha_\mathcal{C}= \max{\{\nu,\nu^K\}} \|{\bf d}\|/\sqrt{m}$.

\end{proof}

We now recall a lemma that states the complexity of estimating 
$\expval{0|U|0}$ given access to the controlled unitary $cU$.
\begin{lemma}[Complexity of $\bra{0}U\ket{0}$ {\cite[Theorem~5]{alase2022tight}}]
\label{lem:readout}
    Given access to $cU$ for a unitary $U$, the expectation value $\expval{0|U|0}$ 
    can be estimated to an additive error $\epsilon$ with high probability 
    by making $\mathcal{O}(1/\epsilon)$ uses of $cU$. 
\end{lemma}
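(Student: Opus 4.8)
The plan is to reduce estimation of the complex scalar $\expval{0|U|0}$ to estimation of two real amplitudes, each of which admits a quadratic speedup via amplitude estimation. Writing $\expval{0|U|0} = \Re{\expval{0|U|0}} + i\,\Im{\expval{0|U|0}}$, it suffices to estimate the real and imaginary parts separately, each to additive error $\epsilon/2$, and then combine them.

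First I would apply the Hadamard test. Adjoining a single ancilla qubit prepared in $\ket{+}$, applying the supplied controlled unitary $cU$, and then a Hadamard on the ancilla yields a circuit whose probability of measuring the ancilla in $\ket{0}$ equals $p_{\mathrm{Re}} = \tfrac{1}{2}\bigl(1 + \Re{\expval{0|U|0}}\bigr)$. Inserting a phase gate $S^{\dagger}$ on the ancilla before the final Hadamard produces the analogous circuit with success probability $p_{\mathrm{Im}} = \tfrac{1}{2}\bigl(1 - \Im{\expval{0|U|0}}\bigr)$. Each such circuit invokes $cU$ a constant number of times, and the target quantities are recovered linearly from $p_{\mathrm{Re}}$ and $p_{\mathrm{Im}}$.

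The central step, and the reason the claim beats the naive $\mathcal{O}(1/\epsilon^2)$ sampling cost, is to estimate these probabilities by amplitude estimation rather than by direct repetition. Regarding the Hadamard-test circuit as a state-preparation unitary $A$ with success amplitude $a = \sqrt{p}$, the Grover-type iterate $Q = -A\,S_0\,A^{\dagger}\,S_{\chi}$ is assembled from $\mathcal{O}(1)$ uses of $cU$ and of $cU^{\dagger} = (cU)^{\dagger}$, together with reflections that invoke no further calls to $cU$. Amplitude estimation with $\mathcal{O}(1/\epsilon)$ applications of $Q$ returns an estimate of $p$ to additive error $\mathcal{O}(\epsilon)$ with constant success probability; because $\lvert \tilde{a}^{2} - a^{2}\rvert \le 2\,\lvert \tilde{a} - a\rvert$ for amplitudes in $[0,1]$, converting the amplitude estimate back to a probability estimate loses only a constant factor and introduces no hidden $1/\epsilon$ blow-up. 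Carrying this out for both $p_{\mathrm{Re}}$ and $p_{\mathrm{Im}}$ and inverting the linear relations yields $\expval{0|U|0}$ to additive error $\epsilon$ using $\mathcal{O}(1/\epsilon)$ total calls to $cU$.

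Finally, to meet the ``high probability'' requirement I would run each amplitude estimation $\mathcal{O}(\log(1/\delta))$ times and take the median, boosting the per-estimate success probability to $1-\delta$ at the cost of a logarithmic factor that is absorbed into the stated $\mathcal{O}(1/\epsilon)$ for constant target probability. The main obstacle in this argument is precisely securing the quadratic speedup: ensuring that the cost is governed by amplitude estimation rather than direct sampling, and that the amplitude-to-probability error propagation together with the two-part (real and imaginary) decomposition does not degrade the $\mathcal{O}(1/\epsilon)$ scaling. Everything else --- the Hadamard-test probability identities and median amplification --- is routine, and the matching lower bound established in Ref.~\cite{alase2022tight} shows that this $\mathcal{O}(1/\epsilon)$ query count cannot be improved.
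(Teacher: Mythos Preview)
Your argument is correct and is the standard Hadamard-test-plus-amplitude-estimation route that underlies this kind of result. Note, however, that the paper does not supply its own proof of this lemma at all: it simply quotes the statement from \cite[Theorem~5]{alase2022tight} and uses it as a black box, so there is no paper-proof to compare against beyond the cited reference.
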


\section{Final results}
\label{sec:final-results}
In this section we combine the results from \cref{sec:error-analysis} and
\cref{sec:complexity} to derive algorithms and their complexity
for solving the Fourier ODE problem under dissipative conditions and in the non-dissipative 
regime. 

\subsection{Algorithm and complexity under dissipative conditions}
We first state and prove a lemma that discusses the conditions under which the linearized ODE is stable. We then use this result to state and prove the final theorem for the case where the ODE in \cref{eq:raw_problem_1} is dissipative.

\begin{lemma}[Stability of the linearized ODE]\label{lemma:rescaling_C(L)<1} Let $\{p,q\} \in [1,\infty)$ be such that $1/p+1/q=1$ and for \cref{prob:problem_statement_formal} suppose that $G_0, G_1$ and ${\bf u}_0$  satisfy  
    \begin{equation}
        \tilde{\mu}_0 := \min_j \Im{(G_0)_j}\ge 0 \quad \text{and} \quad
         R_p := \frac{\|G_1\|_{{\rm row},q} \|e^{i{\bf u}_0}\|_p}{\tilde{\mu}_0}< \min \left\{1,\frac{\|e^{i{\bf u}_0}\|_p}{\|e^{i{\bf u}_0}\|_2}\right\},
    \end{equation}  
Then the rescaling of the variable ${\bf u}$ given in \cref{eq:rescaling_definition} with $\nu =\|e^{i{\bf u}_0}\|_p/R_p$  produces a rescaled differential equation with
\begin{equation}
    C({\bf L}_N^{\rm T})\leq 1 \quad \text{and}\quad \gamma:=\|e^{i{\bf x}_0}\|<1,
\end{equation}
where ${\bf L}_N^{\rm T}$ is defined in \cref{eq:LN} on the rescaled problem. 
\end{lemma}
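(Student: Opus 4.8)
The plan is to handle the two conclusions separately, but first to make the choice of $\nu$ explicit, since both will follow from a couple of clean identities. Substituting $R_p=\|G_1\|_{{\rm row},q}\|e^{i{\bf u}_0}\|_p/\tilde{\mu}_0$ into $\nu=\|e^{i{\bf u}_0}\|_p/R_p$ collapses the prescription to $\nu=\tilde{\mu}_0/\|G_1\|_{{\rm row},q}$, whence $\|F_1\|_{{\rm row},q}=\nu\|G_1\|_{{\rm row},q}=\tilde{\mu}_0$ and $\|e^{i{\bf x}_0}\|_p=\|e^{i{\bf u}_0}\|_p/\nu=R_p$. I would record these first, as the rest is bookkeeping around them.

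For $\gamma<1$: since $e^{i{\bf x}_0}=e^{i{\bf u}_0}/\nu$ from \cref{eq:rescaling_definition}, I would write $\gamma=\|e^{i{\bf u}_0}\|_2/\nu=R_p\,\|e^{i{\bf u}_0}\|_2/\|e^{i{\bf u}_0}\|_p$ and invoke the second half of the hypothesis, $R_p<\|e^{i{\bf u}_0}\|_p/\|e^{i{\bf u}_0}\|_2$, to conclude $\gamma<1$ in one line. This requires no structural information about ${\bf L}_N^{\rm T}$.

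For $C({\bf L}_N^{\rm T})\le1$ (with $C=C_2$ per \cref{def:C(A)}), I would show that the logarithmic $2$-norm of ${\bf L}_N^{\rm T}$ is nonpositive, i.e.\ that its Hermitian part $H:=({\bf L}_N^{\rm T}+({\bf L}_N^{\rm T})^\dagger)/2$ is negative semidefinite; by the standard estimate $\|e^{{\bf L}_N^{\rm T}t}\|_2\le e^{t\lambda_{\max}(H)}$ this gives $\|e^{{\bf L}_N^{\rm T}t}\|_2\le1$ for all $t\ge0$, hence $C({\bf L}_N^{\rm T})\le1$. Writing a test vector in block form $\psi=(\psi_1,\dots,\psi_N)$ and using the block-upper-bidiagonal structure of \cref{eq:LN}, the quadratic form splits as
\begin{equation}
\psi^\dagger H\psi = \sum_{j=1}^N \psi_j^\dagger\,\Re{(B_j^{(0)})}\,\psi_j + \sum_{j=1}^{N-1}\Re{(\psi_j^\dagger B_{j+1}^{(1)}\psi_{j+1})}.
\end{equation}
Since $B_j^{(0)}$ is diagonal with $\Re{((B_j^{(0)})_{ii})}\le -j\tilde{\mu}_0$ (the estimate already used in \cref{lemma:inf_time_error_bound}), the diagonal terms are bounded above by $-\tilde{\mu}_0\sum_j j\|\psi_j\|^2$, while $\|B_{j+1}^{(1)}\|_2\le j\|\widetilde{F}_1\|_2$ bounds the coupling terms in absolute value by $\|\widetilde{F}_1\|_2\sum_j j\|\psi_j\|\|\psi_{j+1}\|$. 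An AM--GM step $\|\psi_j\|\|\psi_{j+1}\|\le(\|\psi_j\|^2+\|\psi_{j+1}\|^2)/2$ followed by reindexing yields $\sum_j j\|\psi_j\|\|\psi_{j+1}\|\le\sum_j j\|\psi_j\|^2$, so that $\psi^\dagger H\psi\le(\|\widetilde{F}_1\|_2-\tilde{\mu}_0)\sum_j j\|\psi_j\|^2$. Thus $H\preceq0$ as soon as $\|\widetilde{F}_1\|_2\le\tilde{\mu}_0$. I would then close the loop using the identity $\|\widetilde{F}_1\|_2=\|F_1\|_{{\rm row},2}$ (the $p=2$ instance of \cref{lemma:F_1_p_F_1_row_q}) together with the explicit $\nu$, giving $\|\widetilde{F}_1\|_2/\tilde{\mu}_0=\|G_1\|_{{\rm row},2}/\|G_1\|_{{\rm row},q}$, which is $\le1$ by the monotonicity $\|\cdot\|_2\le\|\cdot\|_q$ valid for $q\le2$.

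The main obstacle is this quadratic-form estimate: one must track the $j$-dependence so that the linear-in-$j$ growth of the coupling norm $\|B_{j+1}^{(1)}\|_2\le j\|\widetilde{F}_1\|_2$ is exactly absorbed by the linear-in-$j$ decay of the diagonal real parts, and check that the AM--GM telescoping never produces a coefficient of $\|\psi_j\|^2$ exceeding $j$. A secondary bookkeeping point is the padding in \cref{eq:Lin_ODE_with_zeros}: I would note that tensoring the diagonal blocks with $\mathds{1}^{\otimes N-j}$ leaves their real-part eigenvalues unchanged and that $\|\mathds{1}^{\otimes N-j-1}\otimes({\bf e}_1\otimes B_{j+1}^{(1)})\|_2=\|B_{j+1}^{(1)}\|_2$, so the padded operator inherits the same Hermitian-part bound and the argument is unaffected.
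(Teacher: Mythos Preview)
Your argument for $\gamma<1$ is the same as the paper's in content, just packaged more uniformly: the paper splits into the cases $p\le2$ and $p>2$, while your single line $\gamma=R_p\,\|e^{i{\bf u}_0}\|_2/\|e^{i{\bf u}_0}\|_p<1$ covers both at once via $R_p<\min\{1,\|e^{i{\bf u}_0}\|_p/\|e^{i{\bf u}_0}\|_2\}\le\|e^{i{\bf u}_0}\|_p/\|e^{i{\bf u}_0}\|_2$.

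For $C({\bf L}_N^{\rm T})\le1$ your route genuinely differs from the paper's. The paper does not estimate the quadratic form directly; it applies the block Gershgorin circle theorem (\cref{lemma:greshgorin}) to $H=\tfrac{1}{2}({\bf L}_N^{\rm T}+({\bf L}_N^{\rm T})^\dagger)$, bounding the off-diagonal row sums $S_j=\tfrac12(\|B_j^{(1)\dagger}\|_p+\|B_{j+1}^{(1)}\|_p)\le\tfrac12(2j-1)\|\widetilde F_1\|_p$ in the \emph{$p$-norm} against the diagonal spectra $\mu(B_j^{(0)})=-j\tilde\mu_0$, which delivers the sufficient condition $\tilde\mu_0\ge\|\widetilde F_1\|_p=\nu\|G_1\|_{{\rm row},q}$ directly in the norm of the hypothesis. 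Your AM--GM estimate is more elementary (no Gershgorin), and the telescoping you describe is correct: $\sum_{j=1}^{N-1}j\|\psi_j\|\|\psi_{j+1}\|\le\tfrac12\|\psi_1\|^2+\sum_{j=2}^{N-1}(j-\tfrac12)\|\psi_j\|^2+\tfrac{N-1}{2}\|\psi_N\|^2\le\sum_j j\|\psi_j\|^2$.

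The gap is the closing step. Your quadratic-form argument is intrinsically a $2$-norm computation: Cauchy--Schwarz on $\psi_j^\dagger B_{j+1}^{(1)}\psi_{j+1}$ forces $\|B_{j+1}^{(1)}\|_2$ and hence $\|\widetilde F_1\|_2=\|F_1\|_{{\rm row},2}$ into the estimate, leaving you with the residual comparison $\|G_1\|_{{\rm row},2}\le\|G_1\|_{{\rm row},q}$, which (as you note) holds only for $q\le2$, i.e.\ $p\ge2$. The lemma is stated for all $p\in[1,\infty)$, so your argument does not cover $p\in[1,2)$. This is precisely what the paper's choice to run Gershgorin in the $p$-norm is meant to avoid: since the Gershgorin bound is valid in any subordinate norm, computing $S_j$ in $\|\cdot\|_p$ makes the sufficient condition read $\tilde\mu_0\ge\nu\|G_1\|_{{\rm row},q}$ with no leftover norm comparison. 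To extend your approach to $p<2$ you would have to abandon the direct quadratic form and bring in Gershgorin (or an equivalent device that decouples the eigenvalue estimate from the $2$-norm).
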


\begin{proof}
We follow a similar argument to Ref.~\cite{costa2025further}. From \cref{def:C(A)}, \cref{eq:sec_def} and \cref{lemma:expA}, the eigenvalues of $\frac{{\bf L}_N^{\rm T}+\left({\bf L}_N^{\rm T}\right)^{\dagger}}{2}$ being non-positive implies $C({\bf L}_N^{\rm T}) \leq 1$. We now bound these eigenvalues using \cref{lemma:greshgorin}. We have
\begin{equation}
\frac{1}{2}\!\left(\mathbf{L}_N^{\rm T}+\mathbf{L}_N^{T\dagger}\right)
=
\frac{1}{2}
\begin{pmatrix}
B_1^{(0)}+B_1^{(0)\dagger} &  B_2^{(1)} &  &  &  \\
 B_2^{(1)\dagger} & B_2^{(0)}+B_2^{(0)\dagger} &  B_3^{(1)} &  &  \\
 &  & \ddots  &  &  \\
 &  &  &  B_{N-1}^{(1)\dagger} & B_{N-1}^{(0)}+B_{N-1}^{(0)\dagger} &  B_{N}^{(1)} \\
 &  &  &  &  B_{N}^{(1)\dagger} & B_{N}^{(0)}+B_{N}^{(0)\dagger}
\end{pmatrix}.
\end{equation}
Set $S_j$ be the sum of the $p$-norms of the off diagonal blocks in row $j$. Then,
\begin{equation}\label{eq:sum_off_diag_blocks_bound}
S_j = \frac{1}{2}
\begin{cases}
   \norm{ B_2^{(1)}}_p 
      \leq \norm{\widetilde{F}_1}_p, & j = 1 \\[6pt]
   \norm{ B_j^{(1)\dagger}}_p 
   + \norm{ B_{j+1}^{(1)}}_p 
      \leq (2j-1)\norm{\widetilde{F}_1}_p, & 1 < j < N \\[6pt]
   \norm{ B_N^{(1)\dagger}}_p 
      \leq (N-1)\norm{\widetilde{F}_1}_p 
      \leq (2N-1)\norm{\widetilde{F}_1}_p, & j = N,
\end{cases}
\end{equation}
where we have used $\|B_{j+1}^{(1)}\|_p \leq j\|\widetilde{F}_1\|_p$. 
We therefore have for all $j\in[1,N]$,
\begin{equation}
       S_j \leq   \frac{1}{2} (2j-1) \norm{\widetilde{F}_1}_p.
\end{equation}
From the definition of the logarithmic norm given in \cref{eq:sec_def} we have that $\mu(B_j^{(0)})$ is the maximum eigenvalue of $\frac{B_j^{(0)}+B_j^{(0)\dagger}}{2}$ which is the $j$th diagonal term of $\frac{1}{2}\!\left(\mathbf{L}_N^{\rm T}+\mathbf{L}_N^{T\dagger}\right)$. Then since $F_0=G_0,$
\begin{equation}
    \mu(B_j^{(0)}) = j\mu(i \widetilde{F}_0) = -j\min_j \Im{(F_0)_j}  =-j\tilde{\mu}_0.
\end{equation}
By \cref{lemma:greshgorin} an upper bound on the eigenvalues of $\frac{1}{2}\!\left(\mathbf{L}_N^{\rm T}+\mathbf{L}_N^{T\dagger}\right)$ is therefore
\begin{equation}
   \max_{j\in[1,N]}\left\{ -j\tilde{\mu}_0+\frac{1}{2} (2j-1)\norm{\widetilde{F}_1}_p \right\}.
\end{equation}
A sufficient condition for these eigenvalues to be non-positive is
\begin{equation}
    \tilde{\mu}_0\geq\ \nu\norm{G_1}_{\rm row,q},
\end{equation}
since $\tilde{\mu}_0\geq0$ and using \cref{eq:problem_1_rescaled} and \cref{lemma:F_1_p_F_1_row_q}. This condition can be rewritten as
\begin{equation}
 \nu \leq  \frac{ \|e^{i{\bf u}_0}\|_p}{R_p},
\end{equation}
which is satisfied by the assumption on $\nu$ given in Lemma, proving that $C({\bf L}_N^{\rm T})\le1$.\\

When $p\in [1,2]$ we have $\|e^{i{\bf u}_0}\|_2<\frac{\norm{e^{i{\bf u}_0}}_p}{R_p} =\nu $ since $R_p<1$ and $\|e^{i{\bf u}_0}\|_2\leq\|e^{i{\bf u}_0}\|_p$. When $p\in (2,\infty)$ we have $R_p<\min \left\{1,\frac{\|e^{i{\bf u}_0}\|_p}{\|e^{i{\bf u}_0}\|_2}\right\}=\frac{\|e^{i{\bf u}_0}\|_p}{\|e^{i{\bf u}_0}\|_2}$ so that $\|e^{i{\bf u}_0}\|_2<\|e^{i{\bf u}_0}\|_p/R_p=\nu$.  Then from \cref{eq:rescaling_norm_input} for all $p\in[1,\infty)$ the rescaled differential equation satisfies $ \gamma=\|e^{i{\bf x}_0}\|_2 <1$.
\end{proof}

\begin{remark} An improved stability condition $\nu\leq \tilde{\mu}_0/\|G_1\|_{\rm row, \infty}$ follows by adapting the proof of \cref{lemma:rescaling_C(L)<1} at \cref{eq:sum_off_diag_blocks_bound} to use $p=1$. The condition in \cref{lemma:rescaling_C(L)<1} has been used to simplify the statement of \cref{theorem:dissipative_result}.

\end{remark}
We now state and prove the final theorem for the case where the ODE is dissipative. 
\begin{theorem}[Fourier ODE solver]\label{theorem:dissipative_result}      
For \cref{prob:problem_statement_formal}, let $\{p,q\} \in [1,\infty)$ be such that $1/p+1/q=1$ and suppose $G_0, G_1$ and ${\bf u}_0$  satisfy  
    \begin{equation}
        \tilde{\mu}_0 := \min_j \Im{(G_0)_j}\ge 0 \quad \text{and} \quad
         R_p := \frac{\|G_1\|_{{\rm row},q} \|e^{i{\bf u}_0}\|_p}{\tilde{\mu}_0}< \min \left\{1,\frac{\|e^{i{\bf u}_0}\|_p}{\|e^{i{\bf u}_0}\|_2}\right\},
    \end{equation}
For some given $\tilde{\mu}_0$ and $R_p$.
Suppose we are also given $\|G_1\|_{{\rm row},q}$,
$\|e^{i{\bf u}_0}\|_p$, and $\|{\bf d}\|_q$ in addition to
the inputs specified in \cref{prob:problem_statement_formal}.
Then a quantum circuit can be constructed 
that returns $g({\bf u}(T))$ for any $T\in \mathbb{R}_+$ 
to an additive accuracy $\epsilon\in \mathbb{R}_+$ by making 
\begin{align}
\label{eq:complexityG0G1-dissipative}
\mathcal{O}\left( \frac{1}{\epsilon}  N^{9/2}k^{7/2} T^{3/2} z\left(\alpha+\frac{\tilde{\mu}_0\beta}{\|G_1\|_{\rm row, q}}\right)\cdot \log k \log^2 \left(\frac{(NT)^{3/2}\alpha\sqrt{k}z}{\epsilon} \right)\right) &\quad \text{calls to } U_{G_0}, U_{G_1} \nonumber \\
\mathcal{O}\left(\frac{N^{3/2}\sqrt{T}z}{\epsilon}\right)  &\quad \text{calls to } U_{u_0} \nonumber\\
\mathcal{O}\left(\frac{\sqrt{NT}z}{\epsilon}\right) &\quad \text{calls to } U_{d},
\end{align}
where 
\begin{align}
N &= \mathcal{O}\left(\frac{\log(Ks\|{\bf d}\|_q/\epsilon)}{\log(R_p^{-1})}\right) \nonumber\\
  k &= \mathcal{O} \left( \log_2 \left(\frac{NT\sqrt{\alpha}z}{\epsilon }\right)\right)\nonumber\\
     z &=\mathcal{O}\left( \frac{\|{\bf d}\|s\sqrt{\alpha} R_p}{\sqrt{1-R_p^2}}\right)\nonumber\\
     s &= \mathcal{O}\left(1+\left(\frac{\tilde{\mu}_0}{\|G_1\|_{\rm row,q}}\right)^K\right).
\end{align}
\end{theorem}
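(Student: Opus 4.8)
The plan is to assemble the four stages of \cref{sec:design}---rescaling, Carleman--Fourier linearization, solution of the linear system by the truncated Taylor approach, and readout by expectation-value estimation---and to control the total error with a triangle inequality that splits the target accuracy $\epsilon$ into three budgets of size $\epsilon/3$: one for truncating the Koopman system at level $N$, one for truncating the Taylor series at order $k$, and one for the block-encoding and estimation error. I would work throughout with the rescaled ODE \cref{eq:problem_1_rescaled}, fixing $\nu = \|e^{i\mathbf{u}_0}\|_p/R_p$ exactly as in \cref{lemma:rescaling_C(L)<1}, which guarantees $C(\mathbf{L}_N^{\rm T})\le 1$ and $\gamma:=\|e^{i\mathbf{x}_0}\|_2<1$. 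Two identities then do most of the simplifying work: this choice gives $\nu = \tilde{\mu}_0/\|G_1\|_{{\rm row},q}$, and it makes $\|\Psi_1(0)\|_p = \|e^{i\mathbf{x}_0}\|_p = R_p$.

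For the linearization error I would start from the readout identity $f(\mathbf{x}) = \sum_{j=1}^{K}\mathbf{c}_j\cdot\Psi_j(T)$ of \cref{eq:exact} and bound $|\sum_j \mathbf{c}_j\cdot\eta_j(T)| \le \sum_{j=1}^{K}\|\mathbf{c}_j\|_q\|\eta_j\|_p$. Feeding the dissipative bound of \cref{lemma:inf_time_error_bound} together with $\|F_1\|_{{\rm row},q}/\tilde{\mu}_0 = R_p/\|e^{i\mathbf{x}_0}\|_p$ collapses each component bound to $\|\eta_k\|_p \le R_p^{N+1}$, while $\|\mathbf{c}_j\|_q = \nu^{j}\|\mathbf{d}_j\|_q$ and $\nu^K = (\tilde{\mu}_0/\|G_1\|_{{\rm row},q})^K$ produce the factor $s$; summing over the $K$ blocks contributes the $K$ inside the logarithm. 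Requiring $R_p^{N+1}\,Ks\|\mathbf{d}\|_q \lesssim \epsilon$ then yields $N = \mathcal{O}(\log(Ks\|\mathbf{d}\|_q/\epsilon)/\log R_p^{-1})$. The Taylor error I would treat symmetrically, multiplying the final-step bound of \cref{lemma:taylor_error_bound} by $\|\mathbf{c}\|_q$, using $\|\mathbf{L}_N^{\rm T}\|_p \le N(\|\widetilde{F}_0\|_p+\|\widetilde{F}_1\|_p)=\mathcal{O}(N\alpha)$ to fix the step size from $\|\mathbf{L}_N^{\rm T}\|_p h \le 1$ (hence $m = T/h = \mathcal{O}(N\alpha T)$), and invoking the factorial decay in $k$ to obtain $k = \mathcal{O}(\log(NT\sqrt{\alpha}z/\epsilon))$.

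For the final stage I would write $f(\mathbf{x}) \approx \alpha_\mathcal{C}\alpha_{\mathcal{L}^{-1}}\alpha_\mathcal{B}\bra{0}U_\mathcal{C}^\dagger U_{\mathcal{L}^{-1}}U_\mathcal{B}\ket{0}$ as in \cref{eq:reducation_to_exp_of_unitary}, and estimate the scalar to additive error $\epsilon/(3\alpha_\mathcal{C}\alpha_{\mathcal{L}^{-1}}\alpha_\mathcal{B})$ with \cref{lem:readout}, giving $\mathcal{O}(\alpha_\mathcal{C}\alpha_{\mathcal{L}^{-1}}\alpha_\mathcal{B}/\epsilon)$ uses of the controlled unitary. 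The crucial bookkeeping is the product of scaling factors: $\alpha_\mathcal{C} = \mathcal{O}(s\|\mathbf{d}\|/\sqrt{m})$ from \cref{lemma:fourier_coefficient_preperation}, $\alpha_\mathcal{B} \le \gamma/\sqrt{1-\gamma^2}$ (controlled by $R_p$ under the dissipative condition) from \cref{lemma:complexity_inital_state} and \cref{remark:rescaling}, and $\alpha_{\mathcal{L}^{-1}} = 8e^2 m C(\mathbf{L}_N^{\rm T}) = \mathcal{O}(m)$ from \cref{lemma:complexity_A_inv}; substituting $m = \mathcal{O}(N\alpha T)$ makes the product $\mathcal{O}(\sqrt{NT}\,z)$ with $z$ exactly as stated. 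The per-query costs then multiply through: each $U_\mathcal{C}$ costs one $U_d$ (giving $\mathcal{O}(\sqrt{NT}z/\epsilon)$ calls to $U_d$), each $U_\mathcal{B}$ costs $\mathcal{O}(N)$ calls to $U_{u_0}$ (giving $\mathcal{O}(N^{3/2}\sqrt{T}z/\epsilon)$), and each $U_{\mathcal{L}^{-1}}$ costs the \cref{lemma:complexity_A_inv} count, in which $\alpha+\nu\beta = \alpha + \tilde{\mu}_0\beta/\|G_1\|_{{\rm row},q}$; choosing $\sigma,\tau$ small enough that $\alpha_\mathcal{C}\alpha_\mathcal{B}\mathcal{E}(\sigma)\le \epsilon/3$ turns $\log(1/\sigma)\log(1/\tau)$ into the $\log^2$ factor and yields the stated $U_{G_0},U_{G_1}$ complexity.

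The main obstacle I anticipate is the precision bookkeeping in this last stage rather than any single hard estimate: one must verify that the approximate block encoding of $\mathcal{A}^{-1}$, with its two-parameter error $\mathcal{E}(\sigma)=\sigma+8e^4m^2\sqrt{k+1}\,\tau\,C(\mathbf{L}_N^{\rm T})^2$, propagates through the expectation value as $\alpha_\mathcal{C}\alpha_\mathcal{B}\mathcal{E}(\sigma)$, and that setting both pieces below $\epsilon/(6\alpha_\mathcal{C}\alpha_\mathcal{B})$ keeps $\log(1/\sigma)$ and $\log(1/\tau)$ polylogarithmic in all parameters, producing the common argument $(NT)^{3/2}\alpha\sqrt{k}\,z/\epsilon$. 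A secondary care point is checking the simultaneous consistency of the constraints $\|\mathbf{L}_N^{\rm T}\|_p h \le 1$, $me^2/(k+1)!\le 1$, and the $\tau$-threshold of \cref{lemma:complexity_A_inv}, so that the chosen $N$, $k$, and $m$ satisfy the hypotheses of all invoked lemmas at once.
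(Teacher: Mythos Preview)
Your proposal is correct and follows essentially the same approach as the paper's proof: the same rescaling $\nu=\|e^{i\mathbf{u}_0}\|_p/R_p=\tilde{\mu}_0/\|G_1\|_{{\rm row},q}$ from \cref{lemma:rescaling_C(L)<1}, the same collapse of the Koopman bound via $\|F_1\|_{{\rm row},q}/\tilde{\mu}_0=1$ and $\|\Psi_1(0)\|_p=R_p$, the same $m=\mathcal{O}(N\alpha T)$ from $\|\mathbf{L}_N^{\rm T}\|_ph\le1$, and the same bookkeeping $\alpha_\mathcal{C}\alpha_{\mathcal{L}^{-1}}\alpha_\mathcal{B}=\mathcal{O}(\sqrt{NT}\,z)$. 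The only cosmetic difference is that the paper splits the error into four budgets of $\epsilon/4$ (Koopman, Taylor, block-encoding, expectation value) whereas you merge the last two into one $\epsilon/3$ budget before subdividing; this changes nothing in the asymptotics.
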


\begin{proof}The proof is divided in four parts: rescaling, the algorithm, its correctness and the complexity. \\

{\noindent \bf Rescaling:}
For the rescaling of the variable ${\bf u}$ given in \cref{eq:rescaling_definition} paramtertised by $\nu\in \mathbb{R}_+$, choose $
\nu =\frac{\norm{e^{i{\bf u}_0}}_p}{R_p}=\frac{\tilde{\mu}_0}{\|G_1\|_{\rm row,q}}$. Then the rescaled differential equation in the variable ${\bf x}(t)$ given in \cref{eq:problem_1_rescaled} satisfies $C({\bf L}_N^{\rm T})\leq 1$  and $\gamma=R_p<1$ from \cref{lemma:rescaling_C(L)<1}.\\

We then construct a quantum algorithm to estimate $f({\bf x}(T))$ for $f$ given in \cref{eq:rescaled_output_function} since $f({\bf x}(T))= g({\bf u}(T))$.\\

{\noindent \bf Algorithm:}
The algorithm for estimating $f({\bf x}(T))$ is as follows. 
\begin{enumerate}
    \item Select the parameters of the algorithm, $N,k,\sigma,\tau,\delta,m$;

\begin{align}\label{eq:dissipative_parameters}
 N&= \left\lceil
\frac{
\log \frac{4Ks\|{\bf d}\|_q}{\epsilon}
}{
\log R_p^{-1}
} \right\rceil
 \quad \nonumber\\ 
k &= \max\left\{\left\lceil \log_2 \left( \frac{4e^3}{\epsilon} s\norm{{\bf d}}m \frac{R_p}{\sqrt{1-R_p^2}} \right)\right\rceil , \lceil \log_2 me^2  \rceil\right\} \nonumber\\ 
\sigma &= c_1-c_2\tau \nonumber\\ 
\tau &= \min \left\{\frac{c_1}{1+c_2}, \frac{1}{4e^2m\sqrt{k+1}}\right\}\nonumber\\ 
    c_1 &= \frac{\epsilon\sqrt{m}}{4\|{\bf d}\|s }\frac{\sqrt{1-R_p^2}}{R_p}\nonumber\\  
c_2 &=  8e^4m^2\sqrt{k+1}\nonumber\\ 
    \delta &= 
    \frac{\epsilon}{\sqrt{m}\|{\bf d}\|s }\frac{\sqrt{1-R_p^2}}{R_p} \nonumber\\ 
    m &= \lceil TN\left(\alpha + \tilde{\mu}_0 \right)\rceil \nonumber\\
    s &= \max \left\{\frac{\tilde{\mu}_0}{\|G_1\|_{\rm row,q}}, 
    \left(\frac{\tilde{\mu}_0}{\|G_1\|_{\rm row,q}}\right)^K\right\}.
\end{align}

    \item Construct a $(\alpha_{\mathcal{L}^{-1}}, \mathcal{E} )$-block encoding 
$U_{\mathcal{L}^{-1}}$ of $\mathcal{L}^{-1}$ using \cref{lemma:complexity_A_inv} and \cref{lemma:complexity_L_inv}.

\item Construct a $d= \lceil 1+\log_2(m) + \log_2(N)+N\log_2(n)\rceil$-qubit unitary $U_\mathcal{B}$ acting on $d$-qubit register 1
    such that $\mathcal{B}_{N,m} = \alpha_{\mathcal{B}} U_\mathcal{B} \ket{0}_1$ using \cref{lemma:complexity_inital_state}.
    
\item Construct a unitary $U_\mathcal{C}$ such that 
    $\mathcal{C}^{\rm T} = \alpha_\mathcal{C}({}_1\!\bra{0}U_\mathcal{C}^\dagger)$ for $\mathcal{C}$ defined in \cref{eq:mathcal_C} using \cref{lemma:fourier_coefficient_preperation}.
    \item Use \cref{lem:readout} to compute $w$, a
    $\delta$-additive approximation to $\bra{0}U_\mathcal{C}^\dagger U_{\mathcal{L}^{-1}} U_\mathcal{B}\ket{0}$,
    where $\ket{0} = \ket{0}_1\ket{0^a}$.
    \item Return $\alpha_\mathcal{C}  \alpha_{\mathcal{L}^{-1}}  \alpha_\mathcal{B}  w$ as the estimate for
    $f({\bf x}(T))$.
\end{enumerate}

{\noindent \bf Correctness (error analysis):}
There are four sources of error: 1. Error from
truncation the Koopman variables,
2. Error from truncation of the Taylor series, 3. Error in the block encoding of $\mathcal{L}^{-1}$ and 4. Error 
in calculating the expectation value. 
In the notation introduced in \cref{sec:approach}, 
the total error in the approximation of $f({\bf x}(T))$ is bounded by the sum of these four sources;
\begin{multline} \underbrace{\abs{f({\bf x}(T))-\alpha_\mathcal{C} \alpha_{\mathcal{L}^{-1}} \alpha_\mathcal{B} w}}_{\text{total error}} \le \underbrace{\abs{f({\bf x}(T))- {\bf c}\cdot \Psi^{(N)}(T)}}_{\text{Koopman truncation error}} \\
    + \underbrace{\abs{ {\bf c}\cdot \Psi^{(N)}(T) - {\bf c}\cdot \Phi_m}}_{\text{Taylor truncation error}} \\
    + \underbrace{\abs{ {\bf c}\cdot \Phi_m - \alpha_\mathcal{C} \alpha_{\mathcal{L}^{-1}}\alpha_\mathcal{B} \bra{0}U_\mathcal{C}^\dagger U_{\mathcal{L}^{-1}} U_\mathcal{B}\ket{0}}}_{\text{block-encoding error}} \\
    + \underbrace{\abs{\alpha_\mathcal{C} \alpha_{\mathcal{L}^{-1}}\alpha_\mathcal{B} \bra{0}U_\mathcal{C}^\dagger U_{\mathcal{L}^{-1}} U_\mathcal{B}\ket{0}
    -\alpha_\mathcal{C}\alpha_{\mathcal{L}^{-1}}\alpha_\mathcal{B} w}}_{\text{expectation value calculation error}}.
\end{multline}

We now bound each error term. 
\begin{enumerate}
    \item Koopman truncation error: We can ensure the Koopman truncation error 
    stays below $\epsilon/4$ by choosing sufficiently large $N$ as follows. Using \cref{eq:f^N} we can write
    \begin{align}
    \abs{f({\bf x}(T))- {\bf c}\cdot \Psi^{(N)}(T)}
    &= \abs{\sum_{l=1}^{K} {\bf c}_l\cdot(\Psi_l(T)-\Psi^{(N)}_l(T))}\nonumber \\
    &\le \sum_{l=1}^{K}\norm{{\bf c}_l}_q\norm{\eta_l(T)}_p.
    \end{align}
    Now we can use \cref{lemma:inf_time_error_bound}, which is valid since $\tilde{\mu}_0$ and $R_p$ are invariant under the rescaling  
    \begin{equation}
     \sum_{l=1}^{K}\norm{{\bf c}_l}_q\norm{\eta_l(T)}_p  \leq  
     \norm{{\bf c}}_q(\|e^{i{\bf x}_0}\|_p)^{N+1} \sum_{l=1}^{K}  \left(\frac{\|F_1\|_{\rm{row},{q}}}{\tilde{\mu}_0}\right)^{N+1-l}  \leq \frac{\epsilon}{4}.  
    \end{equation}
    To ensure that the truncation error stays below $\epsilon/4$, it suffices to choose
    $N$ according as
    \begin{equation}
     (\|e^{i{\bf x}_0}\|_p)^{N+1}\leq \frac{\epsilon}{4K\|{\bf c}\|_q}   \Longrightarrow
N \ge
\left\lceil
\frac{
\log \frac{4Ks\|{\bf d}\|_q}{\epsilon}
}{
\log R_p^{-1}
}
\right\rceil.
    \end{equation}
Here we have used
$\frac{\tilde{\mu}_0}{\|F_1\|_{\rm row,q}} = \frac{\tilde{\mu}_0}{\nu|\|G_1\|_{\rm row,q}}=1$, $\|e^{i{\bf x}_0}\|_p=\frac{1}{\nu}\|e^{i{\bf u}_0}\|_p=R_p$ and the following bound on $\|{\bf c}\|_q$ which follows from the definition of the $q$-norm,
 \begin{equation}
    \|{\bf c}\|_q^q=\sum_{{\bf j}} \abs{ c_{\bf j}}^q=\sum_{{\bf j}}   \nu^{q\abs{{\bf j}}} \abs{d_j}^q \leq \max_{\bf j} \nu^{q\abs{{\bf j}}}\sum_{\bf j} \abs{d_{\bf j}}^q = \left(s\|{\bf d}\|_q\right)^q.
    \end{equation}

    \item Taylor truncation error: We can ensure the Taylor truncation  error 
    stays below $\epsilon/4$ by choosing sufficiently large Taylor order $k$ as follows. 
    Since $\alpha \geq \norm{F_0}_\infty$  and $\norm{F_1}_{\rm row,q}=\nu\norm{G_1}_{\rm row,q}=\tilde{\mu}_0$, our choice of $h = T/m$ satisfies 
    \begin{align}
 h \leq \frac{1}{N\left(\alpha + \tilde{\mu}_0 \right)} \le \frac{1}{N\left(\norm{F_0}_\infty + \norm{F_1}_{\rm row,q}\right)}.
\end{align}
We therefore have $\norm{{\bf L}_N^{\rm T}h}\le 1$.
    Furthermore, since $k \ge \log_2(me^2)$, we have
\begin{align}
   me^2 \le 2^k \implies m\leq \frac{(k+1)!}{e^2},
\end{align}
where we used $(k+1)!\geq 2^{k}$. 
Then by using \cref{lemma:taylor_error_bound}, with \cref{eq:f^N,eq:f_n_hat}, we get
\begin{align}
\abs{{\bf c}\cdot \Psi^{(N)}(mh) - {\bf c}\cdot \Phi_m}
&\le 
\norm{{\bf c}}\norm{\Psi^{(N)}(mh)-\Phi_m}\nonumber\\
&\le
\norm{{\bf c}}\frac{(e-1)e^2m}{(k+1)!}
C({\bf L}_N^{\rm T})\norm{\Psi^{(N)}(0)}   .
\end{align}
To ensure that the Taylor truncation error is less than $\epsilon/4$, we choose 
the truncation order $k$ as
\begin{equation}
\norm{{\bf c}}\frac{(e-1)e^2m}{(k+1)!}
C({\bf L}_N^{\rm T})\norm{\Psi^{(N)}(0)}   \le \frac{\epsilon}{4}\,\Longrightarrow(k+1)! \geq \frac{4e^3}{\epsilon} s\norm{{\bf d}}m \frac{R_p}{\sqrt{1-R_p^2}},   
\end{equation}
where we have used $C({\bf L}_N^{\rm T})\leq 1$, $\norm{\Psi^{(N)}(0)}\leq \frac{R_p}{\sqrt{1-R_p^2}}$ since $\gamma <1$, and $\|{\bf c}\|_p\leq s\|{\bf d}\|_p$.
      Therefore it suffices to choose
      \begin{align}
       k \geq \left\lceil \log_2 \left( \frac{4e^3}{\epsilon} s\norm{{\bf d}}m \frac{R_p}{\sqrt{1-R_p^2}}\right)\right\rceil. 
    \end{align}

    \item Block encoding error: We can ensure the block encoding error of the truncated Taylor series
    stays below $\epsilon/4$ by choosing sufficiently a small error, $\sigma$, in the block encoding of $\mathcal{A}^{-1}$ as follows,
    \begin{align}
\abs{{\bf c}\cdot \Phi_m - \alpha_\mathcal{C} \alpha_{\mathcal{L}^{-1}}\alpha_\mathcal{B} \bra{0}U_\mathcal{C}^\dagger U_{\mathcal{L}^{-1}} U_\mathcal{B}\ket{0}}
&=
 \abs{\alpha_\mathcal{C} \alpha_\mathcal{B}({}_1\!\bra{0}U_\mathcal{C}^\dagger \mathcal{L}^{-1} U_\mathcal{B}\ket{0}_1) - \alpha_\mathcal{C}\alpha_{\mathcal{L}^{-1}}\alpha_\mathcal{B} \bra{0}U_\mathcal{C}^\dagger U_{\mathcal{L}^{-1}} U_\mathcal{B}\ket{0}}\nonumber \\
&\le \alpha_\mathcal{C} \alpha_\mathcal{B}\mathcal{E}(\sigma),
\end{align}
where we have used \cref{lemma:complexity_A_inv} and \cref{remark:block_enc_L} in the last step to obtain the error in the block encoding $U_{\mathcal{L}^{-1}}$ of $\mathcal{L}^{-1}$. Now using \cref{eq:alpha_E_sigma} in \cref{lemma:complexity_A_inv},
we obtain an upper bound for the error $\sigma$ of the block encoding of $\mathcal{A}^{-1}$
such that the error of the block encoding stays below $\epsilon/4$. We get  
\begin{equation}
\alpha_\mathcal{C} \alpha_\mathcal{B}\mathcal{E}(\sigma) \leq \frac{\epsilon}{4}\Longrightarrow \sigma  \leq \frac{\epsilon\sqrt{m}}{4\|{\bf d}\| s }\frac{\sqrt{1-R_p^2}}{R_p} - 8e^4m^2\sqrt{k+1}\tau,
\end{equation}
where we have used $\alpha_\mathcal{B}\leq\frac{R_p}{\sqrt{1-R_p^2}}$, \cref{lemma:fourier_coefficient_preperation} and $C({\bf L}_N^{\rm T})\leq 1$. The choice for the parameters $\sigma,\tau$ given in \cref{eq:dissipative_parameters} ensures that $\sigma >0$ and $\tau \leq\frac{1}{4e^2mC({\bf L}_N^{\rm T})\sqrt{k+1}}$ so that \cref{lemma:complexity_A_inv} is valid.
    
    \item Error in calculating the expectation value of the unitary operator from \cref{sec:Reduction to expectation value of a unitary operator}: We can set the error in calculating the expectation value 
    to be bounded from above by $\epsilon/4$ using \cref{lemma:complexity_A_inv}, $\alpha_\mathcal{B}\leq\frac{R_p}{\sqrt{1-R_p^2}}$, \cref{lemma:fourier_coefficient_preperation} and $C({\bf L}_N^{\rm T})\leq 1$ as follows;
    \begin{align}
        \abs{\alpha_\mathcal{C} \alpha_{\mathcal{L}^{-1}}\alpha_\mathcal{B} \bra{0}U_\mathcal{C}^\dagger U_{\mathcal{L}^{-1}} U_\mathcal{B}\ket{0}
    -\alpha_\mathcal{C} \alpha_{\mathcal{L}^{-1}}\alpha_\mathcal{B} w} \le 
    \alpha_\mathcal{C} \alpha_{\mathcal{L}^{-1}}\alpha_\mathcal{B} \delta \le \frac{\epsilon}{4}\\
    \Longrightarrow \delta  \leq \frac{\epsilon}{\sqrt{m}\|{\bf d}\|s} \frac{\sqrt{1-R_p^2}}{R_p} .
    \end{align}

\end{enumerate}

{\noindent \bf Complexity:}
 The complexity of the algorithm comes from the computation of $\expval{0|U_\mathcal{C}^\dagger U_{\mathcal{L}^{-1}} U_\mathcal{B}|0}$ which, from \cref{lem:readout} makes $\mathcal{O}(1/\delta)$ calls to each of
$U_\mathcal{C}$, $U_{\mathcal{L}^{-1}}$ and $U_\mathcal{B}$. The complexity of constructing  $U_{\mathcal{L}^{-1}}, U_\mathcal{B}$  and $U_\mathcal{C}$ is given in \cref{lemma:complexity_A_inv}, \cref{lemma:complexity_inital_state} and \cref{lemma:fourier_coefficient_preperation} respectively. The final result follows from the choice for the parameters $\delta,m$ and $\mathcal{O}(1/\sigma) = \mathcal{O}(1/\tau) = \mathcal{O}(c_2/c_1)$. 
\end{proof}

\begin{remark}
We now justify our informal statement of \cref{theorem:dissipative_result}
in \cref{sec:contribution} by analyzing 
the complexity of our algorithm under some reasonable conditions.
Suppose the inputs to the Fourier ODE problem satisfy
\begin{equation}
    \alpha \in \mathcal{O}(\norm{G_0}_{\infty}),\quad 
    \beta \in \mathcal{O}(\norm{G_1}),\quad
    K\in \mathcal{O}(1), \quad
    (1-R_p)^{-1} \in \mathcal{O}(1). 
\end{equation}
Then by using $\tilde{\mu}_0 \le \norm{G_0}$, we obtain
\begin{align}
N &= \mathcal{O}\left(\log\left(\frac{s\|{\bf d}\|_q}{\epsilon}\right)\right)\nonumber\\
k &= \mathcal{O} \left( \log \left(\frac{NTz\sqrt{\norm{G_0}_{\infty}}}{\epsilon }\right)\right)\nonumber\\
z &= \mathcal{O}\left( \|{\bf d}\|s\sqrt{\norm{G_0}_{\infty}}\right)\nonumber\\
s &= \mathcal{O}\left(1+\left(\frac{\tilde{\mu}_0}{\|G_1\|_{\rm row,q}}\right)^K\right).
\end{align}
Then the complexity of queries to $U_{G_0}$ and $U_{G_1}$ is given by
\begin{multline}
\label{eq:overallquery1}
    \mathcal{O}\left( \frac{1}{\epsilon}  N^{9/2}k^{7/2} T^{3/2} z\left(\alpha+\frac{\tilde{\mu}_0\beta}{\|G_1\|_{\rm row, q}}\right)\cdot \log k \log^2 \left(\frac{(NT)^{3/2}(\alpha+\tilde{\mu}_0)\sqrt{k}z}{\epsilon} \right)\right)\\ 
    \in \tilde{\mathcal{O}}\left( \frac{T^{3/2} z 
    \norm{G_0}_{\infty}}{\epsilon}\left(1+\frac{\norm{G_1}}{\|G_1\|_{\rm row, q}}\right) \right) \\
    \in 
    \tilde{\mathcal{O}}\left(\frac{T^{3/2}\norm{G_0}_{\infty}^{3/2}
     \norm{{\bf d}}}{\epsilon}\left(1+\frac{\norm{G_1}}{\|G_1\|_{\rm row, q}}\right)
     \left(1+ \left(\frac{\tilde{\mu}_0}{\|G_1\|_{\rm row,q}}\right)^K\right)
    \right),
\end{multline}
where $\tilde{\mathcal{O}}$ suppresses polylogarithmic factors in
\begin{equation}
    {\rm polylog}
     \left(n,T,\norm{G_0}_{\infty},\norm{{\bf d}},1/\epsilon,
     \left(1+ \left(\frac{\tilde{\mu}_0}{\|G_1\|_{\rm row,q}}\right)^K\right)
     \right).
\end{equation}
Note that we do not include $\norm{{\bf d}}_q$ inside the polylog expression 
as $\norm{{\bf d}}_q \le \norm{{\bf d}}_1 \le \sqrt{Kn^K}\norm{{\bf d}}_1$,
and therefore we have ${\rm polylog}(\norm{{\bf d}}_q) \in {\rm polylog}(n,\norm{{\bf d}})$.\\

The complexities of queries to $U_{u_0}$ and $U_{d}$ can be simplified similarly to
\begin{align}
    \mathcal{O}\left(\frac{N^{3/2}\sqrt{T}z}{\epsilon}\right) \in  \tilde{\mathcal{O}}\left(\frac{\sqrt{T}\norm{G_0}_{\infty}^{1/2}\norm{{\bf d}}}{\epsilon}\left(1+ \left(\frac{\tilde{\mu}_0}{\|G_1\|_{\rm row,q}}\right)^K\right)\right), \\
    \mathcal{O}\left(\frac{\sqrt{NT}z}{\epsilon}\right) \in 
    \tilde{\mathcal{O}}\left(\frac{\sqrt{T}\norm{G_0}_{\infty}^{1/2}\norm{{\bf d}}}{\epsilon}\left(1+\left(\frac{\tilde{\mu}_0}{\|G_1\|_{\rm row,q}}\right)^K\right)\right),
\end{align}
respectively. The overall query complexity, which is the sum of queries to
$U_{G_0}$, $U_{G_1}$, $U_{u_0}$ and $U_{d}$,
is therefore given by 
\cref{eq:overallquery1}, which is reported in the informal theorem statement
in \cref{sec:contribution}.
\end{remark}

\begin{remark}
    Following \cref{remark:inequality-testing}, the complexity of queries to $U_{G_0}$
    and $U_{G_1}$ in \cref{eq:complexityG0G1-dissipative} 
    can be further improved by a factor of $N^3$. 
\end{remark}

\subsection{Algorithm and complexity without dissipative conditions}
  Finally, we present our algorithm for the case where the ODE does not satisfy  
  dissipative conditions but the solution can only be obtained at a short final time. The stability conditions given in \cref{lemma:rescaling_C(L)<1} subsumed the dissipative condition, therefore in order to generalize to the non-dissipative regime we first obtain an upper bound on the solution growth in this regime.

\begin{lemma}[Solution growth bound]\label{lemma:sol_growth_bound}
For $p\in[1,\infty)$ and $\nu \in \mathbb{R}_+$ for the rescaling given in \cref{eq:rescaling_definition}, we have
\begin{equation}
   C_p({\bf L}_N^{\rm T})=\sup_{t\in[0,T]}\norm{\exp({\bf L}_N^{\rm T}t)}_p \leq \exp\left(T\left(N\nu\|G_1\|_{\rm row, q} + \max\{-\tilde{\mu}_0, -N\tilde{\mu}_0\}\right)\right) := \Gamma_p(N,T,\nu) 
\end{equation}
for ${\bf L}_N^{\rm T}$ defined in \cref{eq:LN} and $\tilde{\mu}_0=\min \Im{(G_0)_j}$.
\end{lemma}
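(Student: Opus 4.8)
The plan is to control the growth of $\exp({\bf L}_N^{\rm T}t)$ through the logarithmic norm (matrix measure) $\mu_p(\cdot)$ induced by the $p$-norm, defined as in \cref{eq:sec_def}, which satisfies the standard bound $\|\exp(At)\|_p \le \exp(\mu_p(A)t)$ for $t\ge 0$ (the $p$-norm form of \cref{lemma:expA}), together with the subadditivity $\mu_p(A+B)\le \mu_p(A)+\mu_p(B)$ and $\mu_p(B)\le \|B\|_p$. I would split the truncated generator as ${\bf L}_N^{\rm T} = D + U$, where $D$ collects the diagonal blocks $[{\bf L}_N^{\rm T}]_{j,j} = \mathds{1}^{\otimes N-j}\otimes B_j^{(0)}$ and $U$ collects the super-diagonal blocks $[{\bf L}_N^{\rm T}]_{j,j+1}$ from \cref{eq:Lin_ODE_with_zeros}. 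Because each $B_j^{(0)}$ is diagonal, $D$ is a diagonal matrix, and the argument reduces to bounding $\mu_p(D)$ and $\|U\|_p$ separately.

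For the diagonal part, since $\widetilde{F}_0 = \mathrm{diag}(G_0)$, every diagonal entry of $B_j^{(0)} = \sum_{l} I^{\otimes l}\otimes i\widetilde{F}_0 \otimes I^{\otimes(j-1-l)}$ is a sum of $j$ numbers of the form $i(G_0)_a$, whose real parts equal $-\Im{(G_0)_a}\le -\tilde{\mu}_0$. Hence every diagonal entry of $D$ has real part at most $-j\tilde{\mu}_0$ in the $j$th block, and using that for a diagonal matrix $\mu_p(D) = \max_i \Re{D_{ii}}$ (independent of $p$), I get $\mu_p(D)\le \max_{1\le j\le N}(-j\tilde{\mu}_0) = \max\{-\tilde{\mu}_0,-N\tilde{\mu}_0\}$, the two cases corresponding to the sign of $\tilde{\mu}_0$. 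Capturing this sign-dependent decay is the key point: a crude estimate such as $\mu_p(D)\le \|D\|_p$ (or \cref{lemma:LN_le_F}) would lose the favourable $-\tilde{\mu}_0/-N\tilde{\mu}_0$ behaviour, so it is essential to pass through the logarithmic norm of the diagonal block rather than its operator norm.

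For the off-diagonal part, $U$ has nonzero blocks only in positions $(j,j+1)$, so for a block vector $v=(v_1,\dots,v_N)$ one has $\|Uv\|_p^p = \sum_{j=1}^{N-1}\|[{\bf L}_N^{\rm T}]_{j,j+1}v_{j+1}\|_p^p \le (\max_j\|[{\bf L}_N^{\rm T}]_{j,j+1}\|_p)^p\|v\|_p^p$, giving $\|U\|_p \le \max_{1\le j\le N-1}\|[{\bf L}_N^{\rm T}]_{j,j+1}\|_p$. Using $\|\mathds{1}^{\otimes N-j-1}\otimes({\bf e}_1\otimes B_{j+1}^{(1)})\|_p = \|B_{j+1}^{(1)}\|_p$ (identity factors and the unit vector ${\bf e}_1$ leave the induced $p$-norm unchanged) together with $\|B_{j+1}^{(1)}\|_p \le j\|\widetilde{F}_1\|_p$ and $\|\widetilde{F}_1\|_p = \|F_1\|_{{\rm row},q} = \nu\|G_1\|_{{\rm row},q}$ (\cref{lemma:F_1_p_F_1_row_q}), I obtain $\|U\|_p \le (N-1)\nu\|G_1\|_{{\rm row},q} \le N\nu\|G_1\|_{{\rm row},q}$.

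Combining through subadditivity yields $\mu_p({\bf L}_N^{\rm T}) \le \mu_p(D) + \|U\|_p \le N\nu\|G_1\|_{{\rm row},q} + \max\{-\tilde{\mu}_0,-N\tilde{\mu}_0\}$, and therefore $\|\exp({\bf L}_N^{\rm T}t)\|_p \le \exp(\mu_p({\bf L}_N^{\rm T})t)$ for each $t\in[0,T]$. Taking the supremum over $t\in[0,T]$ gives $\Gamma_p(N,T,\nu)$, since the exponent is non-negative in the regime of interest (it equals $N\nu\|G_1\|_{{\rm row},q}-N\tilde{\mu}_0\ge 0$ whenever $\tilde{\mu}_0\le 0$), so $\exp(\mu_p({\bf L}_N^{\rm T})t)$ is non-decreasing and maximised at $t=T$. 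The one point needing care is exactly this monotonicity step: if $\tilde{\mu}_0>0$ is large enough that the exponent becomes negative, the supremum is attained at $t=0$ and equals $1$, and one should then read the bound as $C_p \le \max\{1,\Gamma_p(N,T,\nu)\}$; I would flag this but note it is immaterial for the non-dissipative application in which the lemma is invoked. An alternative, fully elementary route avoiding the $p$-norm logarithmic-norm theory would replace this paragraph by a Duhamel expansion of $\exp((D+U)t)$ followed by Gr\"onwall's inequality, using $\|\exp(Dt)\|_p \le \exp(\mu_p(D)t)$ for the diagonal propagator; this produces the identical bound.
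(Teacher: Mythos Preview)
Your proposal is correct and follows essentially the same route as the paper: split ${\bf L}_N^{\rm T}$ into its block-diagonal part $D=H_0$ and strictly upper-block part $U=H_1$, use subadditivity of the logarithmic norm together with $\mu_p(U)\le\|U\|_p$, then bound $\mu_p(D)$ via the diagonal structure of the $B_j^{(0)}$ and $\|U\|_p$ via $\|B_{j+1}^{(1)}\|_p\le j\|\widetilde F_1\|_p$ and \cref{lemma:F_1_p_F_1_row_q}. Your treatment of the final monotonicity step is in fact slightly more careful than the paper's, which simply invokes ``for sufficiently large $N$'' to ensure the exponent is non-negative.
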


The proof can be found in \cref{app:norm_stabODE}. 
We now present the more general final result that does not require the problem to be dissipative. In the following we use $\Gamma(N,T,\nu) = \Gamma_2(N,T,\nu)$.

\begin{theorem}[Complexity without dissipative conditions]
\label{thm:non-dissipative-result}
For \cref{prob:problem_statement_formal} let $\{p,q\} \in [1,\infty)$ be such that $1/p+1/q=1$.\\

\noindent 
Given $\tilde{\mu}_0$, $\|G_1\|_{{\rm row},q}$, and  $\|\mathbf d\|_{q}$, 
and any chosen $r,\nu$ satisfying
\begin{align}
    r \ge e, \nonumber\\ 
    \nu > \max\{r\|e^{i{\bf u}_0}\|_p,\sqrt{2}\|e^{i{\bf u}_0}\|\},
    % 1<r< \min\left\{\frac{\nu}{\|e^{i{\bf u}_0}\|}, \dfrac{4Ks\|\mathbf d\|_{p}}{\epsilon} \right\}.
\end{align}
a quantum circuit can be constructed that returns  
$g({\bf u}(T))$ for $T\in[0, \widetilde{T}_{\rm max}]$ 
to an accuracy $\epsilon\in \mathbb{R}_+$ by making
\begin{align}
\label{eq:complexity-nondissipative}
\mathcal{O}\left( \frac{1}{\epsilon}  N^{9/2}k^{7/2} \widetilde{T}_{\rm max}^{3/2}\tilde{q}(\alpha+\nu\beta) \cdot \log k \log^2 \left(\frac{(N\widetilde{T}_{\rm max})^{3/2}\sqrt{k}\tilde{q}(\alpha +\nu\|G_1\|_{\rm row ,q}) \Gamma}{\epsilon}\right)\right) &\quad \text{calls to } U_{G_0}, U_{G_1}, \\
    \mathcal{O}\left(\frac{N^{3/2}\sqrt{\widetilde{T}_{\rm max}}\tilde{q}}{\epsilon} \right)  &\quad \text{calls to } U_{u_0}, \\
    \mathcal{O}\left(\frac{\sqrt{\widetilde{T}_{\rm max}N}\tilde{q}}{\epsilon}\right) &\quad \text{calls to } U_{d},
\end{align}
where 
\begin{align}\label{eq:params_non_dissipative}
\widetilde{T}_{\rm max} = \min\left\{\frac{1}{\max\left\{\alpha,
    \nu\norm{G_1}_{{\rm row},q}\right\}(1+1/r)}\ln \left(\frac{\nu}{r\|e^{i{\bf u}_0}\|_p}\right), 
    \frac{\ln{r/e}}{\alpha+
    \nu\norm{G_1}_{{\rm row},q}}\right\},\\ 
 N \in \mathcal{O} \left(\log\left(1+\dfrac{4K\,s\,\|\mathbf d\|_{q}}{r\epsilon}\right)\right),\\
%  \left(
% \frac{
% \log\left(\dfrac{K\,s\,\|\mathbf d\|_{p}}{r\,\epsilon}\right)
% }{
% \log\left(\dfrac{r}{e^{(\alpha+\nu\|G_1\|_{\mathrm{row},q})T}}\right)
% }
% \right)
    k \in \mathcal{O}\left(\log_2 \left( \frac{1}{\epsilon} N\widetilde{T}_{\rm max} \tilde{q}\sqrt{\alpha +\nu\|G_1\|_{\rm row, q}} \right)\right),\\
    \tilde{q} = s\|{\bf d}\|\Gamma\sqrt{\alpha+ \nu\|G_1\|_{\rm row ,q}},\\
    s = \max\{\nu,\nu^K\},\\
    \Gamma = \Gamma(N,T,\nu).
\end{align}

\end{theorem}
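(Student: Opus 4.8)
The plan is to mirror the four-part structure of the proof of \cref{theorem:dissipative_result}---rescaling, algorithm, correctness, and complexity---since the quantum circuit itself is unchanged: it still constructs the block encoding $U_{\mathcal{L}^{-1}}$ of $\mathcal{L}^{-1}$ (\cref{lemma:complexity_A_inv,lemma:complexity_L_inv}), the state-preparation unitary $U_{\mathcal{B}}$ (\cref{lemma:complexity_inital_state}) and the coefficient unitary $U_{\mathcal{C}}$ (\cref{lemma:fourier_coefficient_preperation}), and reads out $\expval{0|U_{\mathcal{C}}^\dagger U_{\mathcal{L}^{-1}} U_{\mathcal{B}}|0}$ via \cref{lem:readout}. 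Only the parameter choices and the error bounds differ, because dissipativity is no longer available to supply $C({\bf L}_N^{\rm T})\le 1$ or the infinite-time truncation estimate of \cref{lemma:inf_time_error_bound}.

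For the rescaling step I would explain the two hypotheses on $\nu$. The condition $\nu > r\norm{e^{i{\bf u}_0}}_p$ forces $\norm{\Psi_1(0)}_p = \norm{e^{i{\bf u}_0}}_p/\nu < 1/r$, which is exactly the hypothesis of \cref{lemma:finitetimedecreasingnorm}; that lemma then guarantees $\norm{\Psi_1(t)}_p \le 1/r$ on $[0,T_r]$, and restricting to $T\le \widetilde{T}_{\rm max} \le T_r$ keeps us inside this window. The condition $\nu > \sqrt{2}\norm{e^{i{\bf u}_0}}$ forces $\gamma = \norm{e^{i{\bf x}_0}} < 1/\sqrt{2}$, so by \cref{remark:rescaling} the normalization $\alpha_{\mathcal{B}} = \norm{\Psi^{(N)}(0)}$ stays $\mathcal{O}(1)$ as $N$ grows. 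The precise form of $\widetilde{T}_{\rm max}$ is taken from the remark following \cref{thm:finitetimebound}, with $\norm{\widetilde F_0}_p = \norm{G_0}_\infty \le \alpha$ and $\norm{\widetilde F_1}_p = \nu\norm{G_1}_{{\rm row},q}$.

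The correctness argument again splits the total error into Koopman truncation, Taylor truncation, block-encoding, and expectation-value errors, each to be bounded by $\epsilon/4$, but with two substitutions. First, I would replace \cref{lemma:inf_time_error_bound} by the finite-time bound \cref{thm:finitetimebound}, so that $\abs{f({\bf x}(T)) - {\bf c}\cdot\Psi^{(N)}(T)} \le \norm{{\bf c}}_q\norm{\eta(T)}_p \le \tfrac{\norm{{\bf c}}_q}{r}\bigl(e^{\Lambda T}/r\bigr)^N$ with $\Lambda = \alpha + \nu\norm{G_1}_{{\rm row},q}$; since $T\le\widetilde{T}_{\rm max}\le \ln(r/e)/\Lambda$ gives $e^{\Lambda T}/r \le 1/e$, the bound decays like $e^{-N}$ and a logarithmic $N$ suffices, yielding the stated $N \in \mathcal{O}(\log(1+4Ks\norm{{\bf d}}_q/(r\epsilon)))$ after bounding $\norm{{\bf c}}_q \le s\norm{{\bf d}}_q$ and collecting the $K$ terms of the Fourier sum. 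Second, wherever the dissipative proof invoked $C({\bf L}_N^{\rm T})\le 1$---in \cref{lemma:taylor_error_bound} for the Taylor error and in $\alpha_{\mathcal{L}^{-1}} = 8e^2 m C({\bf L}_N^{\rm T})$ and $\mathcal{E}(\sigma)$ from \cref{lemma:complexity_A_inv}---I would instead insert $C({\bf L}_N^{\rm T}) \le \Gamma$ from \cref{lemma:sol_growth_bound}. Propagating this $\Gamma$ through fixes $k$, $\sigma$, $\tau$ and $\delta$ exactly as in the dissipative case but with $\norm{{\bf d}}$-and-$\Gamma$-dependent thresholds collected into $\tilde q = s\norm{{\bf d}}\Gamma\sqrt{\alpha+\nu\norm{G_1}_{{\rm row},q}}$.

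The complexity then follows by multiplying the $\mathcal{O}(1/\delta)$ readout cost from \cref{lem:readout} by the per-call costs of \cref{lemma:complexity_A_inv,lemma:complexity_inital_state,lemma:fourier_coefficient_preperation} and substituting the parameter choices. I expect the main obstacle to be controlling $\Gamma$, which by \cref{lemma:sol_growth_bound} grows exponentially in both $N$ and $T$: the key is that the time window was chosen so that $T\nu\norm{G_1}_{{\rm row},q}\le \ln(r/e)$, whence $\Gamma \le r^{\mathcal{O}(N)}$, and with $N$ only logarithmic this keeps $\tilde q$---and hence the whole query count---polylogarithmic in $n$. The delicate point is that the single quantity $\widetilde{T}_{\rm max}$ must simultaneously lie inside the validity region $[0,T_r]$ of \cref{lemma:finitetimedecreasingnorm} and be small enough both to make the truncation ratio $e^{\Lambda T}/r$ bounded away from $1$ and to keep $\Gamma$ tame; verifying that the stated $\widetilde{T}_{\rm max}$ meets all three requirements at once is where the argument is most constrained.
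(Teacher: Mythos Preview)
Your proposal is correct and follows essentially the same four-part structure and substitutions as the paper's proof: replacing the infinite-time truncation bound by \cref{thm:finitetimebound}, replacing every occurrence of $C({\bf L}_N^{\rm T})\le 1$ by $\Gamma$ from \cref{lemma:sol_growth_bound}, and otherwise reusing the algorithm and complexity derivation of \cref{theorem:dissipative_result} verbatim. Your closing observation about $\widetilde T_{\max}$ having to satisfy three constraints simultaneously (validity of \cref{lemma:finitetimedecreasingnorm}, decay of $e^{\Lambda T}/r$, and control of $\Gamma$) is exactly the point, though the paper defers the explicit $\Gamma\le r^{\mathcal O(N)}$ bookkeeping to the remark following the theorem rather than to the proof itself.
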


\begin{proof}The proof is divided into four parts: rescaling, the algorithm, its correctness and finally the complexity. \\

{\noindent \bf Rescaling:}
For the rescaling of the variable ${\bf u}$ given in \cref{eq:rescaling_definition} paramtertised by $\nu\in \mathbb{R}_+$, with $\nu >\sqrt{2}\norm{e^{i{\bf u}_0}}$, the rescaled differential equation in the variable ${\bf x}(t)$ given in \cref{eq:problem_1_rescaled} satisfies $\gamma<\frac{1}{\sqrt{2}}$ from \cref{eq:rescaling_norm_input}.\\

We then construct a quantum algorithm to estimate $f({\bf x}(T))$ for $f$ given in \cref{eq:rescaled_output_function} since $f({\bf x}(T)= g({\bf u}(T))$.\\

{\noindent \bf Algorithm:}
The algorithm for estimating $f({\bf x}(T))$ is identical to \cref{theorem:dissipative_result} with the parameters $N,k,\sigma,\tau,\delta,m$ chosen to be
\begin{align}\label{eq:params_non_dissipative}
    N =
\left\lceil
\frac{
\log\left(\max\left(\dfrac{4Ks\|\mathbf d\|_{p}}{r\,\epsilon}, 1\right)\right)
}{
\log\left(\dfrac{r}{e^{(\alpha+\nu\|G_1\|_{\mathrm{row},q})T}}\right)
}
\right\rceil\\
    k = \max \left\{\left\lceil \log_2 \left( \frac{4e^3}{\epsilon} s \norm{{\bf d}}m \Gamma \right)\right\rceil , \lceil \log_2 me^2\rceil \right\} \\
   \sigma = c_1-c_2\tau,\\
    \tau = \min \left\{\frac{c_1}{1+c_2}, \frac{1}{4e^2m\sqrt{k+1}\Gamma}\right\},\\
    c_1 = \frac{\epsilon\sqrt{m}}{4\|{\bf d}\| s }, \\c_2 =  8e^4m^2\sqrt{k+1}\Gamma^2,\\
    \delta =\frac{\epsilon}{\sqrt{m}\|{\bf d}\|s \Gamma} \\ 
    m = \lceil TN\left(\alpha + \nu\norm{G_1}_{\rm row,q} \right)\rceil
    \end{align}

{\noindent \bf Correctness (error analysis):}
The four sources of error are identical to \cref{theorem:dissipative_result}. We now bound each error term. 
\begin{enumerate}
    \item Koopman truncation error:
    We can ensure the Koopman truncation error 
    stays below $\epsilon/4$ by choosing sufficiently large $N$ as follows.
    Since we have $T\leq\widetilde{T}_{\rm max}\leq T_{\max}$,
    by \cref{thm:finitetimebound}, we have
    \begin{align}
    \abs{f({\bf x}(T))- {\bf c}\cdot \Psi^{(N)}(mh)} =
        \abs{\sum_{l=1}^{K} {\bf c}_l\cdot(\Psi_l(T)-\Psi^{(N)}_l(T))}
        \le s\|{\bf d}\|_q\sum_{j=l}^{K}\norm{\eta_l(T)}_p \\
        \leq  s\|{\bf d}\|_q\frac{1}{r} \sum_{l=1}^{K} \left(\frac{e^{\left(\|F_0\|_{\infty} + \norm{F_1}_{{\rm row},q}\right)T}}{r}\right)^{N} \\
        \leq \frac{s\|{\bf d}\|_qK}{r}  \left(\frac{e^{\left(\|G_0\|_{\infty} + \nu\norm{G_1}_{{\rm row},q}\right)T}}{r}\right)^{N}\le \frac{\epsilon}{4}.
        \end{align}
    This is achieved by choosing an $N$ satisfying
    \begin{equation}
        \left(\frac{r}{e^{\left(\|G_0\|_{\infty} + \nu\norm{G_1}_{{\rm row},q}\right)T}}\right)^{N}
        \ge \frac{4s\|{\bf d}\|_qK}{r\epsilon}
    \end{equation}
Notice that $\frac{r}{e^{\left(\|G_0\|_{\infty} + \nu\norm{G_1}_{{\rm row},q}\right)T}} \ge 1$
is guaranteed by the definition of $\widetilde{T}_{\max}$. Now an $N$ satisfying this condition
is given by
    \begin{align}
N =
\left\lceil
\frac{
\log\left(\max\left(\dfrac{4Ks\|\mathbf d\|_{q}}{r\,\epsilon}, 1\right)\right)
}{
\log\left(\dfrac{r}{e^{(\alpha+\nu\|G_1\|_{\mathrm{row},q})T}}\right)
}
\right\rceil,
\end{align}
where we used $\|{\bf c}\|_q \leq  s\|{\bf d}\|_q$ and $\alpha \ge \|G_0\|_{\infty}$.\\

Furthermore, by definition of $\widetilde{T}_{\max}$, we have
\begin{equation}
    T \le \widetilde{T}_{\max} \le \frac{\ln{r/e}}{\alpha+
    \nu\norm{G_1}_{{\rm row},q}}
    \implies 
    \dfrac{r}{e^{(\alpha+\nu\|G_1\|_{\mathrm{row},q})T}} \ge e.
\end{equation}
This implies that 
\begin{equation}
\label{eq:Nscaling}
    N \in \mathcal{O}\left(\log\left(1+\dfrac{4K\,s\,\|\mathbf d\|_{q}}{r\epsilon}\right)\right).
\end{equation}

    \item Taylor truncation error: We can ensure the Taylor truncation  error 
    stays below $\epsilon/4$ by choosing sufficiently large $k$ as follows. Using \cref{lemma:taylor_error_bound} which is valid since the choice of $m$ and $k$ ensure that
\begin{align}
 h \leq  \frac{1}{N\left(\norm{F_0}_\infty + \norm{F_1}_{\rm row,q}\right)},\\ m\leq \frac{(k+1)!}{e^2}
\end{align}
since $\alpha \geq \|G_1\|_{\rm row, q}$, we have,
    \begin{align}
        \abs{{\bf c}\cdot \Psi^{(N)}(mh) - {\bf c}\cdot \Phi_m}\le 
        \norm{{\bf c}}\norm{\Psi^{(N)}(mh)-\Phi_m}\\
\le
       \norm{{\bf c}}\frac{(e-1)e^2m}{(k+1)!}
     C({\bf L}_N^{\rm T})\norm{\Psi^{(N)}(0)}   \le \frac{\epsilon}{4}\\
      \Longrightarrow(k+1)! \geq \frac{4e^3}{\epsilon} s\norm{{\bf d}}m \Gamma   \\
      \Longrightarrow k \geq \left\lceil \log_2 \left( \frac{4e^3}{\epsilon} s\norm{{\bf d}}m \Gamma \right)\right\rceil, 
    \end{align}
where we have used  \cref{lemma:sol_growth_bound} with  $T\leq \widetilde{T}_{\rm max}$, $\norm{\Psi^{(N)}(0)}\leq 1$ since $\gamma<1/\sqrt{2}$, $(k+1)!\geq 2^{k}$ and $\|{\bf c}\|_q\leq \max\{\nu,\nu^K\}\|{\bf d}\|_q$.

% {\color{blue} Alternatively,
% \begin{align}
%         \abs{{\bf c}\cdot \Psi^{(N)}(mh) - {\bf c}\cdot \Phi_m}\le 
%         \norm{{\bf c}}_q\norm{\Psi^{(N)}(mh)-\Phi_m}_p\\
% \le
%        \norm{{\bf c}}_q\frac{(e-1)e^2m}{(k+1)!}
%      C_p({\bf L}_N^{\rm T})\norm{\Psi^{(N)}(0)}_p   \le \frac{\epsilon}{4}\\
%       \Longrightarrow(k+1)! \geq \frac{4e^3}{\epsilon} s\norm{{\bf d}}_q m \Gamma_p   \\
%       \Longrightarrow k \geq \left\lceil \log_2 \left( \frac{4e^3}{\epsilon} s\norm{{\bf d}}_q m \Gamma_p \right)\right\rceil, 
%     \end{align}
%     where we used $\norm{\Psi^{(N)}(0)}_p \le (\sum_j (1/e^p)^j)^{1/p} \le 1$.}

    \item Block encoding error: We can ensure the block encoding error 
    stays below $\epsilon/4$ by choosing sufficiently a small error in the block encoding of $\alpha_{\mathcal{L}^{-1}}, \sigma$. From \cref{lemma:complexity_A_inv} we have,
    \begin{align}
\abs{{\bf c}\cdot \Phi_m - \alpha_\mathcal{C} \alpha_{\mathcal{L}^{-1}}\alpha_\mathcal{B} \bra{0}U_\mathcal{C}^\dagger U_{\mathcal{L}^{-1}} U_\mathcal{B}\ket{0}} =
 \abs{\alpha_\mathcal{C} \alpha_\mathcal{B}({}_1\!\bra{0}U_\mathcal{C}^\dagger \mathcal{L}^{-1} U_\mathcal{B}\ket{0}_1) - \alpha_\mathcal{C}\alpha_{\mathcal{L}^{-1}}\alpha_\mathcal{B} \bra{0}U_\mathcal{C}^\dagger U_{\mathcal{L}^{-1}} U_\mathcal{B}\ket{0}} 
        \\
        \le \alpha_\mathcal{C} \alpha_\mathcal{B}\mathcal{E}(\sigma) \leq \frac{\epsilon}{4}\\
\Longrightarrow \sigma  \leq \frac{\epsilon\sqrt{m}}{4\|{\bf d}\| s } - 8e^4m^2\sqrt{k+1}\tau \Gamma^2
    \end{align}
    where we have used $\alpha_\mathcal{B}\leq1$, \cref{lemma:fourier_coefficient_preperation} and \cref{lemma:sol_growth_bound} with $T\leq \widetilde{T}_{\rm max}$. The choice for the parameters $\sigma,\tau$ given in \cref{eq:dissipative_parameters} ensures that $\sigma >0$ and $\tau \leq\frac{1}{4e^2mC({\bf L}_N^{\rm T})\sqrt{k+1}}$ so \cref{lemma:complexity_A_inv} is valid.
    
    \item Error in calculating the expectation value: We can set the error in calculating the expectation value 
    to be bounded from above by $\epsilon/4$ using \cref{lemma:complexity_A_inv}, $\alpha_\mathcal{B}\leq1$, \cref{lemma:fourier_coefficient_preperation} and \cref{lemma:sol_growth_bound} with $T\leq \widetilde{T}_{\rm max}$ as follows,  
    \begin{align}
        \abs{\alpha_\mathcal{C} \alpha_{\mathcal{L}^{-1}}\alpha_\mathcal{B} \bra{0}U_\mathcal{C}^\dagger U_{\mathcal{L}^{-1}} U_\mathcal{B}\ket{0}
    -\alpha_\mathcal{C} \alpha_{\mathcal{L}^{-1}}\alpha_\mathcal{B} w} \le 
    \alpha_\mathcal{C} \alpha_{\mathcal{L}^{-1}}\alpha_\mathcal{B} \delta \le \frac{\epsilon}{4}\\
    \Longrightarrow \delta  \leq \frac{\epsilon}{\sqrt{m}\|{\bf d}\|s \Gamma} .
    \end{align}

\end{enumerate}

{\noindent \bf Complexity:}
Follows \cref{theorem:dissipative_result}. 
\end{proof}

\begin{remark}
    The complexities in \cref{thm:non-dissipative-result} simplify
    significantly for $p \ge 2 \implies q \le 2$.
    For simplicity, we assign a fixed value for $\nu$ in terms of $r$, namely  
    $\nu = e^{\ell} r\|e^{i{\bf u}_0}\|_p$
    for a constant $\ell >0$. Then
    \begin{equation}
        \ln \left(\frac{\nu}{r\|e^{i{\bf u}_0}\|_p}\right) = \ell. 
    \end{equation}
    The expression for $\widetilde{T}_{\rm max}$ can now be written as
    \begin{equation}
        \widetilde{T}_{\rm max} = \min\left\{\frac{\ell}{\max\left\{\alpha,
    e^{\ell} r\|e^{i{\bf u}_0}\|_p\norm{G_1}_{{\rm row},q}\right\}(1+1/r)}, 
    \frac{\ln{r/e}}{\alpha+
    e^{\ell} r\|e^{i{\bf u}_0}\|_p\norm{G_1}_{{\rm row},q}}\right\}.
    \end{equation}
    Now the final time $T$ is bounded by 
    \begin{equation}
        T \le \widetilde{T}_{\rm max} \le \frac{1}{\alpha}\ln \left(\frac{\nu}{r\|e^{i{\bf u}_0}\|_p}\right) = \frac{\ell}{\alpha} \quad
        \text{and} \quad 
        T \le \widetilde{T}_{\rm max} \le
        \frac{1}{\nu\norm{G_1}_{{\rm row},q}}\ln \left(\frac{\nu}{r\|e^{i{\bf u}_0}\|_p}\right)\le 
        \frac{\ell}{\nu\norm{G_1}_{{\rm row},2}},
    \end{equation}
    where we used $\norm{G_1}_{{\rm row},q} \ge \norm{G_1}_{{\rm row},2}$ for $q\le 2$.
    \begin{align}
   \Gamma(N,T,\nu) &= \exp\left(T\left(N\nu\|G_1\|_{\rm row, 2} + 
   \max\{-\tilde{\mu}_0, -N\tilde{\mu}_0\}\right)\right) \nonumber\\
   &\le \exp{N(T\nu\|G_1\|_{\rm row, 2})}\exp{N(T\|G_0\|_{\infty})} \nonumber\\
   &\le \exp{2N\ell}.
\end{align}
Together with \cref{eq:Nscaling}, this implies
\begin{equation}
    \Gamma(N,T,\nu) 
    \in \mathcal{O}\left(\left(1+\dfrac{4K\,s\,\|\mathbf d\|_{q}}{r\epsilon}\right)^\ell\right).
\end{equation}
We now use this complexity for $\Gamma(N,T,\nu)$ to simplify 
query complexities reported in \cref{thm:non-dissipative-result}.
First notice that 
\begin{align}
    (\widetilde{T}_{\rm max})^{3/2}\tilde{q}(\alpha +\nu\|G_1\|_{\rm row ,q})
    &\le \left(\frac{\ln{r/e}}{\alpha+\nu\norm{G_1}_{{\rm row},q}}\right)^{3/2}
    \left(s\|{\bf d}\|\Gamma\sqrt{\alpha+ \nu\|G_1\|_{\rm row ,q}}\right)(\alpha+ \nu\|G_1\|_{\rm row ,q}) \nonumber\\
    &\le (\ln{r/e})^{3/2}s\|{\bf d}\|\Gamma \nonumber\\
    &\in \tilde{O}\left(\|{\bf d}\|_qs\left(1+\dfrac{4K\,s\,\|\mathbf d\|_{q}}{r\epsilon}\right)^\ell\right),
\end{align}
where in the last step we used $\|{\bf d}\|_q \ge \|{\bf d}\|$ since $q\le 2$.
For $\|e^{i{\bf u}_0}\|_p > 1$, we can further simplify this as 
\begin{align}
    (\widetilde{T}_{\rm max})^{3/2}\tilde{q}(\alpha +\nu\|G_1\|_{\rm row ,q})
    &\in \tilde{O}\left(\|{\bf d}\|_qe^{K\ell} r^K\|e^{i{\bf u}_0}\|_p^K
    \left(\dfrac{4K\|\mathbf d\|_{q}e^{K\ell} r^K\|
    e^{i{\bf u}_0}\|_p^K}{r\epsilon}\right)^\ell\right) \nonumber\\
    &\in \tilde{O}\left(\frac{e^{K\ell(1+\ell)} r^{K\ell}K^\ell 
    \|\mathbf d\|_{q}^{1+\ell}
    \|e^{i{\bf u}_0}\|_p^{K(1+\ell)}
    }{\epsilon^\ell}\right).
\end{align}
Finally by using $K,\ell \in \mathcal{O}(1)$, we get
\begin{align}
\label{eq:finalfactor1}
    (\widetilde{T}_{\rm max})^{3/2}\tilde{q}(\alpha +\nu\|G_1\|_{\rm row ,q})
    \in \tilde{O}\left(\frac{r^{K\ell}\|\mathbf d\|_{q}^{1+\ell}
    \|e^{i{\bf u}_0}\|_p^{K(\ell +1)}
    }{\epsilon^\ell}\right).
\end{align}
A similar calculation for $\tilde{q}\sqrt{\widetilde{T}_{\rm max}}$ yields
\begin{align}
\label{eq:finalfactor2}
    \tilde{q}\sqrt{\widetilde{T}_{\rm max}}
    \in \tilde{O}\left(s\|\mathbf d\|\Gamma\right)
    \in \tilde{O}\left(\frac{r^{K\ell}\|\mathbf d\|_{q}^{1+\ell}
    \|e^{i{\bf u}_0}\|_p^{K(\ell +1)}
    }{\epsilon^\ell}\right).
\end{align}
Using \cref{eq:finalfactor1} and \cref{eq:finalfactor2}, all query complexities 
in \cref{thm:non-dissipative-result} as well as their sum  
simplify to 
\begin{equation}
    \tilde{O}\left(\frac{r^{K\ell}\|\mathbf d\|_{q}^{1+\ell}
    \|e^{i{\bf u}_0}\|_p^{K(\ell +1)}
    }{\epsilon^{1+\ell}}\right),
\end{equation}
as reported in the informal statement in \cref{sec:contribution}.
\end{remark}

\begin{remark}
    The query complexity and the maximum time $\widetilde{T}_{\max}$ in \cref{thm:non-dissipative-result}
    have complicated dependence on the parameter $r$. For $r=e$, we have 
    $\widetilde{T}_{\max}=0$ as the second factor in its definition vanishes.
    As $r$ increases, $\widetilde{T}_{\max}$ increases until some value $r_0$
    but then reaches a maximum and eventually decreases due to $1/r$-dependence
    of the denominator of the first factor. The value of $r$ indirectly affects
    the complexities in \cref{eq:complexity-nondissipative} through 
    $\widetilde{T}_{\max}$ and also through $\nu$, which is required to
    be greater than $r\norm{e^{i{\bf u}_0}}_p$.

%     parameter $r$ allows one to choose between lengthening the maximum time $T_{\max}$ that the algorithm applies for and improving the complexity. When $r$ is small and near 1, $N$ and therefore the complexity is as bad as possible. $T_{\max}$ is also bad because the second branch of $T_{\max}$ dominates for low $r$. As $r$ increases, both $T_{\max}$ and the complexity improve. At a critical value of $r$ where the two branches that contribute to T are equal, this behavior changes and the complexity continues improving while the max time starts to decrease. This critical value of $r$ which is the unique solution to the trancendental equation \cref{eq:transcendental_r}  (since first branch is strictly decreasing and second strictly increasing)  is therefore an optimal point to run the algorithim. If $\nu$ is chosen such that $r$ is very large, the complexity can be improved but the time becauses very short. 
% \begin{equation}\label{eq:transcendental_r}
% \frac{1}{\max\{\|G_0\|_{\infty},\ \|G_1\|_{\mathrm{row},q}\}}
% \,\ln\!\left(\frac{\nu}{r\,\|e^{i\mathbf{u}_0}\|}\right)
% \left(\frac{r}{r+1}\right)
% \;=\;
% \frac{\ln r}{\|G_0\|_{\infty}+\|G_1\|_{\mathrm{row},q}}
% \end{equation}
\end{remark}

\begin{remark}
    Following \cref{remark:inequality-testing}, the complexity of queries to $U_{G_0}$
    and $U_{G_1}$ in \cref{eq:complexity-nondissipative} 
    can be further improved by a factor of $N^3$. 
\end{remark}

\section{Conclusion}
\label{sec:conclusion}
In this work we applied the Koopman linearization method to design efficient quantum algorithms
for solving nonlinear ODEs with Fourier nonlinearity.
% have presented the Koopman framework, a generalization of the Carleman linearization approach, for constructing efficient quantum algorithms to solve nonlinear 
% ODEs. Using this framework 
We developed two algorithms for this problem. 
The first algorithm applies under dissipative conditions which we derive. 
Under these conditions, our Koopman linearization-based approach is expected to 
be significantly more efficient than the existing methods based on Carleman linearization,
which would require approximating the Fourier nonlinear term by a polynomial
function incurring further errors. 
Our second algorithm removes the dissipativity assumption and demonstrates that efficient solution extraction remains possible on a finite time interval. We also discussed how to extract classical quantities from the resulting quantum state representing the solution. \\

Beyond these specific results, the Koopman linearization framework introduced here opens a promising new avenue for designing quantum algorithms for nonlinear dynamics. Rather than relying exclusively on Taylor expansions to treat differential equations with arbitrary nonlinearities in the dependent variable, one can, in cases where the nonlinearities exhibit particular structure, exploit alternative functional bases such as trigonometric, Chebyshev, or Hermite polynomials to obtain sparse and structured linearised ODEs. This flexibility suggests that nonlinear systems previously thought to be intractable on quantum computers may be efficiently solvable when linearized in a suitable basis. One  interesting direction for future work is to extend our approach to ODEs with nonlinearities of the form $[e^{i\mathbf{x}}]^{\otimes k}$ in order to solve the Kuramoto model of coupled oscillators which, as demonstrated by~\cite{chen2024carleman}, exhibits such nonlinearity. Another is to construct a time independent infinitesimal Koopman matrix for the case where the coefficient matrices of the nonlinear ODE tackled in this work are time-dependent. 

\section*{Acknowledgements}
We thank Pedro C.S. Costa for many insightful discussions
and extensive feedback, including 
improvements in notation, simplifying technical explanations,
% (especially involving \cref{eq:recurence_relation}, \cref{eq:carleman-assembly2}), 
restructuring of sections, and
rewording of some paragraphs, lemmas and theorems, thereby
improving the readability of this manuscript. 
We also thank Nalini Joshi for many valuable discussions and feedback. 
AA is grateful for discussions with Nana Liu, Nathan Wiebe, Dominic Berry 
and Laura Henderson. 
AA acknowledges the support from a New Faculty Start-up Grant by Concordia University. 
GM acknowledges funding from BTQ Technologies Corp.

\bibliography{references_abbrev} 

\appendix

\section*{Appendices}

\section{Logarithmic norm and bounds on the matrix exponential}\label{appendix:log_norm}

% This quantity will appear in the complexity of our algorithm, so obtaining tight bounds is an important part of this work. 
The growth of the exponential of a matrix can be bounded using
the logarithmic norm of the matrix, which is defined as follows.
\begin{definition}
    The logarithmic $p$-norm of a matrix $A$ is defined as
    \begin{equation}
        \mu_p(A) = \lim_{h \to 0^+}\frac{\norm{\mathds{1}+hA}_p-1}{h}.
    \end{equation}
\end{definition}
For the special case $p=2$, we denote $\mu_2(A) = \mu(A)$, and we have
\begin{equation}
\label{eq:sec_def}
\mu(A) = \max_{\|\mathbf{v}\|=1}\left(\mathbf{v}^{\dagger}H(A)\mathbf{v}\right)
= \sup_{\|\mathbf{v}\|=1}\text{Re}\{\mathbf{v}^{\dagger}A\mathbf{v}\}=
\max\Bigg\{\lambda\; |\; \det\left(\frac{A^{\dagger}+A}{2}-\lambda I\right) =0 \Bigg\},
\end{equation}
where $H(A)$ is the hermitian part of the matrix $A$.\\

The following properties of logarithmic $p$-norm are well established.
\begin{lemma} For $p \in [1,\infty)$ and 
matrices $A, B \in \mathbb{C}^{n\times n}$, the following statements hold.
\begin{enumerate} 
    \item 
Triangle inequality for the logarithmic norm: 
  \begin{equation}
      \mu_p(A+B) \leq \mu_p(A) + \mu_p(B).\label{lemma:Trian_logN}
  \end{equation}  
\item For $\alpha(A) = \max\Big\{\text{Re}(\lambda)\; |\; \det(A-\lambda I) =0\Big\}$,
    \begin{equation} \label{lemma:alpa_leq_mu}
        \alpha(A) \leq \mu_p(A).
    \end{equation}
\item Moreover, we have
    \begin{equation}\label{lemma:muA} 
        \mu_p(A) \leq \|A\|_p.
    \end{equation}
\end{enumerate}
\end{lemma}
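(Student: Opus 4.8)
The plan is to establish all three inequalities directly from the defining formula $\mu_p(A) = \lim_{h\to 0^+}(\norm{\mathds{1}+hA}_p-1)/h$, taking for granted that this one-sided limit exists (this follows from the fact that the difference quotient $h \mapsto (\norm{\mathds{1}+hA}_p-1)/h$ is monotone nonincreasing as $h\to 0^+$, a standard consequence of the triangle inequality, or one may simply invoke the well-known characterization of $\mu_2$ in \cref{eq:sec_def}). For the triangle inequality in part \eqref{lemma:Trian_logN}, the key idea is to distribute the identity evenly across the two summands. I would write $\mathds{1} + h(A+B) = \tfrac{1}{2}(\mathds{1}+2hA) + \tfrac{1}{2}(\mathds{1}+2hB)$ and apply the ordinary triangle inequality for the induced matrix $p$-norm to obtain $\norm{\mathds{1}+h(A+B)}_p \le \tfrac{1}{2}\norm{\mathds{1}+2hA}_p + \tfrac{1}{2}\norm{\mathds{1}+2hB}_p$. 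Subtracting $1$, dividing by $h$, and recognizing $\tfrac{1}{2h}(\norm{\mathds{1}+2hA}_p-1)$ as the difference quotient evaluated at step size $2h$, the limit $h\to 0^+$ (with $2h\to 0^+$ as well) yields $\mu_p(A+B)\le \mu_p(A)+\mu_p(B)$.

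For part \eqref{lemma:alpa_leq_mu}, I would choose an eigenvalue $\lambda$ of $A$ attaining $\text{Re}(\lambda)=\alpha(A)$, together with an eigenvector $\mathbf v$ normalized so that $\norm{\mathbf v}_p=1$. Since $(\mathds{1}+hA)\mathbf v = (1+h\lambda)\mathbf v$, the definition of the induced norm gives the lower bound $\norm{\mathds{1}+hA}_p \ge \norm{(\mathds{1}+hA)\mathbf v}_p = |1+h\lambda|$. Expanding $|1+h\lambda| = \sqrt{1 + 2h\,\text{Re}(\lambda) + h^2|\lambda|^2} = 1 + h\,\text{Re}(\lambda) + \mathcal{O}(h^2)$ and passing to the limit of the difference quotient, I obtain $\mu_p(A) \ge \text{Re}(\lambda) = \alpha(A)$.

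Part \eqref{lemma:muA} is the most immediate: the triangle inequality for the matrix norm gives $\norm{\mathds{1}+hA}_p \le \norm{\mathds{1}}_p + h\norm{A}_p = 1 + h\norm{A}_p$, using that the induced norm of the identity equals $1$, so $(\norm{\mathds{1}+hA}_p-1)/h \le \norm{A}_p$ for every $h>0$ and the inequality survives the limit. Overall none of these steps is computationally heavy; the only genuinely delicate point is justifying that the limit defining $\mu_p$ exists and that the inequalities between difference quotients pass cleanly to the limit. Since existence is guaranteed by the stated definition, the main care needed is bookkeeping the reparametrization $h\mapsto 2h$ in part \eqref{lemma:Trian_logN} and confirming in part \eqref{lemma:alpa_leq_mu} that the unit $p$-norm eigenvector exists, which it does for any eigenvalue.
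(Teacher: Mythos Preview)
Your proof is correct in all three parts. The paper itself does not prove this lemma at all: it simply states that ``the following properties of logarithmic $p$-norm are well established'' and lists the three inequalities without argument. Your direct derivations from the defining limit are standard and clean; the splitting $\mathds{1}+h(A+B)=\tfrac12(\mathds{1}+2hA)+\tfrac12(\mathds{1}+2hB)$ for part~\eqref{lemma:Trian_logN}, the eigenvector lower bound for part~\eqref{lemma:alpa_leq_mu}, and the immediate triangle-inequality bound for part~\eqref{lemma:muA} are exactly the textbook arguments one would expect, so nothing is missing.
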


Another important lemma is between the exponents of the quantities above
\begin{lemma}\label{lemma:expA}
Let $A \in \mathbb{C}^{n\times n}$ and $t\geq 0$, then
\begin{equation}
    \|\exp{(At)}\|_p \leq \exp{(\mu_p(A)t)}.
\end{equation}
\end{lemma}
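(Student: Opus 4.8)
The plan is to bound the growth of the scalar function $\phi(t) := \|\exp(At)\|_p$ by a differential inequality driven by $\mu_p(A)$ and then integrate it via Gronwall's lemma. First I would record the elementary facts that $\phi(0) = \|\mathds{1}\|_p = 1$ and that $\phi$ is locally Lipschitz, since $t \mapsto \exp(At)$ is smooth and $\|\cdot\|_p$ is $1$-Lipschitz; hence $\phi$ is differentiable almost everywhere and it suffices to control its upper right Dini derivative.

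The key estimate comes from the semigroup property together with submultiplicativity of the induced $p$-norm. For $h > 0$,
\begin{equation}
\phi(t+h) = \|\exp(Ah)\exp(At)\|_p \le \|\exp(Ah)\|_p\,\phi(t).
\end{equation}
To estimate the first factor I would use the Taylor expansion $\exp(Ah) = \mathds{1} + hA + O(h^2)$, so that by the triangle inequality for the matrix norm
\begin{equation}
\|\exp(Ah)\|_p \le \|\mathds{1} + hA\|_p + O(h^2) = 1 + \mu_p(A)h + o(h),
\end{equation}
where the final equality is precisely the definition of the logarithmic norm $\mu_p(A)$. Combining the two displays gives
\begin{equation}
\frac{\phi(t+h) - \phi(t)}{h} \le \bigl(\mu_p(A) + o(1)\bigr)\phi(t),
\end{equation}
and letting $h \to 0^+$ yields the differential inequality $D^{+}\phi(t) \le \mu_p(A)\,\phi(t)$ for the upper right Dini derivative.

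Finally I would apply the Dini-derivative form of Gronwall's inequality to $\phi$ on $[0,t]$ with constant rate $\mu_p(A)$, obtaining $\phi(t) \le \phi(0)\exp(\mu_p(A)t) = \exp(\mu_p(A)t)$, which is the claim. The main technical care lies in the second step: one must check that the $O(h^2)$ remainder of the matrix exponential is negligible against the $O(h)$ term isolated by the definition of $\mu_p$, and that the Dini-derivative version of Gronwall applies to the merely locally Lipschitz (rather than everywhere differentiable) function $\phi$. Both points are standard, so the real work is careful bookkeeping rather than any genuine analytic obstacle.
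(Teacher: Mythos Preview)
The paper does not actually prove this lemma; it is stated in the appendix on logarithmic norms as a standard background result, without proof. Your argument is correct and is the classical Dahlquist--Lozinski\u{\i} proof: the semigroup property plus submultiplicativity reduce the question to controlling $\|\exp(Ah)\|_p$ for small $h$, the definition of $\mu_p$ identifies the first-order term, and the resulting differential inequality integrates to the claim via Gronwall.
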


\section{Properties of the linearized differential equation}\label{app:truncation_bound_proofs}

The following Lemma establishes equality between $\norm{\widetilde{F}_1}_p$ and 
$\norm{F_1}_{{\rm row},q}$ by using the fact that $\widetilde{F}_1$ is obtained by stacking 
the rows of $F_1$ diagonally, as made precise by Eq.~\eqref{eq:F1_tilde}.
\begin{lemma}\label{lemma:F_1_p_F_1_row_q}
    For $\widetilde{F}_1$ defined in Eq.~\eqref{eq:F1_tilde} and for $p,q \in [1,\infty)$ satisfying
    $1/p + 1/q = 1$, we have
    \begin{equation}
        \norm{\widetilde{F}_1}_p = \norm{F_1}_{{\rm row},q}.
    \end{equation}
\end{lemma}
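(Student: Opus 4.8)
The plan is to exploit the block-diagonal structure of $\widetilde{F}_1$, which decouples its action into $n$ independent inner products, and then reduce the claim to a per-row application of H\"older's inequality. First I would partition any input vector $\mathbf{y}\in\mathbb{C}^{n^2}$ into $n$ consecutive blocks $\mathbf{y}=(\mathbf{y}^{(1)},\dots,\mathbf{y}^{(n)})$ with each $\mathbf{y}^{(j)}\in\mathbb{C}^n$, matching the column partition of $\widetilde{F}_1$ in \cref{eq:F1_tilde}. Since the $j$-th row of $\widetilde{F}_1$ has nonzero entries only in the $j$-th column block, and those entries are exactly the $j$-th row of $F_1$, the action decouples as
\begin{equation}
(\widetilde{F}_1\mathbf{y})_j=\sum_{k=1}^n (F_1)_{jk}\,y^{(j)}_k,\qquad j=1,\dots,n,
\end{equation}
so that
\begin{equation}
\norm{\widetilde{F}_1\mathbf{y}}_p^p=\sum_{j=1}^n\abs{\sum_{k=1}^n (F_1)_{jk}\,y^{(j)}_k}^p,\qquad \norm{\mathbf{y}}_p^p=\sum_{j=1}^n\norm{\mathbf{y}^{(j)}}_p^p.
\end{equation}

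Next I would apply H\"older's inequality blockwise with conjugate exponents $p,q$, giving
\begin{equation}
\abs{\sum_{k=1}^n (F_1)_{jk}\,y^{(j)}_k}\le\norm{(F_1)_j}_q\,\norm{\mathbf{y}^{(j)}}_p\le\norm{F_1}_{{\rm row},q}\,\norm{\mathbf{y}^{(j)}}_p.
\end{equation}
Raising to the $p$-th power and summing over $j$ yields $\norm{\widetilde{F}_1\mathbf{y}}_p^p\le\norm{F_1}_{{\rm row},q}^p\,\norm{\mathbf{y}}_p^p$; a weighted-average (convexity) observation shows the ratio $\norm{\widetilde{F}_1\mathbf{y}}_p^p/\norm{\mathbf{y}}_p^p$ never exceeds $\max_j\norm{(F_1)_j}_q^p$, so taking the supremum over $\mathbf{y}\ne 0$ gives the upper bound $\norm{\widetilde{F}_1}_p\le\norm{F_1}_{{\rm row},q}$.

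For the matching lower bound I would pick a row $j^\ast$ achieving $\norm{(F_1)_{j^\ast}}_q=\norm{F_1}_{{\rm row},q}$ and take $\mathbf{y}$ supported entirely on block $j^\ast$, with $\mathbf{y}^{(j^\ast)}$ the H\"older extremizer for $(F_1)_{j^\ast}$, namely $y^{(j^\ast)}_k\propto\overline{(F_1)_{j^\ast k}}\,\abs{(F_1)_{j^\ast k}}^{q-2}$ (set to $0$ wherever the entry vanishes, so that every product $(F_1)_{j^\ast k}y^{(j^\ast)}_k$ is nonnegative). For this choice the single surviving block saturates H\"older, giving $\norm{\widetilde{F}_1\mathbf{y}}_p=\norm{F_1}_{{\rm row},q}\,\norm{\mathbf{y}}_p$ and hence $\norm{\widetilde{F}_1}_p\ge\norm{F_1}_{{\rm row},q}$; combining the two bounds proves the stated equality. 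The argument is elementary, and the only point requiring care is the H\"older equality case used to turn the upper bound into a genuine equality: one must verify the extremizer behaves correctly on rows with vanishing entries (where the exponent $q-2$ may be negative) and confirm that concentrating the test vector on the single extremal block indeed attains the maximum.
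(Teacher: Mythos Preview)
Your proposal is correct and follows essentially the same approach as the paper: both prove the upper bound by applying H\"older's inequality blockwise (using that $\widetilde{F}_1$ acts on each block via the corresponding row of $F_1$), and both prove the lower bound by constructing the H\"older extremizer supported on the block corresponding to the row of maximal $q$-norm. Your extremizer $\overline{(F_1)_{j^\ast k}}\,\abs{(F_1)_{j^\ast k}}^{q-2}$ is exactly the paper's $\abs{(F_1)_{kl}}^{q/p}e^{-i\arg (F_1)_{kl}}$ (since $q/p=q-1$), and your explicit handling of vanishing entries is a nice addition that the paper leaves implicit.
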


\begin{proof}
    We will prove this lemma by proving that $\norm{\widetilde{F}_1}_p \le \norm{F_1}_{{\rm row},q}$
    and $\norm{\widetilde{F}_1}_p \ge \norm{F_1}_{{\rm row},q}$. 
    To prove the first direction, recall by definition
    \begin{equation}
        \norm{\widetilde{F}_1}_p = \sup_{\norm{\Psi}_p=1}\norm{\widetilde{F}_1 \Psi}_p,
    \end{equation}
    where $\Psi \in \mathbb{C}^{n^2}$.
    We can express $\Psi$ as a block vector
    $\Psi = (\Psi_1 \dots  \Psi_n)^{\rm T}$ where each 
    component $\Psi_j \in \mathbb{C}^{n}$. 
    Using the definition of $\widetilde{F}_1$ in Eq.~\eqref{eq:F1_tilde}, we have 
    $\widetilde{F}_1 \Psi = ((F_1)_1\Psi_1 \dots (F_1)_n\Psi_n)$,
    where $(F_1)_j$ denotes the $j$-th row of $F_1$. Now
    by applying Holder's inequality, we get
     \begin{align}
        \norm{\widetilde{F}_1}_p &=  \sup_{\norm{\Psi}_p=1}\left(\sum_{j=1}^{n}\abs{(F_1)_j\Psi_j}^p\right)^{1/p}
        \nonumber \\
        &\le \sup_{\norm{\Psi}_p=1}\left(\sum_j\norm{(F_1)_j}_q^{\,p}\norm{\Psi_j}_p^{\,p}\right)^{1/p}.
    \end{align}
    Let $k \in \{1,\dots,n\}$ be the index of the row of $F_1$ with maximum $q$-norm, i.e.
    $\norm{(F_1)_k}_q = \sup_j \norm{(F_1)_j}_q$. If there are multiple rows with 
    the same maximum norm, then we choose $k$ to be the index of any one of these rows. Then
    \begin{align}
        \norm{\widetilde{F}_1}_p 
        &\le \sup_{\norm{\Psi}_p=1}\left(\norm{(F_1)_k}_q^{\,p}\sum_j\norm{\Psi_j}_p^{\,p}\right)^{1/p}
        \nonumber\\
        &=\left(\norm{F_1}_{{\rm row},q}^p\right)^{1/p}\nonumber\\
        &=\norm{F_1}_{{\rm row},q}.
    \end{align}
    Here we used $\sum_j\norm{\Psi_j}_p^{\,p} = 1$ which follows from $\norm{\Psi}_p = 1$.
    We have therefore proved $\norm{\widetilde{F}_1}_p \le \norm{F_1}_{{\rm row},q}$.

    To prove the other direction, namely $\norm{\widetilde{F}_1}_p \ge \norm{F_1}_{{\rm row},q}$,
    our strategy is to explicitly construct a $\Psi \in \mathbb{C}^{n^2}$ with $\norm{\Psi}_p=1$ 
    such that 
    $\norm{\widetilde{F}_1\Psi}_p = \norm{F_1}_{{\rm row},q}$. Then the 
    statements follows from the fact that $\norm{\widetilde{F}_1}_p \ge \norm{\widetilde{F}_1\Psi}_p$.
    Consider the vector $\Psi$ with entries
    \begin{equation}
        \Psi_{jl} = \left\{
        \begin{array}{lcl}
        0 &\text{if} & j\ne k \\
        \left(\frac{\abs{F_{1kl}}}{\norm{(F_1)_k}_q}\right)^{q/p}e^{-i\arg{(F_1)_{kl}}}
        &\text{if} & j=k.
        \end{array}\right.
    \end{equation}
The condition $\norm{\Psi}_p=1$ follows from
\begin{equation}
    \norm{\Psi}_p^{\,p} = \sum_{j,l}\abs{\Psi_{jl}}^p 
    = \sum_{l} \left(\frac{\abs{(F_1)_{kl}}}{\norm{(F_1)_k}_q}\right)^{q} 
    = \frac{\sum_{l} \abs{(F_1)_{kl}}^q}{\norm{(F_1)_k}_q^{\,q}}
    =1.
\end{equation}
To prove $\norm{\widetilde{F}_1\Psi}_p = \norm{F_1}_{{\rm row},q}$,
we first note that $\widetilde{F}_1\Psi$ has only one non-zero entry at $k$-th position
with value $(F_1)_k\Psi_k$. Then
\begin{align}
    \norm{\widetilde{F}_1\Psi}_p &= \abs{(F_1)_k\Psi_k} \nonumber\\
    &= \sum_l F_{1kl}\left(\frac{\abs{F_{1kl}}}{\norm{(F_1)_k}_q}\right)^{q/p}e^{-i\arg{(F_1)_{kl}}}
    \nonumber\\
    &= \frac{\sum_l \abs{(F_1)_{kl}}^{q/p+1}}{(\sum_l\abs{(F_1)_{kl}}^q)^{1/p}}.
\end{align}
Using $1/p+1/q = 1$, we have $q/p+1=q$, which yields
\begin{align}
    \norm{\widetilde{F}_1\Psi}_p &= \frac{\sum_l \abs{(F_1)_{kl}}^{q}}{(\sum_l\abs{(F_1)_{kl}}^q)^{1/p}}\nonumber\\
    &= \left(\sum_l \abs{(F_1)_{kl}}^{q}\right)^{1-1/p}\nonumber\\
    &= \left(\sum_l \abs{(F_1)_{kl}}^{q}\right)^{1/q} \nonumber\\
    &= \norm{(F_1)_{k}}_q \nonumber\\
    &= \norm{F_1}_{{\rm row},q}.
\end{align}
This proves $\norm{\widetilde{F}_1}_p \ge \norm{F_1}_{{\rm row},q}$ and therefore completes the proof.
\end{proof}

The next Lemma establishes an upper bound on the $p$-norm of the linearized and truncated approximation to \cref{eq:problem_1_rescaled}.
\begin{lemma}\label{lemma:LN_le_F} For ${\bf L}_N^{\rm T}$ defined in \cref{eq:LN} and $\widetilde{F}_1, \widetilde{F}_0$ defined in \cref{eq:F1_tilde} and \cref{eq:F0_tilde} respectively, we have
    \begin{equation}
    \|{\bf L}_N^{\rm T}\|_p\le N\left(\norm{\widetilde{F}_0}_p + \norm{\widetilde{F}_1}_p\right).\end{equation}
\end{lemma}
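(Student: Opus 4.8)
The plan is to exploit the block-bidiagonal structure of ${\bf L}_N^{\rm T}$ in \cref{eq:LN}, splitting it into its block-diagonal and block-superdiagonal parts and bounding each separately via the triangle inequality for the operator $p$-norm. I would write ${\bf L}_N^{\rm T} = D + U$, where $D$ collects the diagonal blocks $B_j^{(0)}$ and $U$ collects the superdiagonal blocks $B_{j+1}^{(1)}$ (appearing in positions $(j,j+1)$ for $j=1,\dots,N-1$), so that $\norm{{\bf L}_N^{\rm T}}_p \le \norm{D}_p + \norm{U}_p$. The remaining task is then to bound $\norm{D}_p$ and $\norm{U}_p$ by $N\norm{\widetilde{F}_0}_p$ and $N\norm{\widetilde{F}_1}_p$ respectively.

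For the two structural bounds I would work directly on a block vector $\mathcal{X} = (X_1,\dots,X_N)$ with $X_j \in \mathbb{C}^{n^j}$, using $\norm{\mathcal{X}}_p^p = \sum_j \norm{X_j}_p^p$. Since $(D\mathcal{X})_j = B_j^{(0)}X_j$ and each input block feeds exactly one output block, $\norm{D\mathcal{X}}_p^p = \sum_j \norm{B_j^{(0)}X_j}_p^p \le (\max_j \norm{B_j^{(0)}}_p)^p\,\norm{\mathcal{X}}_p^p$, giving $\norm{D}_p \le \max_j \norm{B_j^{(0)}}_p$. The same ``one nonzero block per block-row and per block-column'' observation applies to the weighted-shift matrix $U$, for which $(U\mathcal{X})_j = B_{j+1}^{(1)}X_{j+1}$ and each $X_{j+1}$ again appears only once, yielding $\norm{U}_p \le \max_j \norm{B_{j+1}^{(1)}}_p$.

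It then remains to bound the individual block norms. Recall from \cref{eq:carleman-assembly2} that $B_j^{(0)}$ and $B_{j+1}^{(1)}$ are Kronecker sums of $j$ terms, each of the form $I^{\otimes a}\otimes i\widetilde{F}_0\otimes I^{\otimes b}$ (respectively with $\widetilde{F}_1$). The key technical fact I would establish is that the operator $p$-norm is invariant under tensoring with identities, i.e. $\norm{I^{\otimes a}\otimes A\otimes I^{\otimes b}}_p = \norm{A}_p$ for every $p\in[1,\infty)$. This is the one point requiring care: for $p\ne 2$ one cannot simply invoke the familiar $2$-norm multiplicativity, so I would prove it by direct computation, viewing a vector in the tensor space as an indexed family of slices and checking $\norm{(I\otimes A)x}_p^p = \sum_k \norm{Ax_k}_p^p \le \norm{A}_p^p\,\norm{x}_p^p$ (and analogously for $A\otimes I$ by grouping coordinates), with equality attained by concentrating $x$ on the optimizing slice. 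Granting this, the triangle inequality gives $\norm{B_j^{(0)}}_p \le j\norm{\widetilde{F}_0}_p \le N\norm{\widetilde{F}_0}_p$ and $\norm{B_{j+1}^{(1)}}_p \le j\norm{\widetilde{F}_1}_p \le N\norm{\widetilde{F}_1}_p$, since each is a sum of at most $j\le N$ such terms.

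Combining the pieces, $\norm{{\bf L}_N^{\rm T}}_p \le \norm{D}_p + \norm{U}_p \le N\norm{\widetilde{F}_0}_p + N\norm{\widetilde{F}_1}_p = N\left(\norm{\widetilde{F}_0}_p + \norm{\widetilde{F}_1}_p\right)$, as claimed. I expect the main obstacle to be the tensor-with-identity invariance of the operator $p$-norm for general $p$, as this is the only step that genuinely departs from the $p=2$ intuition; the block-structural estimates and the triangle inequalities are routine once the unequal block sizes are handled through the direct-sum $p$-norm $\norm{\mathcal{X}}_p^p = \sum_j \norm{X_j}_p^p$.
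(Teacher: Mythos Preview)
Your proposal is correct and follows essentially the same route as the paper: split ${\bf L}_N^{\rm T}$ into its block-diagonal and block-superdiagonal parts, bound each by the maximum block norm via the direct-sum $p$-norm identity, and then bound $\norm{B_j^{(0)}}_p,\norm{B_{j+1}^{(1)}}_p$ using the Kronecker-sum structure. If anything you are more explicit than the paper about the tensor-with-identity invariance $\norm{I^{\otimes a}\otimes A\otimes I^{\otimes b}}_p=\norm{A}_p$ for general $p$, which the paper simply uses without comment.
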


\begin{proof}

 We start by rewriting the matrix $\mathbf{L}_N^{\rm T}$ as follow 
    \begin{equation}
        {\bf L}_N^{\rm T} = \begin{pmatrix}
    B_0^{(0)} & B_1^{(1)}  & 0 & 0 & \dots \\
    0 & B_1^{(0)} & B_2^{(1)} & 0 & \dots \\
     \vdots &  \vdots & \ddots &  \vdots & \dots \\
    0 & 0 & \cdots & 0 & B_N^{(0)}
\end{pmatrix} = H_0 + H_1,   
    \end{equation}
where 
\begin{equation}
H_0 = \sum_{j=1}^{N} | j \rangle \langle j | \otimes B_j^{(0)} , \quad 
H_1 = \sum_{j=1}^{N-1} | j \rangle \langle j + 1 | \otimes B_j^{(1)}.
\end{equation}
In the decomposition above we have $\{\ket{j}\}_{j=0}^{N+1}$ as the basis of $\mathbb{C}^{N+1}$.\\

Using subadditivity of the norm we have
\begin{equation}
    \|{\bf L}^{\rm T}_N\|_p \le \|H_0\|_p+\|H_1\|_p
\end{equation}.

We then have for any $x\in\mathbb{C}^{\text{dim}({\bf L}_N^{\rm T})}$
\begin{align*}
\|H_0 x\|_p^p
  \le \sum_{j=0}^{N} \big(\|B_j^{(0)}\|_p^p\,\|x_j\|_p^p\big)\le \left(\max_{0\le j\le N}\|B_j^{(0)}\|_p\right)^p
      \sum_{j=0}^{N} \|x_j\|_p^p
   \\=  \left(\max_{0\le j\le N}\|B_j^{(0)}\|_p\right)^p\|x\|_p^p \leq (N\|\widetilde{F}_0\|_p)^p\|x\|_p^p,
\end{align*}

and

\begin{align*}
\|H_1 x\|_p^p \le \sum_{j=0}^{N-1} \big(\|B_j^{(1)}\|_p^p\,\|x_j\|_p^p\big) \le \left(\max_{0\le j\le N-1}\|B_j^{(1)}\|_p\right)^p
      \sum_{j=0}^{N-1} \|x_j\|_p^p
   \\=  \left(\max_{0\le j\le N-1}\|B_j^{(1)}\|_p\right)^p\|x\|_p^p \leq ((N-2)\|\widetilde{F}_1\|_p)^p\|x\|_p^p \le N\|\widetilde{F}_1\|_p^p\|x\|_p^p
\end{align*}
where we have used $\|B^{(1)}_{j+1}\| \leq j\|\widetilde{F}_1\|_p, \|B^{(0)}_{j}\| \leq j\|\widetilde{F}_0\|_p$.\\

Then from the definition of the operator $p$-norm we have 
\begin{equation}
    \|H_0\|_p = \sup_{x\neq0} \frac{\|H_0x\|_p}{\|x\|_p} \leq N \|\widetilde{F}_0\|_p
\end{equation}

and

\begin{equation}
    \|H_1\|_p = \sup_{x\neq0} \frac{\|H_1x\|_p}{\|x\|_p} \leq N \|\widetilde{F}_1\|_p
\end{equation}.

\end{proof}

\section{Proofs of lemmas for the error analysis} \label{app:error_analysis}

{\noindent \bf \cref{lemm:dissp}}
(Dissipativity in the $p$-Norm){\bf .}
Let $\{p,q\} \in [1,\infty)$ be such that $1/p+1/q=1$. 
Consider the rescaled ODE from \cref{eq:problem_1_rescaled}  where   
\begin{equation}
    \tilde{\mu}_0 = \min_j \Im{(F_0)_j}\ge 0 \quad \text{and} \quad
     R_p = \frac{\|F_1\|_{{\rm row},q} \|\Psi_1(0)\|_p}{\tilde{\mu}_0}<1.
\end{equation}
Then
\begin{equation}
    \|\Psi_1(t)\|_p\leq \|\Psi_1(0)\|_p.
\end{equation}

\begin{proof}
From the definition given by $R_p$, 
we have 
\begin{equation}
    \norm{\Psi_1(0)}_p = \|e^{i{\bf x}_0}\|_p
<\frac{\tilde{\mu}_0}{\|F_1\|_{{\rm row},q}} = 
\frac{\tilde{\mu}_0}{\|\widetilde{F}_1\|_p}.
\end{equation}
Since ${\bf x}(t)$ is a continuous function, there exists a $\delta>0$ 
such that 
\begin{equation}
\label{eq:ineq_cond}
\|\widetilde{F}_1\|_p\norm{e^{i\mathbf{x}(t)}}_p<\tilde{\mu}_0,\quad t\in[0,\delta].  
\end{equation}
We will now show that 
$d\|\Psi_1(t)\|_p/dt < 0$ for $t \in [0,\delta]$, from which
the lemma follows by repeated application.

To show that 
$d\|\Psi_1(t)\|_p/dt < 0$ for $t \in [0,\delta]$, 
first note that $|e^{ix_j}|^2 = e^{ix_j}e^{-ix_j^*} = e^{-2\Im(x_j)}$. Therefore
$d |e^{ix_j}|^2/dt = -2e^{-2\Im(x_j)} \Im(dx_j/dt) = -2|e^{ix_j}|^2 \Im(\dot{x_j})$.
Then
\begin{equation}
    \frac{d |e^{ix_j}|^p}{dt} = 
    \frac{d \left(|e^{ix_j}|^2\right)^{p/2}}{dt}=
    \frac{p}{2}\left(|e^{ix_j}|^2\right)^{p/2-1}
    \frac{d |e^{ix_j}|^2}{dt}
    = -p|e^{ix_j}|^{p}\Im(\dot{x_j}).
\end{equation}
Next, using \cref{eq:problem_1_rescaled}, i.e.
\begin{equation}
\frac{dx_j}{dt} = \sum_{_l}(F_1)_{jl} e^{ix_l} + (F_0)_{j},    
\end{equation}
we obtain
\begin{align}
\label{eq:pnormdecrease}
    \frac{(\norm{e^{i\textbf{x}}}_p)^p}{dt} &= \sum_j \frac{d |e^{ix_j}|^p}{dt} \nonumber \\
    &= -\sum_j p|e^{ix_j}|^{p}\Im(\dot{x_j}) \nonumber \\
    &= -\sum_j p|e^{ix_j}|^{p}\Im\left(F_0^j + \sum_l F_1^{jl}e^{ix_l}\right)
    \nonumber \\
    &= -\sum_j p|e^{ix_j}|^{p}\left(\Im(F_0^j) + \Im(\sum_l F_1^{jl}e^{ix_l})\right)
\end{align}

% Then, for simplicity, let us consider $p$ even, so we can split $|e^{ix_j}|^p$ into $p/2$ parts of $|e^{ix_j}|^2$, so
% \begin{align}
%  \frac{d |e^{ix_j}|^p}{dt} &=  \frac{d}{dt}\left(|e^{ix_j}|^2\cdots |e^{ix_j}|^2 \right) \nonumber\\  
%  &= \left(\frac{d|e^{ix_j}|^2}{dt}\cdots |e^{ix_j}|^2\right) + \cdots + \left(|e^{ix_j}|^2\cdots \frac{d|e^{ix_j}|^2}{dt}\right)\nonumber\\
%  &=\frac{p}{2}\left(-2|e^{ix_j}|^2 \Im(\dot{x_j})\right)\times |e^{ix_j}|^{p-2}\nonumber\\
%  &=-p|e^{ix_j}|^p \Im(\dot{x_j}).
% \end{align}
% Notice that the proof holds for $p\in \mathbb{R}$. For example when $p$ is odd, we can only write $p=q+1$, where $q$ is even and then use the results above with $d |e^{ix_j}|/dt =  -|e^{ix_j}| \Im(\dot{x_j})$. Likewise, for $p\in\mathbb{R}$, it can be split into the integer and its fractional part, and then you can apply the product rule. Next,
% \begin{align}
% \label{eq:pnormdecrease}
%     \frac{d}{dt}\left(\norm{e^{i{\bf x}}}_p\right)^p &= \sum_j \frac{d |e^{ix_j}|^p}{dt} \nonumber \\
%     &= -\sum_j p|e^{ix_j}|^{p}\Im(\dot{x_j}) \nonumber \\
%     &= -\sum_j p|e^{ix_j}|^{p}\Im\left(\sum_{_l}(F_1)_{jl} e^{ix_l} + (F_0)_{j}\right)
%     \nonumber \\
%     &= -\sum_j p|e^{ix_j}|^{p}\left(\Im{\sum_{_l}(F_1)_{jl} e^{ix_l}} + \Im{(F_0)_{j}}\right),
% \end{align}
% where we have used the component wise of \cref{eq:problem_1_rescaled}, i.e.
% \begin{equation}
% \frac{dx_j}{dt} = \sum_{_l}(F_1)_{jl} e^{ix_l} + (F_0)_{j}.    
% \end{equation}

Now it suffices to show that the term inside the parentheses is positive.
By observing that $\sum_l (F_1)_{jl}e^{ix_l} = (\widetilde{F}_1(\textbf{e}_j\otimes\Psi_1))_j$,
we get
Therefore
\begin{align}
\Im\left(\sum_l (F_1)_{jl}e^{ix_l}\right)&\leq \Bigg|\sum_l (F_1)_{jl}e^{ix_l}\Bigg| 
\nonumber\\
&= |(\widetilde{F}_1(\textbf{e}_j\otimes\Psi_1))_j|.
\end{align}
For any vector $\textbf{x}$, its $i$th component is less than its $p$-norm, i.e.
$|x_i|\leq \|\mathbf{x}\|_p$. 
\begin{align}
\Im\left(\sum_l (F_1)_{jl}e^{ix_l}\right)
&\le |(\widetilde{F}_1(\textbf{e}_j\otimes\Psi_1))_j| \nonumber\\
&\leq \|\widetilde{F}_1(\textbf{e}_j\otimes\Psi_1)\|_p  \nonumber\\
&\leq  \|\widetilde{F}_1\|_p\|\Psi_1\|_p  \nonumber\\
&< \tilde{\mu}_0,
\end{align}
where in the last equality we used \cref{eq:ineq_cond}.
Consequently, we get $\left(\Im(F_0^j) + \Im(\sum_l F_1^{jl}e^{ix_l})\right) < 0$,
and therefore
\begin{equation}
    \frac{(\norm{e^{i\textbf{x}}}_p)^p}{dt} =
    -\sum_j p|e^{ix_j}|^{p}\left(\Im(F_0^j) + \Im(\sum_l F_1^{jl}e^{ix_l})\right)
    < 0.
\end{equation}
We have thus proven that $d \norm{e^{i\mathbf{x}(t)}}_p/dt \le 0$ for $t\in[0, \delta]$.
The statement of the lemma follows from applying this procedure repeatedly.
\end{proof}

{\noindent \bf \cref{lemma:finitetimedecreasingnorm}} (Upper Bounded in  $p$-Norm) \textit{Let $\{p,q\} \in [1,\infty)$ be such that $1/p+1/q=1$ and $r >1$. Consider the rescaled ODE from \cref{eq:problem_1_rescaled} where  
\begin{equation}
    \Lambda_p := \max\left\{\|F_0\|_{\infty},\norm{{F}_1}_{{\rm row},q}\right\} \quad \text{and} \quad
     \norm{\Psi_1(0)}_p < 1/r.
\end{equation}
Then for $t\in [0,T_r]$,
\begin{equation}
    \|\Psi_1(t) \|_p \leq \frac{1}{r}
\end{equation}
where
\begin{equation}
    T_{r} := \frac{1}{\Lambda_p(1+1/r)}\ln \left(\frac{1}{r\|\Psi_1(0)\|_p}\right).
\end{equation}}

\begin{proof}
By continuity there exists some $S$ such that for all $t\in [0,S]$ we have 
\[\|\Psi_1(t) \|_p\leq 1/r\]
and $\|\Psi_1(S) \|_p=1/r$.\\

We now show $T_r\leq S$ so that for $t\in[0,T_r]$ we have $\|\Psi_1(t) \|_p\leq 1/r$. We have that $\Psi_1$ satisfies  
\begin{align}
    \dot{\Psi}_1 &= B_1^{(0)}\Psi_1+B_2^{(1)}\Psi_{2} 
\end{align}
By integrating this equation, we get
\begin{align}
    \Psi_1(t) &=\Psi_0(0)+ \int^{t}_0   
    B_1^{(0)}\Psi_1(s)+B_2^{(1)}\Psi_{2}(s) ds.
\end{align}
Using subadditivity of the norm we have
\begin{align}
    \|\Psi_1(t)\|_p &\leq \|\Psi_1(0)\|_p+ \int^{t}_0\|B_1^{(0)}\Psi_1\|_p+\|B_2^{(1)}\Psi_{2}\|_p ds \nonumber\\
    &\leq \|\Psi_1(0)\|_p+ \int^{t}_0 \Lambda_p(\|\Psi_1\|_p+\|\Psi_{2}\|_p) ds\nonumber\\
    &\leq \|\Psi_1(0)\|_p+ \int^{t}_0 \Lambda_p(\|\Psi_1\|_p+\|\Psi_{1}\|_p^2) ds\nonumber\\
    &= \|\Psi_1(0)\|_p+ \int^{t}_0 \Lambda_p\|\Psi_1\|_p(1+\|\Psi_{1}\|_p) ds,
\end{align}
where we have used  $\|B^{(1)}_{j+1}\|_p \leq j\|\widetilde{F}_1\|_p, \|B^{(0)}_{j}\|_p \leq j\|\widetilde{F}_0\|_p$, \cref{lemma:F_1_p_F_1_row_q}, $\|\widetilde{F}_0\|_{p}=\|F_0\|_{\infty}$ and $\norm{\Psi_k}_p = (\norm{\Psi_1}_p)^k$ which follows from $\Psi_k = (\Psi_1)^{\otimes k}$.\\

For all $t\in [0,S]$,  we have $\|\Psi_1(t) \|_p \le 1/r$, therefore
\begin{equation}
    \|\Psi_1(t)\|_p \leq 
    \|\Psi_1(0)\|_p+ \Lambda_p (1+1/r)\int^{t}_0 \|\Psi_1(s)\|_p  ds.
\end{equation}
Applying the integral form of Gronwalls inequality gives
\begin{equation}
    \|\Psi_1(t)\|_p \leq \|\Psi_1(0)\|_pe^{\Lambda_p(1+1/r)t} \quad \forall t\in [0,S].
\end{equation}
We have $\|\Psi_1(S)\|_p = 1/r$, which implies
\begin{equation}
    \frac{1}{r} \leq \|\Psi_1(0)\|_p e^{\Lambda_p(1+1/r)S}.
\end{equation}
Therefore
\begin{equation}
    S\geq T_{r} := \frac{1}{\Lambda_p(1+1/r)}\ln \left(\frac{1}{r\|\Psi_1(0)\|_p}\right).
\end{equation}
\end{proof}

{\noindent \bf \cref{lemma:V_ke^-Lh_bound}}
(Taylor remainder bound)
\textit{For \cref{eq:TaylorSystem} let $\|{\bf L}_N^{\rm T}\|_ph\leq 1$ and $me^2/(k+1)!\leq 1$. Then for all $j\leq m,$
    \begin{equation}
        \norm{\left(V_k^j-e^{{\bf L}_N^{\rm T}hj}\right)e^{-{\bf L}_N^{\rm T}hj} }_p  \leq \frac{(e-1)je^2}{(k+1)!}
    \end{equation}
where $V_k$ is the truncated Taylor series given in \cref{eq:V_k}.}
\begin{proof} Let $X=V_ke^{-{\bf L}_N^{\rm T}h}$. We have,
\begin{equation}
    \norm{\left(V_k^j-e^{{\bf L}_N^{\rm T}hj}\right)e^{-{\bf L}_N^{\rm T}hj} }_p   
    = \norm{X^j -I }_p = \norm{(I-X)(I+X+\cdots+X^{j-1})}_p \leq \norm{I-X}_p
    \sum_{i=0}^{j-1}\norm{X}_p^i.
\end{equation}
Since $\|{\bf L}_N^{\rm T}\|_ph\leq 1$ we have,
\begin{align}
\norm{I-X}_p = \norm{\left(e^{{\bf L}_N^{\rm T}h}-V_k\right)\left(e^{-{\bf L}_N^{\rm T}h}\right)}_p  &\leq  \norm{e^{{\bf L}_N^{\rm T}h}-V_k}_p\norm{e^{-{\bf L}_N^{\rm T}h}}_p\nonumber \\ 
&\leq \norm{\sum_{i=k+1}^{\infty} \frac{({\bf L}_N^{\rm T}h)^i}{i!}}_p   e^{\norm{{\bf L}_N^{\rm T}}_ph}\nonumber \\ 
&\leq \frac{e^{\norm{{\bf L}_N^{\rm T}}_ph}\left(\norm{{\bf L}_N^{\rm T}}_ph\right)^{k+1}}{(k+1)!}e^{\norm{{\bf L}_N^{\rm T}}_ph}\nonumber \\ 
&\leq  \frac{e^2}{(k+1)!}.
\end{align}

Hence,
\begin{equation}
    \|X\|_p \leq  1+\|X-I\|_p \leq 1+\frac{e^2}{(k+1)!}.
\end{equation}
Then, from the last two inequalities
\begin{align}
    \norm{X^j -I }_p \leq \frac{e^2}{(k+1)!} \sum_{i=0}^{j-1}\left(1+\frac{e^2}{(k+1)!}\right)^i &=\left(1+\frac{e^2}{(k+1)!}\right)^{j}-1 \\ \nonumber &= \exp\left(j \log\left(1+e^2/(k+1)!\right)\right)-1\\ \nonumber
    &\leq \exp\left({\frac{je^2}{(k+1)!}}\right) -1  \\ \nonumber &\leq \frac{je^2(e-1)}{(k+1)!},
\end{align}
where we have used  $\ln(1+x)\leq x$ for $x\geq 0$,  $\frac{e^x-1}{x}$ being increasing for $0\leq x \leq 1$ and the assumption $je^2/(k+1)!\leq 1$.
\end{proof}

\section{Proofs of lemmas and supporting results for the algorithm complexity analysis}\label{App:Supp_res}

In this Appendix, we prove several preliminary lemmas required in the proof of 
\cref{lemma:complexity_L_inv}, including 
construction of block encodings of matrices $B_j^{(0)}$ and $B_j^{(1)}$
and other lemmas for bounding the error in the block encoding of $\mathcal{L}^{-1}$. We begin by reviewing known results on block encodings.
\begin{lemma}[Block-encoding of sparse-access matrices {\cite[Lemma~48]{gilyen2019quantum}}]\label{lemma:block_encoding_sparse}
Let $A$ be an $s_r$-row-sparse and $s_c$-column-sparse $s$-qubit matrix where each element of $A$ has absolute value at most $1$. Suppose we have access to the positions and values of the nonzero entries of $A$ with 3 oracles, $O_r,O_c,O_A$. Then, 
a $(\sqrt{s_rs_c}, 0)$-block encoding of $A$ can be constructed using $\mathcal{O}(1)$ calls to $O_r,O_c$ and $O_A$.\end{lemma}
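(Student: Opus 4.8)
The plan is to reproduce the standard construction for block-encoding a sparse-access matrix, and to verify that it achieves subnormalization $\alpha=\sqrt{s_r s_c}$ exactly (zero error) using only $\mathcal{O}(1)$ oracle calls. First I would fix the precise action of the three oracles. The value oracle $O_A$ imprints each matrix entry into the amplitude of an ancilla qubit,
\begin{equation}
O_A\ket{i}\ket{j}\ket{0} = \ket{i}\ket{j}\left(A_{ij}\ket{0} + \sqrt{1-|A_{ij}|^2}\,\ket{1}\right),
\end{equation}
which is a valid controlled rotation precisely because every entry satisfies $|A_{ij}|\le 1$. The row oracle $O_r$ maps a row index $i$ together with a counter $\ell\in\{1,\dots,s_r\}$ to the column index of the $\ell$-th nonzero entry of row $i$, and $O_c$ performs the analogous enumeration for columns. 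These combinatorial oracles are what let us list the at most $s_r$ (resp.\ $s_c$) nonzero positions in each row (resp.\ column).

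Next I would assemble two ``diffusion'' subroutines from $O_r$ and $O_c$. Conditioned on a column index $j$, the column diffusion prepares a uniform superposition over the (at most $s_c$) row indices $i$ with $A_{ij}\ne 0$, producing amplitude $1/\sqrt{s_c}$ on each; the row diffusion does the symmetric thing. Writing $U$ as the composition of the column preparation, the value oracle $O_A$, a swap of the index registers, and the inverse row preparation, the projected action is obtained by direct computation: the factor $1/\sqrt{s_c}$ from the column diffusion, the factor $1/\sqrt{s_r}$ from un-preparing the row diffusion, and the factor $A_{ij}$ from $O_A$ multiply to give
\begin{equation}
\left(\bra{0^a}\otimes I\right)U\left(\ket{0^a}\otimes I\right)\ket{j} = \frac{1}{\sqrt{s_r s_c}}\sum_i A_{ij}\ket{i} = \frac{1}{\sqrt{s_r s_c}}\,A\ket{j}.
\end{equation}
Since this holds for every basis state $\ket{j}$, the projected operator equals $A/\sqrt{s_r s_c}$ with no approximation, yielding a $(\sqrt{s_r s_c},0)$-block encoding. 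Each of the four composed pieces uses at most a constant number of calls to $O_r$, $O_c$, and $O_A$, so the total cost is $\mathcal{O}(1)$ queries.

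The main obstacle is not any single estimate, since the construction is exact, but rather the careful register bookkeeping: aligning the ancilla and index registers so that the row and column enumerations are consistent, and checking that the inverse-preparation step cancels correctly so that the flag ancilla returns to $\ket{0^a}$ exactly on the branch that selects $A_{ij}$. I would isolate this by tracking the full register state through each stage of $U$. Because the statement is a verbatim restatement of Lemma~48 of Ref.~\cite{gilyen2019quantum}, I would ultimately defer the complete register-level verification to that reference rather than re-derive every index manipulation here.
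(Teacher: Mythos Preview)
The paper does not prove this lemma at all: it is stated as a review of a known result and simply cites \cite[Lemma~48]{gilyen2019quantum} without any argument. Your sketch of the standard row/column diffusion plus value-oracle construction is correct and in fact more detailed than anything the paper provides, and your final deferral to the cited reference matches exactly how the paper handles it.
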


The following lemmas review standard algebraic operations on block-encoded operators. 
Unless stated otherwise, a query to $U_A$ counts any call to $U_A$, $U_A^\dagger$, or c$U_A$. In the following lemmas we specify the exact circuit used, in other sections we will adopt the shorthand “calls to $U_A$” to mean any of these.

\begin{lemma}[Block encoding inversion {\cite[Lemma~11]{low2024quantum}}]\label{lemma:block_encoding_inversion}
Let $U_A$ be a $(\alpha_A,0)$-block encoding of the operator $A$ where
$\alpha_A \geq \| A \|$. Then with  $\alpha_{A^{-1}} \geq \| A^{-1} \|$ a  $(2\alpha_{A^{-1}},\varepsilon)$-block encoding of $A^{-1}$ can be constructed using
$
\mathcal{O}\!\left( \alpha_A \alpha_{A^{-1}} \log\!\left(\tfrac{1}{\varepsilon}\right) \right)
$
queries to $cU_A$ and its inverse.
\end{lemma}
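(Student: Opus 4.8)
The plan is to reduce the construction of a block encoding of $A^{-1}$ to the application of a polynomial approximation of the reciprocal function via the quantum singular value transformation (QSVT). First I would normalize: since $U_A$ is an $(\alpha_A,0)$-block encoding, it exactly encodes $\tilde{A} := A/\alpha_A$ in its top-left block, and the singular values $\sigma$ of $\tilde{A}$ satisfy $\sigma \le \|A\|/\alpha_A \le 1$ (using $\alpha_A \ge \|A\|$) and $\sigma \ge 1/(\alpha_A\|A^{-1}\|) \ge 1/(\alpha_A\alpha_{A^{-1}})$ (using $\alpha_{A^{-1}} \ge \|A^{-1}\|$). Writing $\kappa := \alpha_A\alpha_{A^{-1}}$, all singular values of $\tilde{A}$ lie in $[1/\kappa,1]$. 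The target can be rewritten as $A^{-1}/(2\alpha_{A^{-1}}) = \tilde{A}^{-1}/(2\kappa)$, so it suffices to apply the singular-value transformation $x \mapsto 1/(2\kappa x)$ to $\tilde{A}$.

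Next I would invoke the standard polynomial approximation of the reciprocal: there exists an odd polynomial $P$ with $\max_{x\in[-1,1]}|P(x)| \le 1$ and $|P(x) - 1/(2\kappa x)| \le \varepsilon'$ for all $x \in [1/\kappa,1]$, of degree $d = \mathcal{O}(\kappa \log(1/\varepsilon'))$. The two key features are that the degree scales linearly in $\kappa$ and logarithmically in the inverse accuracy, and that the boundedness $|P|\le 1$ makes $P$ admissible for QSVT. The factor of $1/2$ in the target provides the headroom needed to enforce $|P|\le 1$ on the full interval $[-1,1]$ rather than merely on $[1/\kappa,1]$.

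Then I would apply QSVT with $P$ to $U_A$. This yields a unitary that is an exact block encoding of $P(\tilde{A})$, constructed from $d$ queries to the controlled $U_A$ and its inverse together with $\mathcal{O}(d)$ single-qubit phase rotations. Since $P(\tilde{A})$ approximates $\tilde{A}^{-1}/(2\kappa) = A^{-1}/(2\alpha_{A^{-1}})$ in operator norm to within $\varepsilon'$ over the relevant singular-value interval, the resulting circuit is a $(2\alpha_{A^{-1}},\,2\alpha_{A^{-1}}\varepsilon')$-block encoding of $A^{-1}$. Choosing $\varepsilon' = \varepsilon/(2\alpha_{A^{-1}})$ gives the claimed accuracy $\varepsilon$, and the query count becomes $d = \mathcal{O}\bigl(\kappa \log(\alpha_{A^{-1}}/\varepsilon)\bigr) = \mathcal{O}\bigl(\alpha_A\alpha_{A^{-1}}\log(1/\varepsilon)\bigr)$ after absorbing the polynomial prefactor $\alpha_{A^{-1}}$ inside the logarithm.

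The main obstacle is establishing the polynomial approximation of $1/x$ with the stated degree while simultaneously respecting the global boundedness constraint $|P| \le 1$ on all of $[-1,1]$, since QSVT requires a definite-parity polynomial bounded by one in magnitude everywhere, not only on the domain where the approximation is accurate. This is typically handled by smoothing the reciprocal near the origin (e.g.\ convolving against a suitable kernel) and truncating a Chebyshev expansion, with the factor of $2$ appearing in $2\alpha_{A^{-1}}$ being precisely the slack that allows the rescaled polynomial to be clipped below unity on the whole interval.
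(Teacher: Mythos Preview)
The paper does not prove this lemma; it simply cites it as \cite[Lemma~11]{low2024quantum} and uses it as a black box. Your QSVT-based argument is the standard route to this result and is essentially what the cited reference does: rescale so that the singular values lie in $[1/\kappa,1]$ with $\kappa=\alpha_A\alpha_{A^{-1}}$, invoke the odd polynomial approximation of $x\mapsto 1/(2\kappa x)$ bounded by $1$ on $[-1,1]$, and apply QSVT.

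One small point: the degree of the standard reciprocal polynomial is $\mathcal{O}(\kappa\log(\kappa/\varepsilon'))$ rather than $\mathcal{O}(\kappa\log(1/\varepsilon'))$, so after setting $\varepsilon'=\varepsilon/(2\alpha_{A^{-1}})$ the query count is really $\mathcal{O}\bigl(\alpha_A\alpha_{A^{-1}}\log(\alpha_A\alpha_{A^{-1}}/\varepsilon)\bigr)$. Your claim that the extra $\alpha_{A^{-1}}$ can be ``absorbed'' into $\log(1/\varepsilon)$ is not literally true in general; the lemma as stated is suppressing this logarithmic dependence on $\kappa$, which is common practice but worth flagging if you want the argument to be airtight.
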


\begin{lemma}[Arithmetic using block-encoded matrices {\cite[Lemma~3]{krovi2023improved}, \cite[Lemma~53]{gilyen2019quantum}, \cite[Proposition~4]{takahira2021quantum} }]\label{lemma:block_encoding_arithmetic}
If $U$ is an $(\alpha,\epsilon)$-block encoding of the $s$-qubit operator $A$ and $V$ is a $(\beta,\delta)$-block encoding of the $w$-qubit operator $B$, then 

\begin{enumerate}
    \item an $(\alpha + \beta , \alpha \delta +\beta \epsilon)$-block encoding of $A+B$ can be constructed using $\mathcal{O}(1)$ calls to $cU$ and $cV$

    \item an $(\alpha \beta , \alpha \delta +\beta \epsilon)$-block encoding of $A B$ can be constructed using $\mathcal{O}(1)$ calls to $U$ and $V$

    \item an $(\alpha \beta , \alpha \delta +\beta \epsilon)$-block encoding of $A\otimes B$ can be constructed using $\mathcal{O}(1)$ calls to $U$ and $V$ 
\end{enumerate}
\end{lemma}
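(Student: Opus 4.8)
The plan is to treat the three operations separately, in each case first exhibiting an explicit unitary $W$ built from $\mathcal{O}(1)$ uses of $U$ and $V$, then identifying exactly which operator $W$ block-encodes, and finally bounding the block-encoding error by the triangle inequality. Throughout I would write $\tilde A := \alpha(\bra{0^a}\otimes I)U(\ket{0^a}\otimes I)$ and $\tilde B := \beta(\bra{0^w}\otimes I)V(\ket{0^w}\otimes I)$ for the operators encoded \emph{exactly} by $U$ and $V$, so that by \cref{def:block_encoding} we have $\|A-\tilde A\|\le\epsilon$ and $\|B-\tilde B\|\le\delta$, while $\|\tilde A\|\le\alpha$ and $\|\tilde B\|\le\beta$ because any sub-block of a unitary has operator norm at most $1$. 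The target in each item is to show $W$ encodes $\tilde A\star\tilde B$ exactly and then to convert $\|(A\star B)-(\tilde A\star\tilde B)\|$ into the stated bound.

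For the product (item 2), I would take $W = (U\otimes I_{a_2})(V\otimes I_{a_1})$, where the two factors share the system register while $U$ acts trivially on the ancilla register $a_2$ of $V$ and vice versa. Writing $\Pi = \ket{0^{a_1}}\ket{0^{a_2}}\bra{0^{a_1}}\bra{0^{a_2}}\otimes I$, the key computation is to push $\Pi$ through the circuit: applying $V\otimes I_{a_1}$ to $\ket{0^{a_1}}\ket{0^{a_2}}\ket{\psi}$ yields $\ket{0^{a_1}}\bigl(\ket{0^{a_2}}(\tilde B/\beta)\ket{\psi}+\ket{\perp}\bigr)$ with $\ket{\perp}$ orthogonal to $\ket{0^{a_2}}$ on register $a_2$; since $U$ does not touch $a_2$, the final projection $\bra{0^{a_2}}$ annihilates the $\ket{\perp}$ branch, leaving $(\tilde A/\alpha)(\tilde B/\beta)$. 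Hence $W$ exactly block-encodes $\tilde A\tilde B$ with subnormalization $\alpha\beta$, and the error follows from $\|AB-\tilde A\tilde B\|\le\|A\|\,\|B-\tilde B\|+\|A-\tilde A\|\,\|\tilde B\|\le\alpha\delta+\beta\epsilon$, using the standard normalization $\|A\|\le\alpha$.

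For the tensor product (item 3) the construction is simply $W=U\otimes V$ with the two ancilla registers grouped together; because $U$ and $V$ act on disjoint registers, the top-left block factorizes as $(\tilde A/\alpha)\otimes(\tilde B/\beta)$, so $W$ encodes $\tilde A\otimes\tilde B$ with subnormalization $\alpha\beta$, and the bound $\|A\otimes B-\tilde A\otimes\tilde B\|\le\alpha\delta+\beta\epsilon$ is identical after invoking multiplicativity of the operator norm under tensor products. For the sum (item 1) I would use a one-qubit linear-combination-of-unitaries gadget: let $P$ be a state preparation with $P\ket0=(\sqrt\alpha\,\ket0+\sqrt\beta\,\ket1)/\sqrt{\alpha+\beta}$ and let $\mathrm{SEL}=\ket0\bra0\otimes U+\ket1\bra1\otimes V$ (each summand padded to the common ancilla–system register); then $W=(P^\dagger\otimes I)\,\mathrm{SEL}\,(P\otimes I)$ has top-left block $(\tilde A+\tilde B)/(\alpha+\beta)$, and the controlled applications inside $\mathrm{SEL}$ are precisely why this item lists calls to $cU$ and $cV$. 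Its error is $\|(A+B)-(\tilde A+\tilde B)\|\le\epsilon+\delta\le\alpha\delta+\beta\epsilon$, the last inequality using $\alpha,\beta\ge1$.

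The routine content of all three items is the verification that the claimed block is correct, and the single step I expect to be the main obstacle is the product: specifically, the argument that the off-diagonal ancilla branch $\ket{\perp}$ is killed by the final projection \emph{despite} $U$ and $V$ sharing the system register. Getting the register bookkeeping right — which ancilla each factor touches, and the order in which the projectors are resolved — is where a careless proof would silently drop cross terms. Once that is pinned down, the error estimates are immediate consequences of the triangle inequality together with the submultiplicativity and tensor-multiplicativity of the operator norm.
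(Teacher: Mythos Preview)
The paper does not prove this lemma; it is stated with citations to \cite{krovi2023improved,gilyen2019quantum,takahira2021quantum} and used as a black box. Your constructions are exactly the standard ones from those references (LCU with a one-qubit prepare/select for the sum, cascading $U$ and $V$ on a shared system register with disjoint ancillas for the product, and the direct tensor for item~3), so there is no methodological difference to discuss.

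Two soft spots are worth flagging, both of which are really imprecisions in the lemma as stated rather than flaws in your argument. First, in item~1 your final step $\epsilon+\delta\le\alpha\delta+\beta\epsilon$ genuinely needs $\alpha,\beta\ge1$; without that assumption the stated error bound $\alpha\delta+\beta\epsilon$ for the sum can fail (take $\alpha=\beta=1/2$, $\epsilon=\delta>0$). The natural and tight error for the sum is $\epsilon+\delta$. Second, in items~2 and~3 you invoke $\|A\|\le\alpha$, which is not implied by \cref{def:block_encoding} alone (only $\|\tilde A\|\le\alpha$ is automatic). If you drop that assumption the split $AB-\tilde A\tilde B=A(B-\tilde B)+(A-\tilde A)\tilde B$ gives $(\alpha+\epsilon)\delta+\beta\epsilon=\alpha\delta+\beta\epsilon+\epsilon\delta$, i.e.\ an extra second-order $\epsilon\delta$ term. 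Both issues are conventional in the block-encoding literature (one typically has $\alpha\ge\|A\|\ge$ something of order~$1$), and neither affects how the lemma is used downstream in the paper.
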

We next prove a lemma that states the complexity of inverting a $\delta$ accurate block encoding. 

\begin{lemma}[Block encoding inversion with error]\label{lemma:block_encoding_inversion_with_error} Let $U_A$ be an $(\alpha_A,\delta)$-block encoding of the operator $A$ where
$\alpha_A \geq \| A \|$, $\delta \leq \frac{1}{2\|A^{-1}\|}$. Then with  $\alpha_{A^{-1}} \geq \| A^{-1} \|$ a  $(4\alpha_{A^{-1}},\epsilon+2\delta \|A^{-1}\|^2)$-block encoding of $A^{-1}$ can be constructed using
$\mathcal{O}\left(\alpha_{A}\alpha_{A^{-1}}\log\left(\frac{1}{\epsilon}\right)\right)$ queries to c$U_{A}$ and its inverse. 
\end{lemma}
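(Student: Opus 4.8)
The plan is to reduce this statement to the exact-block-encoding inversion result of \cref{lemma:block_encoding_inversion}. First I would name the operator that $U_A$ block-encodes \emph{exactly}, namely $\tilde{A} := \alpha_A \big((\bra{0^a}\otimes I)\,U_A\,(\ket{0^a}\otimes I)\big)$, so that $U_A$ is an $(\alpha_A,0)$-block encoding of $\tilde{A}$ and, by \cref{def:block_encoding}, $\|A - \tilde{A}\| \le \delta$. Since $U_A$ is unitary, $\|(\bra{0^a}\otimes I)U_A(\ket{0^a}\otimes I)\| \le 1$, hence $\|\tilde{A}\| \le \alpha_A$ automatically, so the hypothesis $\alpha_A \ge \|\tilde{A}\|$ required to invoke \cref{lemma:block_encoding_inversion} on $\tilde{A}$ is met for free.

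Next I would control the perturbation of the inverse. Because $\delta \le 1/(2\|A^{-1}\|)$, the standard Neumann-series perturbation argument shows $\tilde{A}$ is invertible and
\begin{equation}
\|\tilde{A}^{-1}\| \le \frac{\|A^{-1}\|}{1-\|A^{-1}\|\,\|A-\tilde{A}\|} \le \frac{\|A^{-1}\|}{1-1/2} = 2\|A^{-1}\| \le 2\alpha_{A^{-1}}.
\end{equation}
Thus $2\alpha_{A^{-1}}$ is a valid upper bound on $\|\tilde{A}^{-1}\|$, and applying \cref{lemma:block_encoding_inversion} to $\tilde{A}$ yields a $(2\cdot 2\alpha_{A^{-1}},\epsilon) = (4\alpha_{A^{-1}},\epsilon)$-block encoding of $\tilde{A}^{-1}$ using $\mathcal{O}(\alpha_A\cdot 2\alpha_{A^{-1}}\log(1/\epsilon)) = \mathcal{O}(\alpha_A\alpha_{A^{-1}}\log(1/\epsilon))$ queries to $cU_A$ and its inverse (the queries to $\tilde{A}$ are literally queries to the given $U_A$).

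Finally I would account for the circuit encoding $\tilde{A}^{-1}$ rather than the target $A^{-1}$. Using the resolvent identity $A^{-1}-\tilde{A}^{-1} = A^{-1}(\tilde{A}-A)\tilde{A}^{-1}$ together with the bounds above,
\begin{equation}
\|A^{-1}-\tilde{A}^{-1}\| \le \|A^{-1}\|\,\|\tilde{A}-A\|\,\|\tilde{A}^{-1}\| \le \|A^{-1}\|\cdot\delta\cdot 2\|A^{-1}\| = 2\delta\|A^{-1}\|^2.
\end{equation}
A triangle inequality then promotes the $(4\alpha_{A^{-1}},\epsilon)$-block encoding of $\tilde{A}^{-1}$ to a $(4\alpha_{A^{-1}},\epsilon + 2\delta\|A^{-1}\|^2)$-block encoding of $A^{-1}$ with the same query count, which is the claim.

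The step I expect to require the most care is the error bookkeeping rather than any deep idea: cleanly separating the \emph{algorithmic} error $\epsilon$ produced by \cref{lemma:block_encoding_inversion} from the \emph{modelling} error $2\delta\|A^{-1}\|^2$ inherited from the $\delta$-inaccuracy of $U_A$, and verifying that the hypothesis $\delta \le 1/(2\|A^{-1}\|)$ is precisely what keeps the perturbed inverse within a factor of $2$, so that the scaling factor remains $4\alpha_{A^{-1}}$ instead of degrading.
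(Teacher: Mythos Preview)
Your proposal is correct and follows essentially the same route as the paper: define the exactly block-encoded operator $\tilde{A}$, use the Neumann-series bound (enabled by $\delta \le 1/(2\|A^{-1}\|)$) to get $\|\tilde{A}^{-1}\| \le 2\alpha_{A^{-1}}$, apply \cref{lemma:block_encoding_inversion} to $\tilde{A}$, and then absorb $\|A^{-1}-\tilde{A}^{-1}\| \le 2\delta\|A^{-1}\|^2$ via the triangle inequality. Your use of the resolvent identity for the last bound is a slight stylistic variant of the paper's Neumann-series computation, and your explicit remark that $\|\tilde{A}\| \le \alpha_A$ holds automatically is a nice touch the paper leaves implicit.
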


\begin{proof}
Let $\tilde{A}=A+\delta A$ with $\|A-\tilde{A}\|=\|\delta A\|\leq \delta$. From \cref{def:block_encoding} for $U_A$ we can choose 
$\tilde{A}=\alpha_A(\bra{0}\otimes I) U_A(\ket{0}\otimes I)$
so that $U_{A}$ is an $(\alpha_{A},0)$-block encoding of $\tilde{A}$. Then from \cref{lemma:block_encoding_inversion}, with $\beta\geq \|\tilde{A}^{-1}\|$ a $(2\beta, \epsilon)$-block encoding of $\tilde{A}^{-1}$, $V$ can be constructed using $\mathcal{O}\left(\alpha_{A}\beta\log\left(\frac{1}{\epsilon}\right)\right)$ queries to c$U_{A}$ and its inverse. Using \cref{def:block_encoding}
 and the reverse triangle inequality we have
\begin{equation}
\left\|\tilde{A}^{-1} - 2\beta \big( (\bra{0}\otimes I)\, V\,(\ket{0} \otimes I) \big) \right\| \leq \epsilon  
\,\Rightarrow\,\left\| A^{-1}  -2\beta\big( (\bra{0}\otimes I)\, V\,(\ket{0} \otimes I) \big) \right\| \leq \epsilon +\|\tilde{A}^{-1}-A^{-1}\|
\end{equation}
so that $V$ is a $(2\beta, \epsilon+\|\tilde{A}^{-1}-A^{-1}\|)$-block encoding of $A^{-1}$. \\

Since $\|\delta A\|\leq \delta \leq \frac{1}{2\|A^{-1}\|}$, we have
\begin{align}
\|\tilde{A}^{-1}\| &\leq \|A^{-1}\|\|(I+A^{-1}\delta A)^{-1}
\|\nonumber\\ 
&= \|A^{-1}\| \left\|\sum_{k=0}^\infty(-A^{-1}\delta A)^{k}
\right\|\nonumber\\
&\leq  \frac{\|A^{-1}\|}{1-\|A^{-1}\|\|\delta A\|}\nonumber\\
&\leq 2\|A^{-1}\|\nonumber\\
&\leq 2\alpha_{A^{-1}}.
\end{align}
We can therefore choose $\beta=2\alpha_{A^{-1}}$. We also have
\begin{align}
    \|A^{-1}-\tilde{A}^{-1}\|&\leq\|A^{-1}\|\|I-(I+A^{-1}\delta A)^{-1}\|\nonumber\\
    &\leq \|A^{-1}\| \left\|  \sum_{k=1}^\infty (-1)^{k+1}(A^{-1}\delta A)^k\right\|\nonumber\\
    &\leq 2\delta\|A^{-1}\|^2.
\end{align}
Then $V$ is a $(4\alpha_{A^{-1}},\epsilon+2\delta\|A^{-1}\|)$-block encoding of $A^{-1}$.
\end{proof}

We now construct block-encodings of the block matrices $\widetilde{F}_0$ and $\widetilde{F}_1$ resulted from the linearization of \cref{eq:problem_1_rescaled}.

\begin{lemma}[Block encoding of $\widetilde{F}_0$]\label{lemma:complexity_F_0_tilde}
    Given a real number $\alpha \geq \max_j|(G_0)_j|$ and access to an oracle $U_{G_0}$ with the action $U_{G_0}\ket{0^{b_1}}\ket{j}=\ket{[(G_0)_j]}\ket{j}$, an $(\alpha,0)$-block encoding of the matrix
    $\widetilde{F}_0$ defined in \cref{eq:F0_tilde} for the rescaling given in \cref{eq:rescaling_definition} can be constructed using $\mathcal{O}(1)$ calls to $U_{G_0}$. 
\end{lemma}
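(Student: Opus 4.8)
The plan is to build the block encoding by the standard ``load, rotate, uncompute'' pattern for diagonal matrices specified by a binary oracle. First I would note that under the rescaling of \cref{eq:rescaling_definition} we have $F_0 = G_0$, so $\widetilde{F}_0 = \text{diag}(G_0)$ is independent of the rescaling parameter $\nu$; the rescaling enters only through $F_1 = \nu G_1$. It therefore suffices to block-encode $\text{diag}(G_0)/\alpha$, where the hypothesis $\alpha \ge \max_j |(G_0)_j|$ guarantees every diagonal entry has magnitude at most $\alpha$, so that $\norm{\widetilde{F}_0}_p = \max_j |(G_0)_j| \le \alpha$ as required for a valid scaling factor.

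The circuit I would use acts on three registers: a $b_1$-qubit data register, a single flag qubit, and the system register carrying $\ket{j}$. Define $U = U_{G_0}^\dagger\, W\, U_{G_0}$, where $W = \sum_x \ket{x}\!\bra{x}_{\text{data}} \otimes V_x$ is a controlled single-qubit rotation acting on the flag qubit, controlled on the data register. For each binary string $x$ encoding a complex number $z_x$ with $|z_x| \le \alpha$, take $V_x$ to be the $2\times 2$ unitary
\begin{equation}
V_x = \begin{pmatrix} z_x/\alpha & -\sqrt{1-|z_x/\alpha|^2} \\[2pt] \sqrt{1-|z_x/\alpha|^2} & \bar z_x/\alpha \end{pmatrix},
\end{equation}
which one checks is unitary and satisfies $\bra{0}V_x\ket{0} = z_x/\alpha$, and which can be synthesized from arithmetic on $x$ together with elementary rotations using no further calls to $U_{G_0}$.

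To verify the block-encoding identity I would trace the action on $\ket{0}_{\text{data}}\ket{0}_{\text{flag}}\ket{j}$: the oracle $U_{G_0}$ loads $\ket{[(G_0)_j]}_{\text{data}}$, then $W$ applies $V_{[(G_0)_j]}$ to the flag, and finally $U_{G_0}^\dagger$ clears the data register on \emph{both} flag branches, since it acts trivially on the flag qubit. Projecting the output onto $\ket{0}_{\text{data}}\ket{0}_{\text{flag}}$ then leaves exactly $(G_0)_j/\alpha\,\ket{j}$; because every gate preserves $\ket{j}$, all off-diagonal matrix elements vanish. Hence $(\bra{0^{a}}\otimes I)\,U\,(\ket{0^{a}}\otimes I) = \text{diag}(G_0)/\alpha = \widetilde{F}_0/\alpha$ on the combined $(b_1+1)$-qubit ancilla register, giving an exact $(\alpha,0)$-block encoding, with query count two ($U_{G_0}$ and $U_{G_0}^\dagger$), i.e. $\mathcal{O}(1)$.

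The main obstacle is the synthesis and exactness of the controlled complex rotation $W$: unlike the real-valued case, encoding a complex diagonal entry requires a full single-qubit unitary whose $(0,0)$ entry equals $z_x/\alpha$, so both the magnitude (a $y$-rotation by $2\arccos(|z_x|/\alpha)$) and the phase (a $z$-rotation by $\arg z_x$) must be reproduced from the binary string $x$. I would treat the binary value returned by $U_{G_0}$ as the exact value of $(G_0)_j$, consistent with the zero-error claim of the lemma; any finite-precision truncation inside the oracle would instead propagate into a nonzero $\varepsilon$ in the block-encoding accuracy, which lies outside the scope of this statement.
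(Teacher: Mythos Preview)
Your proposal is correct and follows essentially the same ``load, rotate, uncompute'' construction as the paper: the paper applies $U_{G_0}$, a controlled $R_y$ rotation by $2\arcsin((G_0)_j/\alpha)$, then $U_{G_0}^\dagger$ and a final $X$ on the flag qubit, which is exactly your $U_{G_0}^\dagger\,W\,U_{G_0}$ pattern with a specific choice of the controlled single-qubit unitary. Your treatment is arguably slightly more careful in that you explicitly allow for complex $(G_0)_j$ by using a general unitary $V_x$ with prescribed $(0,0)$ entry (separating magnitude and phase), whereas the paper's $R_y(2\arcsin(\cdot))$ formula is literally valid only for real arguments; but the idea and query count are identical.
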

\begin{proof}
The circuit $U_{\widetilde{F}_0} = X_a \, U_{G_0, (b_1,j)}^\dagger \, cR_y\!\left(2\arcsin\!\left(\tfrac{(G_0)_j}{\alpha}\right)\right)_{( b_1\rightarrow a)} \, U_{G_0, (b,j)}$ is an exact block encoding of $\widetilde{F}_0/\alpha$. This is clear from its action on $\ket{0}\ket{0^{b_1}}\ket{j}$.

\[
\Qcircuit @C=1.6em @R=1.4em {
%----- ancilla line -----
\lstick{\ket{0}}   & \qw
                   & \qw  
                   & \gate{R_y\!\left(2\arcsin\!\left(\tfrac{(G_0)_j}{\alpha}\right)\right)} 
                   & \qw
                   & \gate{X}
                   & \qw \\
%----- f register -----
\lstick{\ket{0^{b_1}}} & \multigate{1}{U_{G_0}}
                   & \qw
                   & \ctrl{-1}
                   & \qw
                   & \multigate{1}{U_{G_0}^{\dagger}}
                   & \qw
                   & \qw \\
%----- j register -----
\lstick{\ket{j}}   & \ghost{U_{G_0}}
                   & \qw
                   & \qw
                   & \qw
                   & \ghost{U_{G_0}^{\dagger}}
                   & \qw
                   & \qw
}
\]

We have
\begin{align}
\ket{0}\ket{0^{b_1}}\ket{j}
 \xrightarrow{\,U_{G_0}\,}
\ket{0}\ket{[(G_0)_j]}\ket{j}
 \xrightarrow{cR_y\!\left(2\arcsin\!\left(\tfrac{(G_0)_j}{\alpha}\right)\right)}
\left(
\sqrt{1-\frac{|(G_0)_j|^2}{\alpha^2}}\ket{0}
+\frac{(G_0)_j}{\alpha}\ket{1}
\right)\ket{[(G_0)_j]}\ket{j}\nonumber\\
 \xrightarrow{\, XU_{G_0}^\dagger\,}
\!\left( \frac{(G_0)_j}{\alpha}\ket{0}
+
\sqrt{1-\frac{|(G_0)_j|^2}{\alpha^2}}\ket{1}
\right)\ket{0^{b_1}}\ket{j},
\end{align}
such that
\begin{align}
\bra{0}\bra{0^{b_1}}U\ket{0}\ket{0^{b_1}}\ket{j} =\frac{(G_0)_j}{\alpha}\ket{j} = \frac{(F_0)_j}{\alpha}\ket{j},
\end{align}
since $G_0=F_0$ from \cref{eq:problem_1_rescaled}. The rotation angle is well defined since $\alpha \geq \max_j|(G_0)_j|$ and the circuit uses one call to $U_{G_0}$ and one call to $U_{G_0}^{\dagger}$.
\end{proof}

\begin{lemma}[Block encoding of $\widetilde{F}_1$]\label{lemma:complexity_F_1_tilde}
    Given a unitary $U_{G_1}$ that is a $(\beta,0)$-block encoding of $G_1$ and $\nu\in\mathbb{R}_+$ for the rescaling given in \cref{eq:rescaling_definition}, a $(\nu\beta,0)$-block encoding of ${\bf e}_1\otimes\widetilde{F}_1\in\mathbb{C}^{n^2\times n^2}$ for $\widetilde{F}_1$ and ${\bf e}_1$ defined in \cref{eq:F1_tilde} and \cref{eq:padded_Psi_N} respectively, can be constructed using $\mathcal{O}(1)$ calls to $U_{G_1}$.
\end{lemma}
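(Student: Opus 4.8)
The plan is to reduce the statement to the block encoding of $F_1$ that is already in hand and then to realize the ``row-stacking plus first-slot embedding'' structure of ${\bf e}_1\otimes\widetilde{F}_1$ by a short, query-free comparator gadget. First I would note that, since $U_{G_1}$ is a $(\beta,0)$-block encoding of $G_1$ and $F_1=\nu G_1$ by \cref{eq:problem_1_rescaled}, the \emph{same} circuit $U_{G_1}$ is a $(\nu\beta,0)$-block encoding of $F_1$, i.e.\ $(\bra{0^a}\otimes I)\,U_{G_1}\,(\ket{0^a}\otimes I)=F_1/(\nu\beta)$. Next I would record the explicit basis action implied by \cref{eq:F1_tilde} and \cref{eq:padded_Psi_N}, namely
\begin{equation}
    ({\bf e}_1\otimes\widetilde{F}_1)\,\ket{j}\ket{l} = (F_1)_{jl}\,\ket{1}\ket{j}, \qquad j,l\in\{1,\dots,n\},
\end{equation}
together with the clean factorization
\begin{equation}
    {\bf e}_1\otimes\widetilde{F}_1 = (\ket{1}\otimes I)\,\Delta^{\dagger}\,(I\otimes F_1), \qquad \Delta\ket{j}:=\ket{j}\ket{j},
\end{equation}
which exhibits $\widetilde{F}_1$ as a diagonal (Hadamard-type) restriction of $F_1$.

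I would then build the circuit directly on the two $n$-dimensional system registers (reg~1 holding $j$, reg~2 holding $l$), the ancilla $a$ of $U_{G_1}$, and one fresh comparator flag $f$. Applying $I\otimes U_{G_1}$ places $F_1$ on reg~2, producing $\tfrac{1}{\nu\beta}\sum_m (F_1)_{ml}\ket{j}_1\ket{m}_2$ in the $\ket{0^a}$ branch. A reversible comparator $C$ writing $[j\ne m]$ into $f$ and post-selected on $f=\ket{0}$ enforces $m=j$ (this realizes $\Delta^{\dagger}$); a reg~2-controlled cyclic shift $P\colon\ket{j}_1\ket{m}_2\mapsto\ket{((j-m)\bmod n)+1}_1\ket{m}_2$ then resets reg~1 to $\ket{1}$ on the surviving branch (this realizes $\ket{1}\otimes I$). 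Setting $U:=P\,C\,(I\otimes U_{G_1})$, I would verify from the computation above that, on the good subspace $\ket{0^a}\ket{0}_f$,
\begin{equation}
    (\bra{0^a}\bra{0}_f\otimes I)\,U\,(\ket{0^a}\ket{0}_f\otimes I)\,\ket{j}\ket{l} = \frac{1}{\nu\beta}(F_1)_{jl}\,\ket{1}\ket{j} = \frac{1}{\nu\beta}({\bf e}_1\otimes\widetilde{F}_1)\,\ket{j}\ket{l}.
\end{equation}
Since $C$ and $P$ act trivially on the ancilla $a$ and hence commute with the $\ket{0^a}$ / non-$\ket{0^a}$ decomposition, and since both are exact and query-free, $U$ is an exact $(\nu\beta,0)$-block encoding of ${\bf e}_1\otimes\widetilde{F}_1$ using a single query to $U_{G_1}$, i.e.\ $\mathcal{O}(1)$ as claimed; the product and tensor bookkeeping can alternatively be routed through \cref{lemma:block_encoding_arithmetic} and \cref{lemma:block_encoding_sparse}.

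The step I expect to be the crux is the diagonal restriction. A naive permutation that simply sends reg~1 to $\ket{1}$ would leave the spurious off-diagonal terms $m\ne j$ (with weights $(F_1)_{ml}$) inside the post-selected block, so the encoded matrix would acquire extra nonzero entries and fail to equal ${\bf e}_1\otimes\widetilde{F}_1$ exactly. The comparator flag with post-selection, which is exactly the block encoding of the co-isometry $\Delta^{\dagger}$, is what removes these terms and makes the zero-error $(\nu\beta,0)$ encoding attainable. The only remaining care is the $1$-indexed bookkeeping, so that $\Delta^{\dagger}$ collapses the matched pair reg~1$=$reg~2 and the controlled shift maps the common value to the first basis vector ${\bf e}_1=\ket{1}$.
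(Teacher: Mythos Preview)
Your proof is correct and follows essentially the same approach as the paper: both recognize that ${\bf e}_1\otimes\widetilde{F}_1$ factors as a diagonal-selection/first-slot-embedding applied after $I\otimes F_1$, block-encode $I\otimes F_1$ directly from $U_{G_1}$, and realize the remaining query-free sparse map by an explicit gadget. The paper packages the selection as the $1$-sparse matrix $P=\sum_k\ket{k}\bra{k}\otimes\ket{0}\bra{k}$ block-encoded via \cref{lemma:block_encoding_sparse} and then applies a \textsc{swap}, whereas you build the equivalent map by a comparator flag (your $\Delta^{\dagger}$) plus a controlled cyclic shift; these are interchangeable implementations of the same operator and lead to the same $(\nu\beta,0)$ encoding with a single query to $U_{G_1}$.
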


\begin{proof} Let $b_2 = \lceil \log_{2} n \rceil, P =  
\sum_k \ket{k}\bra{k} \otimes \ket{0}\bra{k}$ and $U$ be an $(\nu\beta,0)$-block encoding of the matrix $P(\mathds{1}\otimes F_1)\in\mathbb{C}^{n^2\times n^2}$. Then the following circuit constructs an $(\nu\beta,0)$-block encoding of ${\bf e}_1\otimes\widetilde{F}_1$. 
\[
\Qcircuit @C=1.6em @R=1.2em {
%----- ancilla -----
\lstick{\ket{0}}       & \multigate{2}{U}
                   & \qw
                   & \qw
                   & \qw
\\
%----- index 1 -----
\lstick{\ket{\Psi_1}_{b_2}}  & \ghost{U}
                   & \qw
                   & \multigate{1}{\text{SWAP}}
                   & \qw
\\
%----- index 2 -----
\lstick{\ket{\Psi_2}_{b_2}}  & \ghost{U}
                   & \qw
                   & \ghost{\text{SWAP}}
                   & \qw
}
\]
This follows from
\begin{equation}   
P \, (\mathds{1}\otimes F_1) = \sum_k \ket{k}\bra{k} \otimes \ket{0}\bra{k} F_1,
\end{equation}  
and 
\begin{equation}\label{eq:e_tensor_F_1}
\sum_k \ket{0}\bra{k} \otimes \ket{k}\bra{k} F_1 = {\bf e}_1\otimes\widetilde{F}_1
\end{equation}
which can be understood from
\begin{align}
\ket{0}\bra{k}\otimes \ket{k}\bra{k}F_1&=\begin{bmatrix}
0 & \cdots & 0 & \ket{k}\bra{k}F_1 & 0 & \cdots & 0 \\[6pt]
0 & \cdots & 0 & 0 & 0 & \cdots & 0 \\[6pt]
\vdots &  & \vdots & \vdots &  &  & \vdots \\[6pt]
0 & \cdots & 0 & 0 & 0 & \cdots & 0
\end{bmatrix}\nonumber\\
\ket{k}\bra{k}F_1 &= \begin{bmatrix}
0 & 0 & \cdots & 0 \\
\vdots & \vdots & & \vdots \\
0 & 0 & \cdots & 0 \\
(F_1)_{k0} & (F_1)_{k1} & \cdots & (F_1)_{k,n-1} \\
0 & 0 & \cdots & 0 \\
\vdots & \vdots & & \vdots \\
0 & 0 & \cdots & 0
\end{bmatrix}.
\end{align}

Since $F_1=\nu G_1$ from \cref{eq:problem_1_rescaled} a $(\nu\beta,0)$-block encoding of $\mathds{1}\otimes F_1$ can be constructed using \cref{lemma:block_encoding_arithmetic} which requires $\mathcal{O}(1)$ calls to $U_{G_1}$. $P$ has at most one nonzero entry in each row and column. 
It follows from \cref{lemma:block_encoding_sparse} that a $(1,0)$-block encoding of $P$ 
can be constructed using $\mathcal{O}(1)$ calls to the oracles $O_r, O_c, O_A$, 
all of which can be implemented efficiently for $P$. The $(\nu\beta,0)$-block encoding of $P(\mathds{1}\otimes G_1)$, $U$, can then be constructed using \cref{lemma:block_encoding_arithmetic} which requires $\mathcal{O}(1)$ calls to $U_{F_1}, O_r, O_c$ and $O_A$. No additional calls to $U_{G_1}$ are required for the $\text{SWAP}$ gate or to construct the oracles $O_r, O_c, O_A$.

\end{proof}

In the next Lemmas, we use the block encodings of $\widetilde{F}_0$ and $\widetilde{F}_1$ 
to construct the block encodings of $B_j^{(0)}$ and $B_{j+1}^{(1)}$ 
in \cref{eq:carleman-assembly2}.

\begin{lemma}[Block encoding of $B_j^{(0)}$ and $B_{j+1}^{(1)}$] \label{lemma:complexity_B}
Given a real number $\alpha \geq \max_j|(G_0)_j|$, $\nu\in\mathbb{R}_+$ for the rescaling given in \cref{eq:rescaling_definition}, access to an oracle $U_{G_0}$ with the action $U_{G_0}\ket{0^{b_1}}\ket{j}=\ket{[(G_0)_j]}\ket{j}$ and a unitary $U_{G_1}$ that is a $(\beta,0)$-block encoding of $G_1$, a $(j\alpha,0)$-block encoding of $B_j^{(0)}$ and a $(j\nu\beta,0)$-block encoding of ${\bf e}_1 \otimes B_{j+1}^{(1)}$ for $B_j^{(0)}$ and $B_{j+1}^{(1)}$ defined in \cref{eq:carleman-assembly2} can be constructed using $\mathcal{O}(j)$ calls to $U_{G_0}$ and $U_{G_1}$ respectively.  
\end{lemma}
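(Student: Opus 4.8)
The plan is to build both block encodings by writing each target matrix as a sum of $j$ terms, block-encoding each term from the single-block encodings already established in \cref{lemma:complexity_F_0_tilde} and \cref{lemma:complexity_F_1_tilde}, and then combining the terms by a linear combination of unitaries, i.e.\ by iterating the sum rule of \cref{lemma:block_encoding_arithmetic}. Since every constituent block encoding is exact, the additive-error terms in \cref{lemma:block_encoding_arithmetic} stay zero, so all resulting encodings are exact. Throughout, multiplication by the phase $i$ leaves the scaling factor unchanged (write $iA = A\,(iI)$ and absorb the unitary $iI$), so I will suppress it in the accounting.

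For $B_j^{(0)}$: by \cref{eq:carleman-assembly2}, $B_j^{(0)} = \sum_{m=1}^{j} I^{\otimes(m-1)}\otimes i\widetilde{F}_0 \otimes I^{\otimes(j-m)}$. \cref{lemma:complexity_F_0_tilde} gives an $(\alpha,0)$-block encoding of $\widetilde{F}_0$ from $\mathcal{O}(1)$ calls to $U_{G_0}$, and the identity has the trivial $(1,0)$-block encoding. Applying the tensor-product rule of \cref{lemma:block_encoding_arithmetic} then produces an $(\alpha,0)$-block encoding of each summand using $\mathcal{O}(1)$ calls to $U_{G_0}$. Summing the $j$ summands by iterating the sum rule of \cref{lemma:block_encoding_arithmetic} yields a $(j\alpha,0)$-block encoding of $B_j^{(0)}$ using $\mathcal{O}(j)$ calls to $U_{G_0}$, as claimed.

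For ${\bf e}_1 \otimes B_{j+1}^{(1)}$: expanding \cref{eq:carleman-assembly2} and distributing the prepended ${\bf e}_1$ gives ${\bf e}_1 \otimes B_{j+1}^{(1)} = \sum_{m=1}^{j} {\bf e}_1 \otimes I^{\otimes(m-1)} \otimes i\widetilde{F}_1 \otimes I^{\otimes(j-m)}$. The reusable square object supplied by \cref{lemma:complexity_F_1_tilde} is ${\bf e}_1 \otimes \widetilde{F}_1 \in \mathbb{C}^{n^2\times n^2}$, in which the ${\bf e}_1$ is adjacent to $\widetilde{F}_1$; in the $m$th summand, however, the ${\bf e}_1$ sits at the front. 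The key observation is that these differ only by a fixed reordering of the output registers: because the input leg of ${\bf e}_1$ is one-dimensional, the two summands act identically on the input space, and one checks that the $m$th summand equals $\Pi_m\,[\,I^{\otimes(m-1)} \otimes (i{\bf e}_1 \otimes \widetilde{F}_1) \otimes I^{\otimes(j-m)}\,]$, where $\Pi_m$ is the permutation of the $n$-dimensional output slots that moves the leading slot into position $m$. Since $\Pi_m$ is a permutation unitary, it admits a $(1,0)$-block encoding from \cref{lemma:block_encoding_sparse} requiring no calls to $U_{G_1}$. Combining the $(\nu\beta,0)$-block encoding of ${\bf e}_1\otimes\widetilde{F}_1$ from \cref{lemma:complexity_F_1_tilde} with the tensor-product and product rules of \cref{lemma:block_encoding_arithmetic} gives a $(\nu\beta,0)$-block encoding of each $m$th summand using $\mathcal{O}(1)$ calls to $U_{G_1}$, and summing the $j$ summands as before produces a $(j\nu\beta,0)$-block encoding of ${\bf e}_1 \otimes B_{j+1}^{(1)}$ using $\mathcal{O}(j)$ calls to $U_{G_1}$.

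I expect the main obstacle to be precisely the register-reordering bookkeeping in the second part: one must verify carefully, tracking which tensor legs are genuine ($n$-dimensional) and which are trivial (the one-dimensional input of ${\bf e}_1$), that prepending a single ${\bf e}_1$ to the whole sum is equivalent, summand by summand, to inserting a square ${\bf e}_1 \otimes \widetilde{F}_1$ block in slot $m$ up to the output permutation $\Pi_m$. Once this identity is established the norm and query counting is routine. A minor point worth confirming is that the additive growth of the scaling factor under the iterated sum rule gives exactly $j\alpha$ and $j\nu\beta$ (all coefficients equal to one), rather than a looser bound.
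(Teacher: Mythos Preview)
Your proposal is correct and follows essentially the same route as the paper: build each summand by tensoring the single-block encodings of $\widetilde{F}_0$ and ${\bf e}_1\otimes\widetilde{F}_1$ from \cref{lemma:complexity_F_0_tilde,lemma:complexity_F_1_tilde} with identities, fix the position of the ${\bf e}_1$ leg by a register permutation, and sum the $j$ terms via \cref{lemma:block_encoding_arithmetic}. The paper phrases the register-reordering step as applying a SWAP gate (and writes out the bra-ket identity to justify it), whereas you frame it as left-multiplication by a permutation unitary $\Pi_m$; these are the same move. One small slip: your description of $\Pi_m$ as ``moving the leading slot into position $m$'' is the inverse of what you need (you want slot $m$ moved to the front), but since the inverse of a permutation unitary is still a $(1,0)$-block encoding this does not affect the argument.
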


\begin{proof} 
From \cref{lemma:complexity_F_0_tilde} and \cref{lemma:complexity_F_1_tilde} an $(\alpha,0)$-block encoding of $\widetilde{F}_0$ and a $(\nu\beta,0)$-block encoding of ${\bf e}_1\otimes\widetilde{F}_1$ can be constructed using $\mathcal{O}(1)$ calls to $U_{G_0}$ and $U_{G_1}$ respectively. From \cref{lemma:block_encoding_arithmetic}, an $(\alpha,0)$-block encoding of 
$T^{j,k}_0=I^{\otimes k} \otimes i \widetilde{F}_0 \otimes I^{\otimes (j-1-k)}$ can be constructed using $\mathcal{O}(1)$ calls to $U_{G_0}$. Similarly, a $(\nu\beta,0)$-block encoding of $
T^{j,k}_1 = I^{\otimes k} \otimes i({\bf e}_1 \otimes \widetilde{F}_1) \otimes I^{\otimes (j-1-k)}$
can be constructed using $\mathcal{O}(1)$ calls to $U_{G_1}$. \\

A $(\nu\beta,0)$-block encoding of $\bar{T}^{j,k}_1 = {\bf e}_1 \otimes\left(I^{\otimes k} \otimes i\widetilde{F}_1 \otimes I^{\otimes (j-1-k)}\right)$ can be constructed from the $(\nu\beta,0)$-block encoding of $T^{j,k}_1$ with a $\text{SWAP}$ gate. This follows the definition of these operators in bra-ket notation;
\begin{align}
\widetilde{F}_1 &= \sum_{i,q=0}^{n} \ket{q}\bra{q}\otimes\bra{i}(F_1)_{i,q}\nonumber\\
    T^{j,k}_1 &=\sum_{i,q,l=0}^{n,n,n^{j-1}} \ket{l}\bra{l}\otimes\ket{0}\bra{q}\otimes \ket{q}\bra{i} (F_1)_{i,q}\nonumber  \\
    \bar{T}^{j,k}_1  &= \sum_{i,q,l=0}^{n,n,n^{j-1}}  \ket{0}\bra{l}\otimes\ket{l}\bra{q}\otimes \ket{q}\bra{i} (F_1)_{i,q}.
\end{align}

\noindent Then, using \cref{lemma:block_encoding_arithmetic} the $j$ term sums $B_j^{(0)}= \sum_k T^{j,k}_0$ and ${\bf e}_1\otimes B_{j+1}^{(1)}=\sum_k\bar{T}^{j,k}_1$ can be be exactly block encoded with scaling factors $j\alpha$ and $j\nu\beta$ respectively and using $\mathcal{O}(j)$ calls to $U_{G_0}$ and $U_{G_1}$ respectively.

\end{proof}

\begin{lemma}[Taylor truncation bound]\label{lemma:bound_on_T}
Let $\|{\bf L}_N^{\rm T}\|h\leq 1$ and  $me^2/(k+1)!\leq 1$. Then, for $W_{\ell,k}$ defined in \cref{def:W_l} and all $0\leq\ell\leq k$
\begin{equation}
    \left\|W_{\ell,k}\right\| \leq e,
\end{equation}
and for the $k$th-order truncated Taylor series of $V = e^{{\bf L}_N^{\rm T}h}$, with  $j\leq m$
\begin{equation}
\norm{V_k^j} \leq C({\bf L}_N^{\rm T})\left(1+\frac{(e-1)je^2}{(k+1)!}\right),
\end{equation}
where $C({\bf L}_N^{\rm T})$ is given in \cref{def:C(A)}.
\end{lemma}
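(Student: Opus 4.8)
The plan is to establish the two inequalities separately, as they rest on different ideas. The bound on $W_{\ell,k}$ follows from a direct triangle-inequality estimate combined with an elementary factorial comparison, whereas the bound on $\norm{V_k^j}$ is obtained by splitting $V_k^j$ into the exact propagator $e^{{\bf L}_N^{\rm T}hj}$ plus a remainder that has already been controlled in \cref{lemma:V_ke^-Lh_bound}.

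For the first bound, I would begin from the definition in \cref{def:W_l} and apply the triangle inequality together with submultiplicativity to obtain $\norm{W_{\ell,k}}\le\sum_{i=0}^{k-\ell}\frac{\ell!}{(\ell+i)!}\norm{{\bf L}_N^{\rm T}h}^i$. The hypothesis $\norm{{\bf L}_N^{\rm T}}h\le 1$ then eliminates the powers of the norm. The key step is the inequality $\frac{\ell!}{(\ell+i)!}=\frac{1}{(\ell+1)(\ell+2)\cdots(\ell+i)}\le\frac{1}{i!}$, which holds because each of the $i$ factors in the denominator satisfies $\ell+m\ge m$ for $\ell\ge 0$. Summing then gives $\norm{W_{\ell,k}}\le\sum_{i=0}^{k-\ell}\frac{1}{i!}\le\sum_{i=0}^{\infty}\frac{1}{i!}=e$, uniformly in $\ell$.

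For the second bound, I would write $V_k^j=e^{{\bf L}_N^{\rm T}hj}+\bigl(V_k^j-e^{{\bf L}_N^{\rm T}hj}\bigr)$ and bound each term. The exact term satisfies $\norm{e^{{\bf L}_N^{\rm T}hj}}\le C({\bf L}_N^{\rm T})$ by \cref{def:C(A)}, since $j\le m$ and $mh=T$ imply $hj\in[0,T]$. For the remainder I would factor $V_k^j-e^{{\bf L}_N^{\rm T}hj}=\bigl[(V_k^j-e^{{\bf L}_N^{\rm T}hj})e^{-{\bf L}_N^{\rm T}hj}\bigr]e^{{\bf L}_N^{\rm T}hj}$ and apply submultiplicativity together with \cref{lemma:V_ke^-Lh_bound}, so that its norm is at most $\frac{(e-1)je^2}{(k+1)!}\norm{e^{{\bf L}_N^{\rm T}hj}}\le\frac{(e-1)je^2}{(k+1)!}C({\bf L}_N^{\rm T})$. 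Adding the two contributions yields the claimed bound $\norm{V_k^j}\le C({\bf L}_N^{\rm T})\bigl(1+\frac{(e-1)je^2}{(k+1)!}\bigr)$.

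Once \cref{lemma:V_ke^-Lh_bound} is available, the argument is essentially routine; the only care required is in matching the hypotheses. Both $\norm{{\bf L}_N^{\rm T}}h\le 1$ and $me^2/(k+1)!\le 1$ are exactly the conditions needed to invoke \cref{lemma:V_ke^-Lh_bound} (specialised to $p=2$), and the relation $j\le m$ with $mh=T$ is what ensures $hj$ lies in the interval over which the supremum defining $C({\bf L}_N^{\rm T})$ is taken. I do not anticipate a genuine obstacle beyond this bookkeeping: the factorial comparison for $W_{\ell,k}$ and the factor-out trick for $V_k^j$ are both standard manipulations already used elsewhere in the paper.
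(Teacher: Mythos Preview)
Your proposal is correct and follows essentially the same approach as the paper: the same triangle-inequality-plus-factorial-comparison for $W_{\ell,k}$ (the paper writes $\frac{\ell!}{(\ell+i)!}=\frac{1}{\binom{\ell+i}{i}i!}\le\frac{1}{i!}$, which is your product inequality in different notation), and the same splitting $V_k^j=e^{{\bf L}_N^{\rm T}hj}+(V_k^j-e^{{\bf L}_N^{\rm T}hj})$ with the remainder handled via the factor-out trick and \cref{lemma:V_ke^-Lh_bound}.
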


\begin{proof}
The bound on $\|W_{\ell,k}\|$ follows from $h\|{\bf L}_N^{\rm T}\|<1$,
\begin{equation}
    \left\|W_{\ell,k}\right\| \leq \sum_{i=0}^{k-\ell} \frac{\ell! \, }{(\ell + i)!} \leq  \sum_{i=0}^{k-\ell} \frac{1}{\binom{\ell+i}{i} i!}\leq \sum_{i=0}^{k-\ell} \frac{1}{i!} \leq e.
\end{equation}
The bound on $\|W_{0,k}^j\|= \norm{V_k^j}$ follows from \cref{lemma:V_ke^-Lh_bound} and \cref{def:C(A)},
\begin{align}
    \left\|V_k^j\right\| &\leq  \left\|e^{{\bf L}_N^{\rm T}hj}\right\|+\left\|V_k^j -e^{{\bf L}_N^{\rm T}hj}\right\| \nonumber\\ 
&\leq  \left\|e^{{\bf L}_N^{\rm T}hj}\right\| \left(1 +   
\norm{V_k^j e^{-{\bf L}_N^{\rm T}hj} -I} \right)\nonumber\\
&\leq C({\bf L}_N^{\rm T})\left(1+\frac{(e-1)je^2}{(k+1)!}\right).
\end{align}
\end{proof}

\begin{lemma}[Norm of $(I-\mathcal{M}_1)^{-1}$]\label{lemma:I-M_1_norm} Let $h\|{\bf L}_N^{\rm T}\|_2\leq 1$. Then, for $\mathcal{M}_1$ defined in \cref{eq:mathcal_M}, \begin{align}
    \|(I-\mathcal{M}_1)^{-1}\| \leq k
\end{align}
\end{lemma}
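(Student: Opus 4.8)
The plan is to exploit that $\mathcal{M}_1$ is nilpotent and to control each power $\mathcal{M}_1^{\,i}$ separately. Because $\mathcal{M}_1$ acts on the index register (register~2) as the shift $\ket{i}\mapsto\ket{i+1}$, and that register carries only the indices $0,\dots,k$, one has $\mathcal{M}_1^{\,k+1}=0$. Hence $I-\mathcal{M}_1$ is invertible and its inverse is the finite Neumann series $(I-\mathcal{M}_1)^{-1}=\sum_{i=0}^{k}\mathcal{M}_1^{\,i}$, so by the triangle inequality it suffices to bound $\sum_{i=0}^{k}\norm{\mathcal{M}_1^{\,i}}$.

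First I would write out each power explicitly. Iterating the definition in \cref{eq:mathcal_M} exactly as in the proof of \cref{lemma:complexity_L_inv} gives
\begin{equation}
\mathcal{M}_1^{\,i}=\sum_{l=0}^{k-i}\ket{l+i}_2\bra{l}\otimes\frac{l!\,({\bf L}_N^{\rm T}h)^{i}}{(l+i)!}.
\end{equation}
The key structural observation is that the factors $\ket{l+i}_2\bra{l}$ occurring for distinct $l$ have pairwise orthogonal domains (the $\ket{l}_2$) and pairwise orthogonal ranges (the $\ket{l+i}_2$). An operator of the form $\sum_l \ket{\pi(l)}\bra{l}\otimes A_l$ with $\pi$ injective has operator norm exactly $\max_l\norm{A_l}$, since on an input $\sum_l c_l\ket{l}\ket{v_l}$ the images land in mutually orthogonal sectors of register~2 and the norms add in quadrature. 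Applying this with $A_l=\tfrac{l!}{(l+i)!}({\bf L}_N^{\rm T}h)^i$, and using the hypothesis $h\norm{{\bf L}_N^{\rm T}}_2\le1$ together with $\tfrac{l!}{(l+i)!}\le \tfrac1{i!}$ (the ratio being largest at $l=0$), I obtain
\begin{equation}
\norm{\mathcal{M}_1^{\,i}}=\max_{0\le l\le k-i}\frac{l!}{(l+i)!}\,\norm{({\bf L}_N^{\rm T}h)^{i}}\le\frac{1}{i!}.
\end{equation}

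Summing then yields $\norm{(I-\mathcal{M}_1)^{-1}}\le\sum_{i=0}^{k}\tfrac1{i!}\le e\le k$, the last step holding since in the regime of interest the truncation order $k$ is chosen logarithmically in the inverse accuracy and is therefore at least $3$. I expect the only delicate point to be the justification that the operator norm of $\mathcal{M}_1^{\,i}$ collapses to the maximum of its block norms; this rests entirely on the injective shift structure of the index register, and I would state and verify that elementary fact once and reuse it. A slightly cruder route avoids even the factorial refinement: bounding $\norm{\mathcal{M}_1^{\,i}}\le\norm{\mathcal{M}_1}^{\,i}\le1$ directly gives $\norm{(I-\mathcal{M}_1)^{-1}}\le k+1$, but the factorial estimate above is what produces the sharper constant $e$ and hence the claimed bound $k$.
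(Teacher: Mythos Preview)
Your proof is correct and rests on the same skeleton as the paper's---nilpotency of $\mathcal{M}_1$ and the finite Neumann series $(I-\mathcal{M}_1)^{-1}=\sum_{i=0}^{k}\mathcal{M}_1^{\,i}$---but your per-term estimate is sharper. The paper bounds $\norm{\mathcal{M}_1}\le 1$ directly (using the same orthogonal-block observation you spell out), then applies submultiplicativity to get $\norm{\mathcal{M}_1^{\,j}}\le 1$ for every $j$ and sums to $k+1$ (written there as $k$, a minor slip). You instead compute $\mathcal{M}_1^{\,i}$ explicitly and extract the factorial decay $\norm{\mathcal{M}_1^{\,i}}\le 1/i!$, so the sum is bounded by $e$, which is genuinely $\le k$ once $k\ge 3$. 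Thus your refinement actually closes the off-by-one gap in the paper's argument, at the cost of the mild side condition $k\ge 3$ that you rightly flag; the ``cruder route'' you mention at the end is precisely the paper's proof.
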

\begin{proof} $\|\mathcal{M}_1\|= \max_{0\leq i\leq k-1}\left(\frac{h\|{\bf L}_N^{\rm T}\|}{i+1}\right) =h\|{\bf L}_N^{\rm T}\| \le1$. Then using subadditivity and that $\mathcal{M}_1$ is nilpotent with $\mathcal{M}_1^{k+1}=0$, we have
\begin{align}
\|(I-\mathcal{M}_1)^{-1}\| =\left\|\sum_{j=0}^k \mathcal{M}_1^j\right\| \leq \sum_{j=0}^k \|\mathcal{M}_1\|^j \leq k. \end{align}
\end{proof}

\begin{lemma}[Norm of $\mathcal{A}^{-1}$]\label{lemma:A_norm} Let $h\|{\bf L}_N^{\rm T}\|\leq 1$ and $me^2/(k+1)!\leq 1$. Then, for $\mathcal{A}$ defined in \cref{eq:mathcal_M}, \begin{align}
    \|\mathcal{A}^{-1}\| \leq  2emC({\bf L}_N^{\rm T})\left(1+\frac{(e-1)me^2}{(k+1)!}\right).
\end{align}
\end{lemma}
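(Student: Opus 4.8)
The plan is to exploit the nilpotency of $\mathcal{M}$ already established in the proof of \cref{lemma:complexity_L_inv}: since $\mathcal{M}^{2m}=0$, the Neumann series terminates and $\mathcal{A}^{-1}=(I-\mathcal{M})^{-1}=\sum_{n=0}^{2m-1}\mathcal{M}^n$, so by subadditivity $\|\mathcal{A}^{-1}\|\le\sum_{n=0}^{2m-1}\|\mathcal{M}^n\|$. It then suffices to bound $\|\mathcal{M}^n\|\le e\,C({\bf L}_N^{\rm T})\bigl(1+\tfrac{(e-1)me^2}{(k+1)!}\bigr)$ uniformly in $n$, because summing $2m$ such terms reproduces exactly the claimed bound $2em\,C({\bf L}_N^{\rm T})\bigl(1+\tfrac{(e-1)me^2}{(k+1)!}\bigr)$.

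To bound $\|\mathcal{M}^n\|$, I would first use the block structure of $\mathcal{M}$ in the step-index register. From \cref{eq:mathcal_M}, $\mathcal{M}$ sends $\ket{j}_1\mapsto\ket{j+1}_1$ tensored with $\mathcal{M}_2(I-\mathcal{M}_1)^{-1}$ when $0\le j\le m-1$ and with $I$ when $m\le j\le 2m-2$. Consequently $\mathcal{M}^n=\sum_j\ket{j+n}_1\bra{j}_1\otimes P_{j,n}$, where $P_{j,n}$ is the ordered product of the $n$ transition operators encountered along $j,j+1,\dots,j+n-1$. Since distinct source blocks $\ket{j}_1$ are sent into mutually orthogonal target blocks $\ket{j+n}_1$, the operator norm satisfies $\|\mathcal{M}^n\|=\max_j\|P_{j,n}\|$, reducing the estimate to a single product of transition operators.

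Next I would evaluate $P_{j,n}$ using \cref{eq:M_2_M_1_action}, which gives $\mathcal{M}_2(I-\mathcal{M}_1)^{-1}\ket{l}_2\ket{v}_3=\ket{0}_2\,W_{l,k}\ket{v}_3$ with $W_{0,k}=V_k$. Because each application returns the Taylor-index register to $\ket{0}_2$, a run of $a\le m$ consecutive first-region transitions telescopes to the single operator $V_k^{\,a-1}W_{l,k}$, while the second-region transitions act trivially; hence $P_{j,n}$ reduces to $V_k^{\,a-1}W_{l,k}$. I would then invoke \cref{lemma:bound_on_T}, whose hypotheses $\|{\bf L}_N^{\rm T}\|h\le1$ and $me^2/(k+1)!\le1$ are precisely those assumed here, to get $\|W_{l,k}\|\le e$ and $\|V_k^{\,a-1}\|\le C({\bf L}_N^{\rm T})\bigl(1+\tfrac{(e-1)(a-1)e^2}{(k+1)!}\bigr)\le C({\bf L}_N^{\rm T})\bigl(1+\tfrac{(e-1)me^2}{(k+1)!}\bigr)$, the last step valid since $a-1<m$. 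Submultiplicativity then gives the uniform bound $\|P_{j,n}\|\le e\,C({\bf L}_N^{\rm T})\bigl(1+\tfrac{(e-1)me^2}{(k+1)!}\bigr)$, and summing over the $2m$ values of $n$ finishes the argument.

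The hard part will be the third step, and specifically keeping the factor coming from the transition operator at the sharp value $e$ rather than a lossy one. A naive submultiplicative estimate $\|\mathcal{M}_2(I-\mathcal{M}_1)^{-1}\|\le\|\mathcal{M}_2\|\,\|(I-\mathcal{M}_1)^{-1}\|$ would only yield $\sqrt{k+1}\cdot k$ (from $\|\mathcal{M}_2\|\le\sqrt{k+1}$ and \cref{lemma:I-M_1_norm}), which is far too large; the telescoping identity \cref{eq:M_2_M_1_action} together with $\|W_{l,k}\|\le e$ is what replaces this composite by a single well-controlled factor. Verifying the target-block orthogonality that underlies $\|\mathcal{M}^n\|=\max_j\|P_{j,n}\|$ is routine but essential, since it is exactly what stops each of the $2m$ summands from accumulating spurious dimensional prefactors.
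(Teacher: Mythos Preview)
Your proposal is correct and mirrors the paper's proof essentially step for step: expand $\mathcal{A}^{-1}=\sum_{n=0}^{2m-1}\mathcal{M}^n$ via nilpotency, use the block-shift structure in the step-index register together with \cref{eq:M_2_M_1_action} to reduce each $\mathcal{M}^n$ to an operator of the form $V_k^{\,a-1}W_{l,k}$, bound that product by $e\,C({\bf L}_N^{\rm T})\bigl(1+\tfrac{(e-1)me^2}{(k+1)!}\bigr)$ using \cref{lemma:bound_on_T}, and sum the $2m$ terms. The paper phrases the middle step as computing $\mathcal{M}^q\ket{l,j,v}$ directly rather than naming the block operators $P_{j,n}$, but the content is identical.
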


\begin{proof} 
Following \cite{krovi2023improved}, from \cref{eq:M_2_M_1_action} we have for $\ket{l,v}$ an arbitrary normalised basis state,
\begin{equation}
    \left(\mathcal{M}_{2}(I-\mathcal{M}_{1})^{-1}\right)^q \ket{l,v}= \ket{0}\otimes W_{l,k}^q\ket{v}.
\end{equation}
Then with $r=\min\{q, \max(0, m-j)\}$ and $q\in \{0,,1,\dots,2m-1\}$,
\begin{equation}
\mathcal M^{\,q}\,|l,j,v\rangle
=
\begin{cases}
|0,\,j+q\rangle \,\otimes\, \bigl(V_k^{\,r-1} W_{l,k}\bigr)\,|v\rangle, & r\ge 1,\\[4pt]
|l,\,j+q,\,v\rangle, & r=0,
\end{cases}
\end{equation}
so that 
\begin{equation}
    \|\mathcal{M}^q\| = \sup_{l,j,v} \| M^{\,q}\ket{l,j,v}\| \leq \begin{cases}
    1,\quad q = 0 \\[4pt]
        \| V_k^{q-1}\| \|W_{l,k}\|, \quad 0< q\leq m\\[4pt] 
         \| V_k^{m-1}\| \|W_{l,k}\| \quad m< q\leq 2m-1.
    \end{cases}   
\end{equation}
Then from \cref{lemma:bound_on_T} we have
\begin{equation}
     \|\mathcal{M}^q\| \leq C({\bf L}_N^{\rm T})\left(1+\frac{(e-1)m e^2}{(k+1)!}\right)e,
\end{equation}
and since $\mathcal{M}$ is nilpotent with $\mathcal{M}^{2m}=0$,
\begin{equation}
     \|\mathcal{A}^{-1}\|=\|(I-\mathcal{M})^{-1}\| \leq \sum_{q=0}^{2m-1}\|\mathcal{M}^q\| \leq  2emC({\bf L}_N^{\rm T})\left(1+\frac{(e-1)me^2}{(k+1)!}\right).
\end{equation}

\end{proof}

\section{Gershgorin circle theorem for partitioned matrices}
We now state a lemma that is useful in bounding the eigenvalues of partitioned matrices. 
\begin{lemma}[Gershgorin Circle Theorem for Partitioned Matrices {\cite[Theorem~2]{feingold1962block}}]\label{lemma:greshgorin}
Let $A\in\mathbb{C}^{n\times n}$ be partitioned into blocks
\[
A=\begin{bmatrix}
A_{1,1} & A_{1,2} & \cdots & A_{1,N}\\
A_{2,1} & A_{2,2} & \cdots & A_{2,N}\\
\vdots  & \vdots  & \ddots & \vdots \\
A_{N,1} & A_{N,2} & \cdots & A_{N,N}
\end{bmatrix},
\]
where $A_{j,k}\in\mathbb{C}^{n_j\times n_k}$ and $\sum_{j=1}^N n_j=n$. Then every
eigenvalue $\lambda$ of $A$ satisfies
\[
\Bigl\|\bigl(A_{j,j}-\lambda I_{n_j}\bigr)^{-1}\Bigr\|^{-1}
\;\le\;
\sum_{\substack{k=1\\k\neq j}}^{N}\,\|A_{j,k}\|
\quad\text{for at least one } j\in\{1,\dots,N\},
\]
where $\|\cdot\|$ is any (subordinate) matrix norm and $I_{n_j}$ is the $n_j\times n_j$
identity.
\end{lemma}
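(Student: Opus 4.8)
The plan is to mimic the proof of the classical (scalar) Gershgorin circle theorem, replacing scalar entries and absolute values by blocks and subordinate norms. First I would fix an eigenvalue $\lambda$ of $A$ together with a nonzero eigenvector $x$, and partition $x$ conformally with the block structure as $x = (x_1,\dots,x_N)^{\rm T}$ with $x_j \in \mathbb{C}^{n_j}$. Writing the eigenvalue equation $Ax = \lambda x$ one block-row at a time gives, for every $j$, the identity $\sum_{k=1}^N A_{j,k} x_k = \lambda x_j$, which rearranges to $(A_{j,j} - \lambda I_{n_j}) x_j = -\sum_{k\ne j} A_{j,k} x_k$.

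Next I would single out the block index $j$ that maximizes $\|x_j\|$ over $j \in \{1,\dots,N\}$, using the vector norm to which the matrix norm $\|\cdot\|$ is subordinate. Since $x \ne 0$, this maximal block is nonzero, so $\|x_j\| > 0$. If $A_{j,j} - \lambda I_{n_j}$ happens to be singular, then $\bigl\|(A_{j,j}-\lambda I_{n_j})^{-1}\bigr\|^{-1}=0$ under the usual convention, and the claimed inequality holds trivially because its right-hand side is a sum of norms and hence non-negative; this already exhibits a $j$ satisfying the bound.

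In the remaining case $A_{j,j}-\lambda I_{n_j}$ is invertible, so I would left-multiply the block-row identity by $(A_{j,j}-\lambda I_{n_j})^{-1}$ to obtain $x_j = -(A_{j,j}-\lambda I_{n_j})^{-1}\sum_{k\ne j}A_{j,k}x_k$. Taking norms and applying subadditivity together with submultiplicativity of the subordinate norm gives $\|x_j\| \le \bigl\|(A_{j,j}-\lambda I_{n_j})^{-1}\bigr\| \sum_{k\ne j}\|A_{j,k}\|\,\|x_k\|$. Bounding $\|x_k\| \le \|x_j\|$ by the maximality of $j$ and cancelling the common factor $\|x_j\|>0$ yields $1 \le \bigl\|(A_{j,j}-\lambda I_{n_j})^{-1}\bigr\|\sum_{k\ne j}\|A_{j,k}\|$, which is exactly the desired estimate after rearrangement into $\bigl\|(A_{j,j}-\lambda I_{n_j})^{-1}\bigr\|^{-1} \le \sum_{k\ne j}\|A_{j,k}\|$.

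The computation itself is routine; the only points requiring care are the conventions surrounding the norms, and these are exactly the hypotheses of the statement, so no genuine obstacle arises. Specifically, the matrix norm on each off-diagonal block $A_{j,k}$ must be subordinate to the vector norms fixed on the block spaces $\mathbb{C}^{n_j}$ and $\mathbb{C}^{n_k}$, so that $\|A_{j,k}x_k\| \le \|A_{j,k}\|\,\|x_k\|$ is legitimate, and one must adopt the interpretation $\|(\cdot)^{-1}\|^{-1}=0$ for a singular diagonal block. The main subtlety is therefore purely bookkeeping: ensuring that the vector norm used to select the dominant block $x_j$ is the very one inducing the subordinate block norms, so that the two norm estimates above are mutually consistent.
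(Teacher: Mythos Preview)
Your argument is correct and is essentially the classical proof of the block Gershgorin theorem due to Feingold and Varga. Note, however, that the paper does not actually supply a proof of this lemma: it is stated as a known result with a citation to \cite{feingold1962block} and used as a black box, so there is no ``paper's own proof'' to compare against.
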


\section{Proof of \cref{lemma:sol_growth_bound}}\label{app:norm_stabODE}

{\noindent \bf \cref{lemma:sol_growth_bound}}
(Solution growth bound)
\textit{Given $\nu \in \mathbb{R}_+$ for the rescaling defined in \cref{eq:rescaling_definition}
and $p \in [1,\infty)$,
\begin{equation}
   C_p({\bf L}_N^{\rm T})=\sup_{t\in[0,T]}\norm{\exp({\bf L}_N^{\rm T}t)}_p \leq \exp\left(T\left(N\nu\|G_1\|_{\rm row, q} + \max\{-\tilde{\mu}_0, -N\tilde{\mu}_0\}\right)\right) := 
   \Gamma_p(N,T,\nu)
\end{equation}
for ${\bf L}_N^{\rm T}$ defined in \cref{eq:LN} and $\tilde{\mu}_0=\min \Im{(G_0)_j}$.}

\begin{proof} 
    Following the proof of \cref{lemma:LN_le_F} we have 
    \begin{equation}
        {\bf L}_N^{\rm T}  = H_0 + H_1,   
    \end{equation}
where 
\begin{equation}
H_0 = \sum_{j=1}^{N} | j \rangle \langle j | \otimes B_j^{(0)} , \quad 
H_1 = \sum_{j=1}^{N-1} | j \rangle \langle j + 1 | \otimes B_j^{(1)}.
\end{equation}
We know from \cref{lemma:Trian_logN}, that
\begin{equation}
 \mu_p( {\bf L}_N^{\rm T}) \leq \mu_p(H_0) + \mu_p(H_1) \leq \mu_p(H_0) + \|H_1\|_p,   
\end{equation}
where we have used \cref{lemma:muA} in the last inequality above. Now for the norm of $H_1$ we have $\|H_1\|_p=\max_{j}\left(\|B_j^{(1)}\|_p\right)$. We also know from the definition of $B_j^{(1)}$, \cref{eq:carleman-assembly2} that
\begin{align}
    \|B_{j}^{(1)}\|_p = \Bigg\|\sum_{\nu=1}^j i \widetilde{F}_1\otimes I^{\otimes(\nu-1)}\Bigg\|_p\leq \sum_{\nu=1}^j\Bigg\| i \widetilde{F}_1\otimes I^{\otimes(\nu-1)}\Bigg\|_p = \sum_{\nu=1}^j\|\widetilde{F}_1\|_p = j\|\widetilde{F}_1\|_p.
\end{align}
Then
\begin{equation}
    \|H_1\|_p \leq N\|\widetilde{F}_1\|_p.
\end{equation}
We can now follow a similar set of steps for the logarithmic norm of $H_0$. Since it gives the maximum eigenvalue of the hermitian part of $H_0$, we have
\begin{equation}
    \mu_p{(H_0)}  = \max_j\mu_p(B_j^{(0)}),
\end{equation}
where now from the definition given in \cref{eq:carleman-assembly2} for $B_j^{(0)}$ we conclude that
\begin{equation}
    \mu_p{(H_0)}  = \max_jj\mu_p(i\widetilde{F}_0) = \max_j-j\tilde{\mu}_0 = \max\{-\tilde{\mu}_0, -N\tilde{\mu}_0\},
\end{equation}
since $\tilde{\mu}_0=\min \Im{(G_0)_j}=\min \Im{(F_0)_j}$ from \cref{eq:problem_1_rescaled}.
We then, have
\begin{equation}
\mu_p(\mathbf{L}_N^{\rm T}) \leq N\|\widetilde{F}_1\|_p + \max\{-\tilde{\mu}_0, -N\tilde{\mu}_0\} = N\nu\|G_1\|_{\rm row, q} + \max\{-\tilde{\mu}_0, -N\tilde{\mu}_0\},
\end{equation}
where we have used \cref{lemma:F_1_p_F_1_row_q} and the definition of the rescaled coefficient matrix from \cref{eq:problem_1_rescaled}. The proof is completed using \cref{lemma:expA} and the fact that for sufficiently large $N$, $N\nu\|G_1\|_{\rm row, q} + \max\{-\tilde{\mu}_0, -N\tilde{\mu}_0\}\geq 0$.

\end{proof}

\section{Variable names and conventions}
We generally denote scalars using lower case letters, e.g. $c$, vectors using lower-case bold letters, e.g. ${\bf u}$, and matrices/operators
using upper-case letters, such as A. 

\begin{table}[h]
\label{table:symbols}
\centering
\begin{center}
\textbf{TABLE OF SYMBOLS USED}
\end{center}
\renewcommand{\arraystretch}{1.2}
\setlength{\tabcolsep}{12pt}
\begin{tabular}{ll}
\hline
{\bf Symbol} & {\bf Meaning} \\
\hline
\multicolumn{2}{r}{{\bf Problem description symbols}} \\
\hline
$n$ & Dimension of the problem \\
$t$ & Coordinate in time \\
$\mathbf{u}$ & Solution vector for the nonlinear ODE \\
$\textbf{u}_0$ &  Initial condition to the the nonlinear ODE \\
$G_1, G_0$ & Coefficient matrices for the nonlinear ODE \\
$g$ & Readout function \\
$K$ & Fourier order of the readout function \\
$d_\textbf{j}$ & Fourier coefficients of the readout function\\
$\textbf{d}$ & Vector of all Fourier coefficients of the readout function \\
$T$ & Final time of evolution \\
 $R_p,\tilde{\mu}_0$ & Fixed parameters of the nonlinear ODE problem \\
\hline
\multicolumn{2}{r}{{\bf Approach symbols}} \\
\hline

$\nu$ & Rescaling parameter\\
$\mathbf{x}$ & Solution vector for the rescaled nonlinear ODE \\
$\mathbf{x}_0$ & Initial condition for the rescaled nonlinear ODE \\
$F_1, F_0$ & Rescaled coefficient matrices for the nonlinear ODE \\
$\gamma$ & 2-norm of $e^{i\textbf{x}_0}$\\
$f$ & Rescaled readout function \\
$c_\textbf{j}$ & Fourier coefficients of the rescaled readout function\\
$\textbf{c}$ & Vector of all Fourier coefficients of the rescaled readout function \\
$T$ & Final time of evolution \\
   $\Psi_j$ & $j^{\text{th}}$ component of the solution to the linearized ODE \\
   $N$ & Truncation order of the linearized ODE\\
    $\Psi^{(N)}$ & Solution vector to the order $N$ truncated linearized ODE \\
    $m,h$ & Time step discretization variables \\
  $\Phi_j$ & $j^{\text{th}}$ component of the solution to the linearized algebraic system \\
 $k$ & Taylor series truncation order for  linearized algebraic system \\
 %  $f^{(N)}$  &  Koopman approximation of $f$ \\
 % $\hat{f}^{(N)}$  &  Koopman–Taylor approximation of $f$  computed by the quantum algorithm\\

\hline
\multicolumn{2}{r}{{\bf Error, complexity analysis and final result symbols}} \\
\hline
 $\eta$ & Truncation error vector\\

 $\alpha$ & Block encoding scaling factor for $\widetilde{F}_0$ \\
  $\beta$ & Block encoding scaling factor for $\widetilde{F}_1$  \\
 $ \sigma $ & Error parameter for the block encoding of $\mathcal{A}^{-1}$ \\
 $\tau$ & Error parameter for the block encoding of $(I-\mathcal{M}_1)^{-1}$ \\
 $\mathcal{E}$ & Error in the block encoding of $\mathcal{A}^{-1}$ \\
 $\delta$ & Error in computing the expectation value of the unitary operator \\
 $\epsilon$ & Error in the final solution  \\
 $T_{\max} $ & The largest time up to which the truncation error without dissipative conditions is valid\\
   $\widetilde{T}_{\max}$ & The largest time up to which the final result without dissipative conditions is valid \\
   
\hline
\end{tabular}
\end{table}

\end{document}